\title{On the Parameterized Complexity of \textsc{Maximum Degree Contraction} Problem} 
\author{Saket Saurabh}{The Institute Of Mathematical Sciences, HBNI, Chennai, India \\ University of Bergen, Bergen, Norway}{saket@imsc.res.in}{}{
This project has received funding from the European Research Council
\begin{minipage}{0.6\textwidth}
(ERC) under the European Union's Horizon $2020$ research and innovation programme (grant agreement No $819416$), and Swarnajayanti Fellowship (No DST/SJF/MSA01/2017-18).
\end{minipage}
\begin{minipage}{0.3\textwidth}
        \includegraphics[scale=1]{./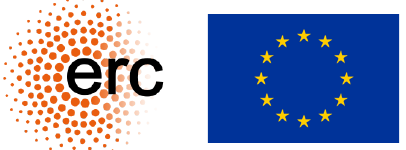}
\end{minipage}
}
\author{Prafullkumar Tale}{CISPA Helmholtz Center for Information Security, Saarbr$\ddot{\text{u}}$cken, Germany}{prafullkumar.tale@cispa.saarland}{}{This research is a part of a project that has received funding from the European Research Council (ERC) under the European Union's Horizon $2020$ research and innovation programme under grant agreement SYSTEMATICGRAPH (No. $725978$).}
\authorrunning{S. Saurabh and P. Tale}
\keywords{Graph Contraction Problems, \FPT\  Algorithm, Lower Bound, \ETH, No Polynomial Kernel} 
\newcommand{\calA}{\mathcal{A}}
\newcommand{\calB}{\mathcal{B}}
\newcommand{\calH}{{\mathcal H}}
\newcommand{\calO}{\ensuremath{{\mathcal O}}}
\newcommand{\OO}{\mathcal{O}}
\newcommand{\calU}{\mathcal{U}}
\newcommand{\calW}{\mathcal{W}}
\newcommand{\ETH}{\textsf{ETH}}
\newcommand{\CONP}{\textsf{coNP}}
\newcommand{\Hard}{\textsf{Hard}}
\newcommand{\yes}{\textsc{Yes}}
\newcommand{\no}{\textsc{No}}
\newcommand{\NO}{\textsc{No}}
\newtheorem{observation}{Observation}[section]
\newtheorem{reduction rule}{Reduction Rule}[section]
\newtheorem{branching rule}{Branching Rule}[section]
\newtheorem{marking-scheme}{Marking Scheme}[section]
\newcommand{\defparproblem}[4]{
  \vspace{1mm}
\noindent\fbox{
  \begin{minipage}{0.96\textwidth}
  \begin{tabular*}{\textwidth}{@{\extracolsep{\fill}}lr} #1  & {\bf{Parameter:}} #3
\\ \end{tabular*}
  {\bf{Input:}} #2  \\
  {\bf{Question:}} #4
  \end{minipage}
  }
  \vspace{1mm}
}
\begin{document}

\maketitle

\begin{abstract}
In the  \textsc{Maximum Degree Contraction} problem, input is a graph $G$ on $n$ vertices, and integers $k, d$, and the objective is to check whether $G$ can be transformed into a graph of  maximum degree at most $d$, using at most $k$ edge contractions. 
A simple brute-force algorithm that checks all possible sets of edges for a solution runs in time $n^{\mathcal{O}(k)}$.  As our first result, we prove that this algorithm is asymptotically optimal, upto constants in the exponents, under Exponential Time Hypothesis (\ETH). 

Belmonte,  Golovach,  van't Hof, and Paulusma  studied the problem in the realm of Parameterized Complexity and proved, among other things, that  it admits an \FPT\ algorithm running in time 
$(d + k)^{2k} \cdot n^{\mathcal{O}(1)} = 2^{\mathcal{O}(k \log (k+d) )} \cdot n^{\mathcal{O}(1)}$,  and remains \NP-hard for every constant $d \ge 2$ (Acta Informatica $(2014)$).
We present a different \FPT\ algorithm that runs in time $2^{\mathcal{O}(dk)} \cdot n^{\mathcal{O}(1)}$. In particular, our algorithm runs in time $2^{\mathcal{O}(k)} \cdot n^{\mathcal{O}(1)}$, for every fixed $d$. 
In the same article, the authors asked whether the problem admits a polynomial kernel, when parameterized by $k + d$.
We answer this question in the negative and prove that it does not admit a polynomial compression unless $\NP \subseteq \coNP/poly$.
\end{abstract}
\newpage
\setcounter{page}{1}
\section{Introduction}

For any graph class $\calH$, the $\calH$-\textsc{Modification} problem takes as input a graph $G$ and an integer $k$, and asks whether one can make at most $k$ modifications in $G$ such that the resulting graph is in $\calH$. 
These types of modification problems are one of the central problems in graph theory and have received a considerable attention in algorithm design.
With appropriate choice of $\calH$ and allowed modification operations, $\calH$-\textsc{Modification} can encapsulate well studied problems like \textsc{Vertex Cover}, \textsc{Chordal Completion}, \textsc{Cluster Editing}, \textsc{Hadwinger Number}, etc.
Some natural and well-studied graph modification operations are vertex deletion, edge deletion, edge addition, and edge contraction. 
The focus of the vast majority of papers on graph modification problems has been to the first three operations. 
Consider an example of $\calH_{\le d}$-\textsc{Modification} problem where $\calH_{\le d}$ is the collection of all graphs that has maximum degree at most $d$.
If allowed modification operation is vertex deletion then we know the problem as \textsc{Bounded Degree Deletion (BDD)} and if it is edge contraction then as \textsc{Maximum Degree Contraction (MDC)}.
The complexity of \textsc{BDD} and several of its variants has been extensively studied \cite{ balasundaram2010approximation, betzler2012bounded, bodlaender2001reduction, chen2010linear, dessmark1993maximum, fellows2011generalization, ganian2018structural, komusiewicz2009isolation, nishimura2005fast} whereas, to the best of our knowledge, only \cite{Belmonte:2014} addressed \textsc{MDC}.
In this article, we enhance our understanding of the second problem and answer an open question stated in \cite{Belmonte:2014}.

The {\em contraction} of edge $uv$ in simple graph $G$ deletes vertices $u$ and $v$ from $G$, and replaces them by a new vertex, which is made adjacent to vertices that were adjacent to either $u$ or $v$.
For a set of edges $F$ in $E(G)$, we denote the graph obtained from $G$ by contracting all edges in $F$ by $G/F$.
In the $\calH$-\textsc{Contraction} problem, an input is a graph $G$ and an integer $k$, and the aim is to decide whether there is a set $F$ of at most $k$ edges in $G$ such that $G/F$ is in $\calH$.
Early papers by Watanabe et al.~\cite{watanabe81,watanabe1983np} and Asano and Hirata~\cite{asano1983edge} showed that $\calH$-\textsc{Contraction} is \NP-\Hard\ for simple graph classes like trees, paths, stars, etc.
Brouwer proved that it is \NP-\Hard\ even to decide whether a graph can be contracted to a path of length four~\cite{brouwer1987contractibility}.
Note that this problem admits a simple polynomial time algorithm if we consider any other modification operation.
This has been a recurring theme in graph modification problems.
For the same target graph class, edge contraction problem tends to more difficult than their counterparts where modification operation is vertex/edge addition/deletion.
This difficulty is evident even in the realm of the Parameterized Complexity and Exact Exponential Algorithms.

In Parameterized Complexity, $\calH$-\textsc{Contraction} problems are studied with the number of edges allowed to contract, $k$, as parameter.
Heggernes et al.~\cite{heggernes2014contracting} proved that if $\calH$ is the set of acyclic graphs then $\calH$-\textsc{Contraction} is \FPT\ but does not admit a polynomial kernel unless $\NP \subseteq \CONP/poly$.
The vertex deletion version of the problem, known as \textsc{Feedback Vertex Set}, admits a polynomial kernel.
Series of papers studied the parameterized complexity for various graph classes like generalization and restrictions of trees~\cite{agarwal2019parameterized, agrawal2017paths},
cactus~\cite{krithika2018fpt},
bipartite graphs
~\cite{guillemot2013faster,heggernes2013obtaining}, planar graphs~\cite{golovach2013obtaining},
grids~\cite{saurabh2020parameterized},
cliques~\cite{cai2013contracting}, 
split graphs~\cite{agrawal2019split}, 
chordal graphs~\cite{lokshtanov2013hardness}, 
bi-cliques~\cite{martin2015computational},
degree constrained graph classes~\cite{Belmonte:2014,golovach2013increasing}, etc.
Krithika et al.~\cite{krithika2016lossy} and
Gunda et al.~\cite{gunda2020parameterized} studied $\calH$-\textsc{Contraction} problems from the lenses of \FPT\ approximation and lossy kernelization.
Agarwal et al.~\cite{agrawal2020path} broke the $2^n$-barrier for \textsc{Path Contraction} whereas Fomin et al.~\cite{fomin2020computation} showed that brute-force algorithms for \textsc{Hadwinger Number} problem and various other $\calH$-\textsc{Contraction} problem are optimal under \ETH.

Belmonte et al.~\cite{Belmonte:2014} studied the parameterized complexity of $\calH$-\textsc{Contraction} for three different classes $\calH$: the class of graphs with maximum degree at most $d$, the class of $d$-regular graphs, and the class of $d$-degenerate graphs.
They classified the parameterized complexity of all three problems with respect to the
parameters $k$, $d$, and $d + k$.
The first problem, also known as \textsc{MDC}, is defined as follows.

\defparproblem{\textsc{Maximum Degree Contraction} }{Graph $G$, integers $k, d$}{$k + d$}{Does there exist a subset $F$ of $E(G)$ of size at most $k$ such that every vertex in $G/F$ has degree at most $d$?}

The authors proved that \textsc{MDC} is \FPT\ when parameterized by $k + d$, $\W[2]$-\Hard\ when parameterized by $k$ (even when restricted to split graphs), and para-\NP-\Hard\ when parameterized by $d$.
Note that the problem is trivially solvable in polynomial time when $d \le 1$ and \NP-\Hard\ for every constant $d \ge 2$.

Consider brute-force algorithm for \textsc{MDC} that given an instance $(G, k, d)$, where graph $G$ has $n$ vertices, enumerates all subsets of edges of size at most $k$ in $G$ and for each subset contracts all edges in it to check whether the resulting graph has degree at most $d$.
This algorithm runs in time $n^{\calO(k)}$.
Our first results states that this algorithm is optimal, up to constants in the exponents, under \ETH.
\begin{theorem}\label{thm:brute-force-algo-lb} Unless \ETH\ fails, there is no algorithm that given any instance $(G, k, d)$ of \textsc{Maximum Degree Contraction} runs in time $n^{o(k)}$ and correctly determines whether it is a \yes\ instance.
\end{theorem}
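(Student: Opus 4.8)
The plan is to establish the lower bound by a polynomial-time reduction from \textsc{Set Cover}, parameterized by the solution size $k$. It is a well-known consequence of \ETH\ that \textsc{Set Cover} (equivalently, \textsc{Dominating Set}) admits no algorithm running in time $f(k)\cdot n^{o(k)}$ for any computable function $f$; since $\textsc{MDC}$ is $\W[2]$-hard under the analogous parameterization, this source is natural. The reduction I aim for must have two crucial properties: it runs in polynomial time and produces an instance of polynomial size, and it maps the parameter $k$ to a contraction budget $k'=\Oh(k)$. Given these, an algorithm solving \textsc{MDC} in time $n^{o(k')}$ would solve \textsc{Set Cover} in time $\mathrm{poly}(n)^{o(k)}=n^{o(k)}$, contradicting \ETH.

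The observation driving the construction is that a single edge contraction can decrease the degree of any fixed vertex by at most one: contracting an edge $xy$ lowers the degree of a vertex $w$ only when both $x$ and $y$ lie in $N(w)$, and then by exactly one. I exploit this by turning each universe element into a high-degree ``demand'' vertex and each set into a cheap gadget that simultaneously lowers the degree of all of its elements. Concretely, given a \textsc{Set Cover} instance with universe $U=\{w_1,\dots,w_m\}$ and family $\calF=\{S_1,\dots,S_t\}$, I build a graph $G$ as follows. For every set $S_j$ I introduce an edge $x_j y_j$, and I make both endpoints $x_j$ and $y_j$ adjacent to $w_i$ whenever $w_i\in S_j$. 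I then fix $d$ to exceed both the largest set size and twice the maximum number of sets containing any single element, and I attach to each demand vertex $w_i$ a bundle of pendant vertices so that $\deg(w_i)$ equals exactly $d+1$; this calibration keeps every vertex $x_j$, $y_j$ and every pendant of degree at most $d$, and keeps $d=\Oh(m+t)$, so $G$ has polynomially many vertices. The contraction budget is set to $k'=k$.

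In the forward direction, if $S_{j_1},\dots,S_{j_k}$ cover $U$, then contracting the $k$ edges $x_{j_1}y_{j_1},\dots,x_{j_k}y_{j_k}$ drops the degree of every demand vertex by at least one, since each $w_i$ is adjacent to both endpoints of at least one contracted edge, while each merged gadget vertex has degree at most $d$ and every untouched vertex is unaffected. Hence $G/F$ has maximum degree at most $d$, witnessing a \yes\ instance of \textsc{MDC}.

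The delicate part, and the step I expect to be the main obstacle, is the reverse direction: showing that any solution using at most $k$ contractions can be converted into $k$ sets covering $U$. Here I must control the entire range of contractions the solver might use, not only the intended ones. The key invariant is that the total degree excess $\sum_i\bigl(\deg(w_i)-d\bigr)$ starts at exactly $m$, that contracting an edge $x_jy_j$ covers all elements of $S_j$ at unit cost, while any other contraction---of a pendant edge, of an edge incident to a single demand vertex, or of an edge created during the process---lowers the excess by at most one and, via an exchange argument using the degree calibration of the gadgets, can be shown to be no more useful than selecting an appropriate set. Ruling out such ``cheating'' contractions while keeping $k'=\Oh(k)$ is precisely where the contraction semantics (the union of neighbourhoods and the collapse of common neighbours) make the analysis more subtle than for deletion-based problems; once it is settled, the polynomial size of $G$ together with the equality $k'=k$ yields the claimed $n^{o(k)}$ lower bound under \ETH.
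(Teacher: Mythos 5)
Your choice of source problem and the parameter arithmetic are sound: \textsc{Set Cover}/\textsc{Dominating Set} indeed has no $f(k)\cdot N^{o(k)}$ algorithm under \ETH, your graph $G$ has polynomially many vertices, the budget is $k'=k$, and the forward direction works (contracting the matching $x_{j_1}y_{j_1},\dots,x_{j_k}y_{j_k}$ lowers every covered demand vertex to degree $d$, while each merged gadget vertex has degree $|S_j|\le d$). This is a genuinely different route from the paper, which reduces from $(k\times k)$-\textsc{Permutation Independent Set} (Proposition~\ref{prop:ind-set-lower-bound}) and produces instances with $n=\calO(k^2)$.

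The genuine gap sits exactly where you flagged it: the soundness direction is not proven, and the tool you propose for it --- per-contraction accounting of the excess $\sum_i(\deg(w_i)-d)$ --- is false as stated. A solution $F$ to \textsc{MDC} acts through the connected components of $(V(F),F)$ (the big witness sets), not through edges one at a time. Concretely, take $w_{i'}\in S_j\cap S_{j'}$ and put the two edges $(x_j,w_{i'})$ and $(w_{i'},x_{j'})$ into $F$: this merges $x_j$ with $x_{j'}$ and therefore lowers by one the degree of \emph{every} element of $S_j\cap S_{j'}\setminus\{w_{i'}\}$, so two ``cheating'' contractions can lower the excess by far more than two; moreover the intermediate contraction $(x_j,w_{i'})$ alone \emph{raises} degrees, so sequential per-edge accounting breaks down in both directions. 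What actually closes the argument is a charging scheme over components: by Observation~\ref{obs:high-deg-vertex}, every demand vertex $w_i$ must have two vertices of $N[w_i]$ in a single component $C$ of $F$; one then checks that all elements fixed by a component $C$ can be covered by at most $|C|-1$ sets --- take every set with a gadget vertex in $C$, plus one set for each element of $C$ whose only neighbours inside $C$ are pendants, using that pendants are private to their element and that a component consisting solely of gadget vertices is a single edge $x_jy_j$. Summing over components yields a cover of size at most $|F|\le k$. Until this (or an equivalent witness-set argument, which is how the paper handles the analogous step through its Claims~\ref{claim:IS-backword-start}--\ref{claim:IS-backword-end}) is written down, your reduction is only half proven; with it, your route does give the theorem.
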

Belmonte et al.~\cite{Belmonte:2014} presented an \FPT\ algorithm for \textsc{MDC} that runs in time $(d + k)^{2k} \cdot n^{\calO(1)}$.
As for any non-trivial instance $d + k$ is smaller than $n$, we can conclude that there is no algorithm that given any instance $(G, k, d)$ of \textsc{MDC} runs in time $(d + k)^{o(k)} \cdot n^{\calO(1)}$ and correctly determines whether it is a \yes\ instance, unless \ETH\ fails.

We remark that that the lower bound in Theorem~\ref{thm:brute-force-algo-lb} does not hold when $d$ is a fixed constant and not a part of input.
Hence, it is possible that \textsc{MDC} admits an algorithm that runs in time $k^{o(k)} \cdot n^{\calO(1)}$ for a constant value of $d$.
Belmonte et al.~\cite{Belmonte:2014} proved that \textsc{MDC} problem admits linear vertex kernels on connected graphs when $d = 2$.
This linear kernel leads to an \FPT\ algorithm\footnote{The algorithm colors vertices in the reduced instance with two colors and contracts each connected component in the colored subgraphs.} running in time $2^{\calO(k)} \cdot n^{\calO(1)}$ .
This hints that it is possible to design a better \FPT\ algorithm for small values of $d$.
Our second result shows that this is indeed the case.

\begin{theorem}\label{thm:fpt-new} There is an algorithm that given an instance $(G, k, d)$ of \textsc{Maximum Degree Contraction} runs in time $2^{\calO(d k)} \cdot n^{\calO(1)}$ and correctly determines whether it is a \yes\ instance.
\end{theorem}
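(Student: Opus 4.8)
The plan is to work with the witness-structure reformulation of edge contraction. Any solution $F$ with $|F| \le k$ induces a partition $\mathcal{W}$ of $V(G)$ into connected \emph{witness sets}, where $G/F$ is obtained by contracting each set to a single vertex; the sets of size at least two (the \emph{big} witness sets) number at most $k$, their total size is at most $2k$, and each has size at most $k+1$. The requirement is exactly that in $G/\mathcal{W}$ every witness set has degree at most $d$. The first and most robust step is a counting lemma bounding the number of high-degree vertices. Let $B = \{v \in V(G) : \deg_G(v) > d\}$. I claim that in any \yes-instance $|B| = \mathcal{O}(dk)$: a high-degree vertex that is a singleton in $\mathcal{W}$ and has no neighbour inside a big witness set keeps all its neighbours as distinct adjacent singletons, so its degree is unchanged and stays above $d$, a contradiction; hence every vertex of $B$ lies in a big witness set or is adjacent to one. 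Since the big witness sets have total degree at most $kd$ in $G/\mathcal{W}$, they are adjacent to at most $kd$ witness sets, and the union of the big witness sets together with these neighbours contains at most $2k + kd$ vertices. The same accounting shows that each vertex of $B$ has $\deg_G(v) \le k + d(k+1) = \mathcal{O}(dk)$ in a \yes-instance. I would therefore first compute $B$ and immediately answer \no\ if $|B|$, or any relevant degree, exceeds these thresholds.

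Next I would argue that all contraction activity can be localised around $B$. Taking a solution that minimises the number of contracted edges, one shows that every big witness set may be assumed to meet the closed neighbourhood $N[B]$: a big witness set disjoint from $N[B]$ consists of vertices of degree at most $d$ whose neighbours also have degree at most $d$, and splitting a maximal cluster of such sets back into singletons neither creates a high-degree vertex nor violates the degree bound at any neighbour, contradicting minimality. Consequently the entire witness structure is determined on a region $R \subseteq N[B]$ whose size is polynomial in $d$ and $k$, and outside $R$ every vertex is an untouched singleton whose degree is already at most $d$. This reduces \textsc{MDC} to the task of locating the big witness sets inside the bounded region $R$.

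Finally I would recover the big witness sets by a colour-coding scheme followed by the operation ``contract every monochromatic connected component.'' A colouring is \emph{good} if each big witness set is monochromatic and connected in its colour class while being separated, by colour, from every other vertex adjacent to it; for such a colouring, contracting the monochromatic components reproduces exactly $\mathcal{W}$, and one checks in polynomial time whether the result uses at most $k$ contracted edges and has maximum degree at most $d$. The membership of the $\mathcal{O}(dk)$ region vertices in the big witness sets can be guessed with a family of two-colourings of size $2^{\mathcal{O}(dk)} \cdot \log n$ obtained from a universal set, and the pairwise separation of adjacent witness sets can be enforced using that $G/\mathcal{W}$ has maximum degree at most $d$, so that only $\mathcal{O}(k)$ witness-set representatives must be distinguished with a palette of $d+1$ colours, a further factor of $(d+1)^{\mathcal{O}(k)} = 2^{\mathcal{O}(k \log d)}$, which lies within $2^{\mathcal{O}(dk)}$. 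The main obstacle is precisely this last point: naively one would enumerate the partition of the up to $2k$ affected vertices into witness sets, which already costs $2^{\Theta(k \log k)}$ and is therefore too expensive for small $d$. The crux of the argument is to exploit the degree-$d$ structure of the contracted graph to perform the separation with a number of colours independent of $k$, yielding the single-exponential running time $2^{\mathcal{O}(dk)} \cdot n^{\mathcal{O}(1)}$ asserted in Theorem~\ref{thm:fpt-new}.
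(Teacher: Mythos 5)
Your proposal correctly reconstructs the two ideas that drive the paper's algorithm: an $(n, 2k+dk)$-universal family used to guess the at most $2k$ vertices of $V(F)$ while simultaneously excluding the at most $dk$ vertices of $N(V(F))$, and the observation that, since $G/F$ has maximum degree at most $d$, a proper $(d+1)$-colouring of $G/F$ pulls back to a colouring of $V(F)$ whose monochromatic connected components are exactly the big witness sets --- which is precisely how the paper (Lemma~\ref{lemma:colorwise-contraction-correct}) avoids the $2^{\Theta(k \log k)}$ cost of enumerating partitions. However, there is a genuine gap in your final step. A universal set controls membership only on the unknown set $N[V(F)]$: a set $V_r$ in the family satisfying $V_r \cap N[V(F)] = V(F)$ may additionally contain arbitrarily many \emph{spurious} vertices elsewhere, and $G[V_r]$ may have $\Omega(n)$ connected components of size between $2$ and $2k$ that are a priori indistinguishable from the components of $G[V(F)]$. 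Your one-shot scheme --- enumerate $(d+1)$-colourings, then contract every monochromatic component --- founders exactly here: colouring ``only the $\calO(k)$ affected vertices'' presumes you already know which red components they form; enumerating colourings of all of $V_r$ is not \FPT; and contracting monochromatic pieces of spurious components can exhaust the budget $k$ on the very branch that should witness a \yes\ instance. The paper needs real machinery for this point: it introduces \textsc{Labeled-MDC} with the all-or-nothing condition (a solution spans either none or all of each component of $G[V_r]$), and solves it by a recursive branching algorithm that colours-and-contracts only components \emph{forced} by a current vertex $v$ of degree at least $d+1$ --- the component containing $v$ if $v$ is red, or one of at most $2^d$ subsets of the at most $d$ red components meeting $N(v)$ if $v$ is blue (Reduction Rule~\ref{rr:blue-nbr} bounds this number); the budget drop of $|R|/2$ per contraction then yields the $2^{(d+2)k} \cdot (d+1)^{2k} \cdot n^{\calO(1)}$ bound of Lemma~\ref{lemma:fpt-labeled-MDC}. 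Some replacement for this component-selection mechanism is indispensable in your outline.

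Your intermediate localisation lemma is also flawed, although fortunately you do not need it: the paper runs the universal family over all of $V(G)$ and never localises $V(F)$ to $N[B]$. Dissolving a big witness set $W$ disjoint from $N[B]$ back into singletons is not degree-safe, because an adjacent big witness set $W'$ that \emph{does} meet $N[B]$ --- and hence is not dissolved with your ``maximal cluster'' --- previously saw $W$ as a single neighbour and afterwards may be adjacent to several singletons of $W$, pushing the degree of $W'$ in the contracted graph above $d$. Indeed, a solution may require a chain of big witness sets marching away from $B$, each present solely to lower the contracted degree of the previous one, so the claim that every big witness set meets $N[B]$ is not merely unproven but doubtful. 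If you drop the localisation step and supply a mechanism for deciding which red components to contract (recursion driven by high-degree vertices, as in the paper, or an equivalent), the remainder of your argument matches the paper's proof.
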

We note that the reduction used in~\cite{Belmonte:2014} to prove that \textsc{MDC} is \NP-\Hard\ for any constant $d \ge 2$ implies that there is no $2^{o(dk)}$ algorithm for this problem.

Next, we look at the kernelization of \textsc{MDC}.
Belmonte et al.~\cite{Belmonte:2014} left it as an open question to determine whether \textsc{MDC} admits a polynomial kernel when parameterized by $k + d$.
Our last result answers this question in negative.
\begin{theorem}\label{thm:no-poly-kernel-MDC} Unless $\NP \subseteq \coNP/poly$, \textsc{Maximum Degree Contraction}, parameterized by $k + d$, does not admit a polynomial compression.
\end{theorem}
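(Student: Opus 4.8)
The plan is to rule out polynomial compression by exhibiting an \emph{OR-cross-composition} from a suitable \NP-hard source problem into \textsc{MDC} parameterized by $k+d$, and then to invoke the framework of Bodlaender, Jansen and Kratsch, which guarantees that the existence of such a composition precludes a polynomial compression unless $\NP \subseteq \coNPpoly$. Concretely, I would fix the polynomial equivalence relation that places two instances in the same class exactly when they have the same number of vertices $n$ and the same values of $k$ and $d$ (malformed instances, and classes with mismatched parameters, are dealt with trivially). Given $t$ instances $x_1,\dots,x_t$ of the source lying in one class, the task is then to build, in time polynomial in $\sum_i|x_i|$, a single instance $(G^\star,k^\star,d^\star)$ of \textsc{MDC} that is a \yes\ instance if and only if at least one $x_i$ is a \yes\ instance, while keeping $k^\star+d^\star$ bounded by a polynomial in $\max_i|x_i|+\log t$.

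I would realise the OR through a single \emph{global obstruction} rather than through the disjoint union, which would force the budget to grow linearly in $t$ and thus violate the parameter bound. The graph $G^\star$ would consist of three parts. First, for each instance $x_i$ a \emph{dormant gadget} $D_i$ whose every vertex has degree at most $d^\star$; the gadget carries a distinguished \emph{port} vertex and is engineered so that a set of at most $k$ internal edge contractions can turn the port into a harmless (say, pendant) vertex precisely when $x_i$ is a \yes\ instance. Second, a \emph{core} $C$ containing a single vertex $c$ of degree $d^\star+1$ --- the unique place where $G^\star$ violates the degree bound. Third, a \emph{selection tree}: a balanced binary tree of depth $\lceil\log t\rceil$ whose $t$ leaves are identified with the ports of the $D_i$ and whose root is wired into $C$; all tree nodes have constant degree and so stay within $d^\star$. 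I would set $d^\star=d+\calO(1)$ and $k^\star=k+\calO(\log t)$, so that $k^\star+d^\star$ is polynomial in $\max_i|x_i|+\log t$ as required. For completeness, if some $x_i$ is a \yes\ instance, one selects it by contracting the length-$\calO(\log t)$ root-to-leaf path reaching $D_i$ together with the $\le k$ internal edges that neutralise its port; this routes a single degree reduction to $c$, bringing its degree down to $d^\star$, while every other gadget remains untouched and dormant. Hence $(G^\star,k^\star,d^\star)$ is a \yes\ instance, using at most $k^\star$ contractions in total.

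The delicate direction, and the step I expect to be the main obstacle, is soundness: showing that if $(G^\star,k^\star,d^\star)$ is a \yes\ instance then some $x_i$ must be a \yes\ instance. Two things have to be controlled at once. Since the budget $k^\star$ is only $k+\calO(\log t)$ and the selection tree has depth $\calO(\log t)$, any solution can afford to reach into, and do nontrivial work inside, only a single branch of the tree; so I must argue that the core vertex $c$ can be repaired only by genuinely activating one gadget, and that no combination of a few contractions spread across several gadgets or inside the tree can reduce $\deg(c)$ without paying the full internal cost of some gadget. Ruling out such \emph{spurious} cheap repairs of $c$ is where the gadget geometry must be designed most carefully. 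A second, orthogonal hazard is that edge contraction can \emph{raise} degrees by merging neighbourhoods; I must verify that no budget-respecting solution can inadvertently create a new over-full vertex in an otherwise dormant gadget or in the tree, so that the only way to remove the single violation at $c$ is the intended one. Finally, I would establish the biconditional that underlies the gadget, namely ``$D_i$ is internally fixable within budget $k$ $\iff$ $x_i$ is a \yes\ instance'', and then assemble these pieces; together with the parameter bound and the polynomial running time this shows the cross-composition is correct, and Theorem~\ref{thm:no-poly-kernel-MDC} follows.
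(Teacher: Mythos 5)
Your proposal is an architecture, not a proof, and the two load-bearing components are exactly the ones left unconstructed. First, the dormant gadget $D_i$ with the biconditional ``internally fixable within budget $k$ $\iff$ $x_i$ is a \yes\ instance'' is simply postulated (``engineered so that\dots''), and it is not even clear it can exist as described: edge contractions never delete incidences except by merging neighbours, so ``turning the port into a pendant vertex'' means collapsing essentially its whole neighbourhood into one vertex, which tends to create a new over-full merged vertex --- the degree hazard you flag but do not resolve. Second, and more fatally, the soundness of the single-core design fails as stated: if $c$ is the \emph{only} vertex of degree $d^\star+1$, then by the same reasoning as Observation~\ref{obs:high-deg-vertex} a solution merely needs to merge two vertices of $N[c]$, and a \emph{single} contraction of one edge inside $N[c]$ already achieves this. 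Unless you prove that every such local contraction creates a fresh violation whose repair provably cascades through all $\Theta(\log t)$ tree levels and then forces $k$ further contractions inside exactly one gadget --- while also showing the budget $k+\calO(\log t)$ cannot be split across branches or absorbed by cheap merges inside the tree --- the composed instance is a \yes\ instance independently of the $x_i$, and the OR-composition is simply wrong. You name this obstacle yourself; naming it is not overcoming it. Even your completeness direction is incomplete: you never specify the mechanism by which contracting the root-to-leaf path lowers $\deg(c)$ by one, which requires the contracted path to merge two members of $N[c]$, i.e.\ additional wiring you have not described.

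It is also worth noting that the paper takes a route that sidesteps designing a cross-composition altogether: it gives a polynomial parameter transformation from \textsc{RBDS} parameterized by $|B|+\log_2|R|$ (Corollary~\ref{corr:no-poly-kernel-RBDS}), whose lower bound already encapsulates the OR over many instances, setting $k = 2|B|\cdot\log_2|R|$ and $d = 2|B|\cdot(\log_2|R|+k+2)$. The role your selection tree is meant to play --- paying only logarithmically for a selection from an exponentially large universe --- is played there concretely by replacing each $b$'s incidences with a balanced binary tree of height $\log_2|R|$, so that choosing a dominator costs one root-to-leaf path; degree control, which your sketch leaves open, is enforced by pendant-vertex padding and the guard sets $U^1, U^2, X_b$, which yield the five structural claims about any solution. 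If you want to complete your argument, the realistic repair is to adopt this strategy: use the known \textsc{RBDS} lower bound as the source and prove a parameter-preserving reduction with $k+d$ polynomial in $|B|+\log_2|R|$, rather than attempting a from-scratch cross-composition whose gadget-level soundness remains entirely unestablished.
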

It is known that the \textsc{Bounded Degree Deletion} problem admits a kernel with $\calO(d^3 k)$ vertices \cite{fellows2011generalization}.
Hence, $\calH_{\le d}$-\textsc{Modification} is another example for which changing the modification operations from vertex deletion to edge contraction changes the compressibility drastically.

We organize the remaining paper as follows.
In Section~\ref{sec:prelims}, we present some preliminaries and observations regarding \textsc{MDC}.
In Section~\ref{sec:lower-bound}, we give a parameter preserving reduction from $(k \times k)$-\textsc{Permutation Independent Set} to \textsc{MDC} to rule out $n^{o(k)}$ algorithm for the later problem under \ETH.
We present an \FPT\ algorithm using universal sets and branching techniques in Section~\ref{sec:fpt-algo}.
In Section~\ref{sec:no-poly-kernel}, we present a parameter preserving reduction from \textsc{Red Blue Dominating Set} to rule out polynomial compression for \textsc{MDC} problem.
We conclude this article with an open question in Section~\ref{sec:conclusion}.
\section{Preliminaries}
\label{sec:prelims}
For a positive integer $q$, we denote set $\{1, 2, \dots, q\}$ by $[q]$.
\subsection{Graph Theory}
In this article, we consider simple graphs with a finite number of vertices.
For an undirected graph $G$, sets $V(G)$ and $E(G)$ denote its set of vertices and edges, respectively.
Unless otherwise specified, we use $n$ to denote the number of vertices in the input graph $G$.
We denote an edge with two endpoints $u, v$ as $(u, v)$.
Two vertices $u, v$ in $V(G)$ are \emph{ adjacent } to each other if there is an edge $(u, v)$ in $E(G)$. 
The open neighborhood of a vertex $v$, denoted by $N_G(v)$, is the set of vertices adjacent to $v$ and its degree $\deg_G(v)$ is $|N_G(v)|$.
The closed neighborhood of a vertex $v$, denoted by $N_G[v]$, is the set $N(v) \cup \{v\}$.
We omit the subscript in the notation for neighborhood and degree if the graph under consideration is clear.
For a subset $S$ of $V(G)$, we define $N[S] = \bigcup_{v \in S} N[v]$ and $N(S) = N[S] \setminus S$.
For a subset $F$ of edges, a subset of vertices $V(F)$ denotes the collection of endpoints of edges in $F$.
We say a set of edges $F$ \emph{spans} a set of vertices $S$ if $S \subseteq V(F)$.
For a subset $S$ of $V(G)$, we denote the graph obtained by deleting $S$ from $G$ by $G - S$ and the subgraph of $G$ induced on the set $S$ by $G[S]$. 
For two subsets $S_1, S_2$ of $V(G)$, edge set $E(S_1, S_2)$ denotes the edges with one endpoint in $S_1$ and another one in $S_2$. 
We say $S_1, S_2$ are adjacent if $E(S_1, S_2)$ is non empty. 
For an integer $q$, a $q$-coloring of graph $G$ is a function $\phi : V(G) \rightarrow [q]$.
A \emph{proper coloring} of $G$ is a $q$-coloring $\phi$ of $V(G)$ for some integer $q$ such that for any edge $(u, v)$, $\phi(u) \neq \phi(v)$. 
There is a proper coloring of the graph with $\Delta(G) + 1$ many colors which can found in polynomial time.
A set of vertices $S$ is said to be \emph{independent set} if no two vertices in $S$ are adjacent to each other.
A set of edges $F$ is called \emph{matching} if no two edges in $F$ share an endpoint.
A graph is called {\em connected} if there is a path between every pair of distinct vertices.
A subset $S$ of $V(G)$ is said to be a \emph{connected set} if $G[S]$ is connected.
A \emph{spanning tree} of a connected graph is its connected acyclic subgraph, which includes all the vertices of the graph.

\subsection{Graph Contraction}
The {\em contraction} of an edge $uv$ in $G$ deletes vertices $u$ and $v$ from $G$, and adds a new vertex which is adjacent to vertices that were adjacent to either $u$ or $v$.
This process does not introduce self-loops or parallel edges.
The resulting graph is denoted by $G/e$.
For a graph $G$ and edge $e = uv$, we formally define $G/e$ in the following way: $V(G/e) = (V(G) \cup \{w\}) \backslash \{u, v\}$ and $E(G/e) = \{xy \mid x,y \in V(G) \setminus \{u, v\}, xy \in E(G)\} \cup \{wx |\ x \in N_G(u) \cup N_G(v)\}$.
Here, $w$ is a new vertex.
An edge contraction reduces the number of vertices in a graph by exactly one.
Several edges might disappear because of one edge contraction. 
For a subset of edges $F$ in $G$, graph $G/ F$ denotes the graph obtained from $G$ by contracting each connected component in the sub-graph $G’ = (V(F), F)$ to a vertex.

We now formally define a contraction of graph $G$ to another graph $H$.
\begin{definition}[Graph Contraction] \label{def:graph-contractioon} A graph $G$ is said to be \emph{contractible} to graph $H$ if there is a function $\psi: V(G) \rightarrow V(H)$ such that following properties hold.
\begin{enumerate}
\item For any vertex $h$ in $V(H)$, set $W(h) := \{v \in V(G) \mid \psi(v)= h\}$ is not empty and graph $G[W(h)]$ is connected.
\item For any two vertices $h, h’$ in $V(H)$, edge $hh’$ is present in $H$ if and only if $E(W(h), W(h’))$ is not empty.
\end{enumerate}
\end{definition}
We say graph $G$ is contractible to $H$ via mapping $\psi$.
For a vertex $h$ in $H$, set $W(h)$ is called a \emph{witness set} associated with or corresponding to $h$. 
We define the $H$-\emph{witness structure} of $G$, denoted by $\mathcal{W}$, as a collection of all witness sets.
Formally, $\mathcal{W}=\{W(h) \mid h \in V(H)\}$.
A witness structure $\mathcal{W}$ is a partition of vertices in $G$. 
If a \emph{witness set} contains more than one vertex, then we call it \emph{big} witness set, otherwise it is \emph{small} witness set.

If graph $G$ has a $H$-witness structure, then graph $H$ can be obtained from $G$ by a series of edge contractions.
For a fixed $H$-witness structure, let $F$ be the union of spanning trees of all witness sets.
By convention, the spanning tree of a singleton set is the empty set.
To obtain graph $H$ from $G$, it is sufficient to contract edges in $F$.
Hence, $H = G/F$.
For a $G/F$-witness structure $\calW$ of $G$, there is a unique function $\psi : V(G) \rightarrow V(G/F)$ corresponding to it.
We say graph $G$ is \emph{$k$-contractible} to $H$ if the cardinality of $F$ is at most $k$.
In other words, $H$ can be obtained from $G$ by at most $k$ edge contractions.
The following observations are immediate consequences of definitions.
\begin{observation}
\label{obs:witness-structure-property} If graph $G$ is $k$-contractible to graph $H$ via mapping $\psi$ then following statements are true. 
\begin{enumerate} 
\item \label{item:nr-big-witness-set} Any $H$-witness structure of $G$ has at most $k$ big witness sets.
\item \label{item:nr-vertices-big-bag} For a fixed $H$-witness structure, the number of vertices in $G$ which are contained in big witness sets is at most $2k$.
\item \label{item:degree-bound} For a vertex $v$ in $V(G)$, if $|W(\psi(v))| = 1$ then $ \deg_{H}(\psi(v)) \le \deg_{G}(v)$.
\item \label{item:reduced-size} For $U \subseteq V(G)$, define $\psi(U) := \{\psi(u)\ |\ u \in U \}$. Then, $|U| \leq |\psi(U)|+ k$.
\end{enumerate}
\end{observation}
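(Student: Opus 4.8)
The plan is to reduce all four items to a single counting identity about the witness structure. Fix an $H$-witness structure $\mathcal{W} = \{W(h) \mid h \in V(H)\}$ of $G$ and let $F$ be the union of spanning trees of the witness sets, exactly as in the construction preceding the observation. Since each $G[W(h)]$ is connected by Definition~\ref{def:graph-contractioon}, its spanning tree contributes exactly $|W(h)| - 1$ edges (and zero for a singleton), so $|F| = \sum_{h \in V(H)} (|W(h)| - 1)$. Because $\mathcal{W}$ partitions $V(G)$ into exactly $|V(H)|$ parts, this sum telescopes to $|V(G)| - |V(H)|$, a value that does not depend on which witness structure we picked. As $G$ is $k$-contractible to $H$, we have $|F| \le k$, and hence $\sum_{h}(|W(h)| - 1) \le k$ holds for \emph{every} $H$-witness structure. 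This inequality is the workhorse for all four parts.

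For the first item I would observe that every big witness set has $|W(h)| \ge 2$ and therefore contributes at least $1$ to the sum $\sum_h (|W(h)| - 1) \le k$; consequently there are at most $k$ big witness sets. For the second item, let $\mathcal{B}$ be the collection of big witness sets. The number of vertices lying in big witness sets is $\sum_{W \in \mathcal{B}} |W| = \sum_{W \in \mathcal{B}} (|W| - 1) + |\mathcal{B}|$, where the first summand is bounded by $k$ via the workhorse inequality and the second by $k$ via the first item, yielding the bound $2k$.

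For the third item, set $h = \psi(v)$ and assume $W(h) = \{v\}$. I would argue that $\psi$ restricts to a surjection from $N_G(v)$ onto $N_H(h)$: every $u \in N_G(v)$ satisfies $\psi(u) \ne h$ (otherwise $u \in W(h) = \{v\}$, which is impossible since $u \ne v$), so $\psi(u) \in N_H(h)$ by the second condition of Definition~\ref{def:graph-contractioon}; conversely, any neighbour $h'$ of $h$ in $H$ arises from a nonempty $E(W(h), W(h'))$, that is, from an edge joining $v$ to some $u \in W(h')$ with $\psi(u) = h'$. Surjectivity then gives $\deg_H(h) = |N_H(h)| \le |N_G(v)| = \deg_G(v)$. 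For the fourth item I would partition $U$ according to $\psi$: since $\psi$ is constant on each witness set, $U \cap \psi^{-1}(h) \subseteq W(h)$, so $|U| = \sum_{h \in \psi(U)} |U \cap W(h)| \le \sum_{h \in \psi(U)} |W(h)| = |\psi(U)| + \sum_{h \in \psi(U)} (|W(h)| - 1) \le |\psi(U)| + k$, again by the workhorse inequality.

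None of the four parts presents a genuine obstacle, as each is a short consequence of the counting identity; the only points demanding care are (i) noting that $|F|$ is the same for every witness structure, so that the bound in the first item holds for \emph{any} such structure rather than merely some specific one, and (ii) verifying in the third item that no neighbour of $v$ is mapped back to $h$, which is precisely where the hypothesis $|W(h)| = 1$ is used. I would state both of these explicitly.
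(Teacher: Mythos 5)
Your proposal is correct, and for the most part it shadows the paper's proof: items (1)--(3) are argued there exactly as you do (big sets contain an edge of $F$; $F$ spans the vertices of big sets; the surjection $N_G(v)\to N_H(\psi(v))$ using $|W(\psi(v))|=1$ to rule out $\psi(u)=\psi(v)$). The genuine divergence is item (4). The paper proves it by contradiction: it fixes one representative per value of $\psi$ on $U$, collects the $>k$ remaining vertices into a set $U_0$, and injectively assigns to each vertex of $U_0$ the first edge of the spanning-tree path joining it to its representative, contradicting $|F|\le k$. You instead derive it directly from the identity $|F|=\sum_{h}\bigl(|W(h)|-1\bigr)=|V(G)|-|V(H)|\le k$, writing $|U|=\sum_{h\in\psi(U)}|U\cap W(h)|\le |\psi(U)|+\sum_{h\in\psi(U)}\bigl(|W(h)|-1\bigr)\le|\psi(U)|+k$. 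Your route buys three things: it avoids the contradiction and the somewhat delicate injectivity claim about path-edges (where the paper's write-up is in any case garbled, conflating $\phi$ and $\psi$); it unifies all four items under one counting inequality, since (1) and (2) also fall out of $\sum_h(|W(h)|-1)\le k$; and your remark that this sum equals $|V(G)|-|V(H)|$ independently of the chosen structure cleanly justifies the word ``any'' in item (1), a point the paper passes over by silently fixing one witness structure. The paper's edge-assignment argument is not wasted effort in general---it is the kind of reasoning that survives when $F$ is an arbitrary solution set rather than a union of spanning trees---but for this observation, where $F$ is constructed as spanning trees of the witness sets, your identity-based proof is the more economical one.
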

\begin{proof}
Let $\calW$ be the $H$-witness structure of $G$ and $F$ be the union of the spanning trees of all witness sets.
As $G$ is $k$-contractible to $H$, we have $|F| \le k$.

\noindent (\ref{item:nr-big-witness-set}) As any big witness set contains at least one edge in $F$, the number of big witness set is at most $k$.

\noindent (\ref{item:nr-vertices-big-bag}) As $F$ spans all vertices in big witness set, the number of vertices in big witness set is at most $2k$.

\noindent (\ref{item:degree-bound})
Let $h_i$ be a vertex in $N_H(\psi(v))$.
As $(\psi(v), h_i) \in E(H)$, set $E(W(\psi(v)), W(h_i))$ is a non-empty subset of $E(G)$.
As $|W(\psi(v))| = 1$, this implies $E(\{v\}, W(h_i))$ is a non empty.
As $h_i$ is an arbitrary neighbor of $\psi(v)$, we can conclude that in graph $G$, $v$ is adjacent with at least as many vertices as $\deg_H(\psi(v))$.
Hence, $\deg_H(\psi(v)) \le \deg_G(v)$.

\noindent (\ref{item:reduced-size}) Assume that $|U| > |\psi(U)| + k$.
Fix an arbitrary order on vertices in $U$.
We define a function $\phi: U \rightarrow \psi(U) \cup \{\bot\}$ as follows: $\phi(u_i) = \phi(u_i)$ if $\phi(u_i) \neq \phi(u_j)$ for $j < i$ otherwise $\phi(u_i) = \bot$.
For a vertex $\psi(u_i)$ in $\psi(U)$, the function $\phi$ selects one vertex amongst the set $\{u |\ \psi(u) = \psi(u_i)\}$. 
Define $U_0 = \{ u \in U |\ \phi(u) = \bot \}$.
By our assumption, $|U_0| > k$. 

Consider an arbitrary vertex $u_i$ in $U_0$. 
By the construction, there is an index $j \in [|U|]$ such that $u_j \in U$, $\psi(u_j) = \psi(u_i)$ and $j < i$.
As $\psi(u_i) = \psi(u_j)$, both $u_i, u_j$ are in some big witness set in $\calW$.
As $F$ is the union of edges of spanning trees of witness sets in $\calW$, there is a unique path from $u_i$ to $u_j$ that comprises only edges in $F$.
Consider the edge in this path incident to $u_i$.
We assign vertex $u_i$ to this edge in $F$.
As $u_i$ is an arbitrary vertex in $U_0$, we can assign an edge in $F$ to every vertex in $U_0$.
Note that we are considering the first edge in the unique path from some vertex in $U_0$ to some vertex outside $U_0$.
Hence, no two vertices in $U_0$ can be assigned to same edge in $F$.
This contradicts the fact that $|F| \le k$.
Hence, our assumption is wrong and $|U| \le |\psi(U)| + k$.
\end{proof}

\subsection{\textsc{Maximum Degree Contraction}}
In this subsection, we prove some observations and a lemma related to \textsc{MDC}.
We say a set of edges $F$ is a \emph{solution} to instance $(G, k, d)$ if the number of edges in $F$ is at most $k$ and the maximum degree of graph $G/F$ is at most $d$.
The number of edges that we are allowed to contract, $k$, is also called \emph{solution size}. 
We start with the following simple observation that states that contracting an edge in a solution does not produce a \no\ instance.
\begin{observation}\label{obs:reduced-inst-yes}
If $(G, k, d)$ is a \yes\ instance of \textsc{MDC} and $F \subseteq E(G)$ is a solution to $(G, k, d)$, then for any edge $(u, v)$ in $F$, instance $(G/\{(u,v)\}, k - 1, d)$ is a \yes\ instance of \textsc{MDC}.
\end{observation}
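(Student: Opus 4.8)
The plan is to avoid manipulating edge sets directly and instead argue entirely through the witness-structure machinery of Observation~\ref{obs:witness-structure-property}. Let $H = G/F$; since $F$ is a solution, $H$ has maximum degree at most $d$ and $G$ is $k$-contractible to $H$. Let $\calW$, with associated mapping $\psi$, be the corresponding $H$-witness structure, whose big witness sets are precisely the nontrivial connected components of the subgraph $(V(F), F)$. The first step is the key structural observation: because $(u,v) \in F$, the vertices $u$ and $v$ lie in a common connected component of $(V(F), F)$, and hence in a single witness set $W(h)$, where $h = \psi(u) = \psi(v)$.

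Next I would build an explicit $H$-witness structure of $G' := G/\{(u,v)\}$ out of $\calW$. Writing $w$ for the vertex of $G'$ that replaces $\{u,v\}$, I set $W'(h) := (W(h) \setminus \{u,v\}) \cup \{w\}$ and $W'(h') := W(h')$ for every $h' \neq h$, and claim that $\calW' := \{W'(h') \mid h' \in V(H)\}$ is an $H$-witness structure of $G'$. Verifying this reduces to the two conditions of Definition~\ref{def:graph-contractioon}: each $G'[W'(h')]$ is connected, which holds since contracting one edge inside a connected induced subgraph leaves it connected while the other sets are untouched; and the adjacency condition, namely $h'h'' \in E(H)$ if and only if $E(W'(h'), W'(h'')) \neq \emptyset$, which holds because contracting an edge lying strictly inside the single set $W(h)$ neither creates nor destroys any edge between two \emph{distinct} witness sets.

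Finally, I would count. The union of the spanning trees of $\calW'$ contains exactly one edge fewer than that of $\calW$, since $|W'(h)| = |W(h)| - 1$ while all remaining sets are unchanged; as the latter has at most $|F| \le k$ edges, the former has at most $k - 1$ edges and contracting them yields $H$. Thus $G'$ is $(k-1)$-contractible to a graph of maximum degree at most $d$, witnessing that $(G/\{(u,v)\}, k-1, d)$ is a \yes\ instance. The only point needing genuine care is the \emph{confluence} of edge contraction---that contracting $(u,v)$ first and the remaining edges afterwards still produces $H$---together with the bookkeeping when two edges of $F$ incident respectively to $u$ and to $v$ collapse into a single edge of $G'$. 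Phrasing everything through witness structures makes both transparent, so I expect no real obstacle; working directly with the edge image $F \setminus \{(u,v)\}$ in $G'$ would instead force an explicit treatment of these identified edges, which the witness-structure viewpoint sidesteps.
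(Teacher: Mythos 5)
Your proof is correct, but be aware that the paper offers no proof of Observation~\ref{obs:reduced-inst-yes} at all: it is presented as a ``simple observation'', the intended one-line justification being the standard fact that edge contractions may be performed in any order, so that $G/F = \bigl(G/\{(u,v)\}\bigr)/F'$, where $F'$ is the image of $F \setminus \{(u,v)\}$ in $G/\{(u,v)\}$ --- a set of at most $k-1$ edges (possibly fewer, exactly when edges of $F$ incident to $u$ and to $v$ get identified, which only helps) --- whence $G/\{(u,v)\}$ can be contracted with at most $k-1$ edges to a graph of maximum degree at most $d$. Your witness-structure argument is a fully rigorous rendering of the same content: since $(u,v)\in F$, vertices $u,v$ lie in one witness set $W(h)$, and replacing it by $W'(h)=(W(h)\setminus\{u,v\})\cup\{w\}$ verifiably satisfies both conditions of Definition~\ref{def:graph-contractioon} (contraction of an edge inside $W(h)$ preserves connectivity of that part and neither creates nor destroys adjacencies between distinct witness sets), and your count is right: the spanning-tree union of $\calW$ has at most $|F|\le k$ edges, since each component of $(V(F),F)$ needs no more spanning-tree edges than it contributes to $F$, and the union for $\calW'$ has exactly one edge fewer. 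What your route buys is that the confluence issue and the collapsing-edge bookkeeping you flag are handled rigorously rather than assumed; what it costs is length for a statement the authors deliberately left proofless. Two small points worth making explicit if you keep this version: every component of $(V(F),F)$ contains an edge, so all witness sets meeting $V(F)$ are big, as your component-based description presumes; and in the degenerate case $W(h)=\{u,v\}$ your $W'(h)=\{w\}$ becomes a small witness set with empty spanning tree, which your argument covers but silently.
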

We bound the maximum degree of graph by $k + d$ in the non-trivial instances of the problem.
\begin{observation} \label{obs:trivial-no} If there is a vertex of degree $d + k +1$ or more in $G$ then $(G, k, d)$ is a \no\ instance.
\end{observation}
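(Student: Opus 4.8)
The plan is to prove the contrapositive: I will show that in every \yes\ instance each vertex has degree at most $d + k$, which immediately gives that a vertex of degree $d + k + 1$ or more forces a \no\ instance. So suppose $(G, k, d)$ is a \yes\ instance with a solution $F$ of size at most $k$, let $H = G/F$, and let $\psi$ and $\calW$ be the mapping and witness structure associated with this contraction. Fix an arbitrary vertex $v$ of $G$ and set $h = \psi(v)$. Since $H$ has maximum degree at most $d$, we have $\deg_H(h) \le d$, and hence $|N_H[h]| \le d + 1$.

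The key step is to bound $\deg_G(v)$ by comparing the closed neighborhoods on both sides of the contraction. First I would verify the inclusion $N[v] \subseteq \psi^{-1}(N_H[h])$: the vertex $v$ itself maps to $h \in N_H[h]$, and for every neighbor $u$ of $v$, either $u$ lies in the same witness set as $v$, so that $\psi(u) = h$, or, by the definition of graph contraction, the edge $\psi(u)\psi(v)$ is present in $H$, so that $\psi(u) \in N_H(h)$; in both cases $\psi(u) \in N_H[h]$. Setting $U := \psi^{-1}(N_H[h])$ and using that $\psi$ is surjective (every witness set is non empty), we get $\psi(U) = N_H[h]$, so $|\psi(U)| = \deg_H(h) + 1 \le d + 1$.

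Next I would apply the fourth part of Observation~\ref{obs:witness-structure-property} to this set $U$, which yields $|U| \le |\psi(U)| + k \le (d+1) + k$. Combining this with the inclusion from the previous step gives $\deg_G(v) + 1 = |N[v]| \le |U| \le d + k + 1$, and therefore $\deg_G(v) \le d + k$. As $v$ was arbitrary, every vertex of a \yes\ instance has degree at most $d + k$, which completes the contrapositive.

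The only delicate point, and the place where a careless argument goes wrong, is the off-by-one: bounding through the open neighborhoods $N(v)$ and $N_H(h)$ alone only yields $\deg_G(v) \le d + k + 1$, which is too weak by one. The reason to pass to the closed preimage $U = \psi^{-1}(N_H[h])$ is precisely that it captures $v$'s own witness set together with all neighboring witness sets as a single set whose image is exactly $N_H[h]$ of size at most $d + 1$, so that the additive loss of $k$ coming from Observation~\ref{obs:witness-structure-property} is incurred once and tightly. I expect this bookkeeping to be the main thing to get right; everything else follows directly from the witness-structure definitions.
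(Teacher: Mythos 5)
Your proof is correct and follows essentially the same route as the paper: both arguments hinge on Observation~\ref{obs:witness-structure-property}~(\ref{item:reduced-size}) together with the fact that the (closed) neighborhood of $v$ maps into $N_H[\psi(v)]$, which has size at most $d+1$, yielding $\deg_G(v) \le d + k$ for every vertex of a \yes\ instance. The only cosmetic difference is that you apply the observation to the preimage $\psi^{-1}(N_H[\psi(v)])$ while the paper applies it directly to $N_G[v]$ and then bounds the image, and you phrase it as a contrapositive rather than a contradiction; the underlying calculation, including the closed-neighborhood bookkeeping that avoids the off-by-one, is identical.
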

\begin{proof}
Suppose there is a vertex, say $v$, of degree greater than $d + k + 1$ in graph $G$.
Assume, for the sake of a contradiction, that $(G, k, d)$ is a \yes-instance. 
Let $(G, k, d)$ is $k$-contractible to a graph $H$, via mapping $\psi$, such that the maximum degree of vertices in $H$ is at most $d$.
By Observation~\ref{obs:witness-structure-property} (\ref{item:reduced-size}), $|N_G[v]| \le |\psi(N_G[v])| + k$ where $\psi(N_G[v]) = \bigcup_{u \in N_G[v]} \psi(u)$.
As $d + k + 1 < |N_G[v]|$, we have $d + 1 < |\psi(N_G[v])|$.
As $\psi(N_G[v]) \subseteq N_H(\psi(v))$, vertex $\psi(v)$ is adjacent with $d + 1$ or more vertices in $H$.
This contradicts the fact that the maximum degree of vertices in $H$ is at most $d$.
Hence, our assumption was wrong and $(G, k, d)$ is a \no\ instance.
\end{proof}
If every vertex in $G$ has degree at most $d$, then $(G, k, d)$ is a trivial \yes\ instance.
Hence, there is at least one vertex in $G$ that has degree at least $d + 1$.
We prove that the number of such vertices is bounded.
\begin{observation}\label{obs:bound-large-deg-vertices} Let $(G, k, d)$ be a \yes\ instance of \textsc{MDC}. Then, $G$ contains at most $k(d + 2)$ vertices that has degree at least $d + 1$.
\end{observation}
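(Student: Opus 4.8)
The plan is to fix a witness structure and count the large-degree vertices according to whether they lie in big or small witness sets. Since $(G,k,d)$ is a \yes\ instance, there is a graph $H$ of maximum degree at most $d$ together with a mapping $\psi : V(G) \to V(H)$ witnessing that $G$ is $k$-contractible to $H$; let $\calW$ be the associated witness structure. Call a vertex of $G$ \emph{large} if its degree is at least $d+1$, and let $L$ be the set of large vertices. I would split $L = L_b \cup L_s$, where $L_b$ consists of the large vertices lying in big witness sets and $L_s$ of those lying in small (singleton) witness sets. By the second item of Observation~\ref{obs:witness-structure-property}, the total number of vertices contained in big witness sets is at most $2k$, so immediately $|L_b| \le 2k$.

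The substantive part is bounding $|L_s|$, and the plan is to show $|L_s| \le kd$ via a pigeonhole argument transported to $H$. Fix $v \in L_s$, so $W(\psi(v)) = \{v\}$ and $\deg_G(v) \ge d+1$. Its neighbours are mapped by $\psi$ into $N_H(\psi(v))$, whose size is $\deg_H(\psi(v)) \le d$. Since $\deg_G(v) \ge d+1 > d$, by pigeonhole two distinct neighbours of $v$ share a $\psi$-image $h'$; the set $W(h')$ then contains at least two vertices and is therefore a big witness set, and $(\psi(v), h') \in E(H)$ because $W(h')$ meets $N_G(v)$. Thus every $v \in L_s$ maps to a vertex $\psi(v)$ that is adjacent in $H$ to some vertex corresponding to a big witness set.

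To finish, I would count such vertices on the $H$ side. By the first item of Observation~\ref{obs:witness-structure-property} there are at most $k$ vertices of $H$ that correspond to big witness sets, and each has degree at most $d$ in $H$; hence the union of their $H$-neighbourhoods has size at most $kd$. Since each $v \in L_s$ sits in its own singleton witness set, $\psi$ is injective on $L_s$, and the previous paragraph places every $\psi(v)$ inside this union of $H$-neighbours of big vertices. Therefore $|L_s| \le kd$, and combining the two bounds gives $|L| = |L_b| + |L_s| \le 2k + kd = k(d+2)$, as claimed.

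The main obstacle I anticipate is the small-witness-set estimate: one must be careful that the pigeonhole genuinely forces adjacency in $H$ to a big vertex (using that the collapsed neighbours land in a common, necessarily big, witness set, and that $\psi(v)\neq h'$ since $W(\psi(v))$ is singleton while $W(h')$ is big), and that the injectivity of $\psi$ on $L_s$ lets the $G$-side count be replaced by the clean $H$-side count of at most $kd$ neighbours of the at-most-$k$ big vertices.
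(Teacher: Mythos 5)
Your proof is correct and takes essentially the same route as the paper's: both decompose the large-degree vertices into those lying inside big witness sets (at most $2k$ by Observation~\ref{obs:witness-structure-property}~(\ref{item:nr-vertices-big-bag})) and those in singleton witness sets, which must be adjacent to big witness sets and hence number at most $dk$, since the at most $k$ big-witness vertices of $G/F$ each have degree at most $d$. The only cosmetic difference is that you re-derive the pigeonhole step inline (carefully, including $\psi(v)\neq h'$ and injectivity on the singleton side), whereas the paper cites it as Observation~\ref{obs:high-deg-vertex}.
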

\begin{proof}
Let $L$ be the collection of vertices in $G$ which has degree at least $d + 1$.
As $(G, k, d)$ is a \yes\ instance, there is a solution, say $F$, to it.
Let $\calW$ be a $(G/F)$-witness structure of $G$.
By Observation~\ref{obs:high-deg-vertex}, every vertex in $L$ is either contained in a big-witness set or at least two of its neighbors are in a big witness set.
By Observation~\ref{obs:witness-structure-property} (\ref{item:nr-vertices-big-bag}), the number of vertices in big witness sets is at most $2k$.
As every vertex in $G/F$ has degree at most $d$, there are at most $dk$ vertices in $G$ that are adjacent to some vertex in big witness sets.
This implies that there are at most $k(2 + d)$ vertices in $L$.
\end{proof}
The following observation specifies how a solution behaves locally. 
\begin{observation} \label{obs:high-deg-vertex}
Consider a \yes\ instance $(G, k, d)$ of \textsc{MDC} and let $v$ be a vertex of degree at least $d + 1$ in $G$.
Then, for any solution $F$ to $(G, k, d)$, there are at least two vertices in $N[v]$ that are in the same witness set in the $G/F$-witness structure of $G$.
\end{observation}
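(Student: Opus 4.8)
The plan is to argue by contradiction. Suppose, towards a contradiction, that no two vertices of $N[v]$ belong to a common witness set of the $G/F$-witness structure $\calW$; equivalently, the map $\psi : V(G) \rightarrow V(G/F)$ associated with $\calW$ is injective when restricted to $N[v]$. Write $H = G/F$ and $h = \psi(v)$.

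First I would translate each edge incident to $v$ into an edge of $H$. Fix any neighbor $u \in N(v)$. By injectivity on $N[v]$ we have $\psi(u) \neq \psi(v) = h$, so $v$ and $u$ lie in the two \emph{distinct} witness sets $W(h)$ and $W(\psi(u))$. The edge $(v, u) \in E(G)$ therefore witnesses that $E(W(h), W(\psi(u)))$ is non-empty, and by the second condition of Definition~\ref{def:graph-contractioon} this forces $h\,\psi(u)$ to be an edge of $H$.

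Next I would count. Running the previous step over all $u \in N(v)$ produces the set $\{\psi(u) \mid u \in N(v)\}$ of neighbors of $h$ in $H$. Since $\psi$ is injective on $N(v)$, these are $\deg_G(v)$ pairwise distinct vertices of $H$, each different from $h$. Hence $\deg_H(h) \ge \deg_G(v) \ge d + 1$, contradicting the fact that $F$ is a solution, and so the maximum degree of $H = G/F$ is at most $d$. This contradiction establishes the claim.

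I do not expect a genuine obstacle here; the statement is morally the contrapositive of the fact that a high-degree vertex cannot survive contraction unless its closed neighborhood collapses somewhere. The only point requiring a little care is that the witness set $W(h)$ may be big and contain vertices outside $N[v]$, but this is irrelevant to the degree count, which uses only that the images of $v$'s neighbors are distinct and adjacent to $h$. It is worth noting the parallel with the proof of Observation~\ref{obs:witness-structure-property}(\ref{item:degree-bound}), which derives the reverse inequality $\deg_H(\psi(v)) \le \deg_G(v)$ under the assumption that $W(\psi(v))$ is a singleton; here the injectivity hypothesis plays the analogous structural role.
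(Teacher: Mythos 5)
Your proof is correct and follows essentially the same route as the paper's: assume $\psi$ is injective on $N[v]$, deduce that the images of $v$'s neighbors are $\deg_G(v) \ge d+1$ distinct neighbors of $\psi(v)$ in $G/F$, and contradict the degree bound. If anything, your version is slightly more careful than the paper's, which writes $\psi(N_G[v]) \subseteq N_{G/F}(\psi(v))$ even though $\psi(v)$ itself lies in $\psi(N_G[v])$ but not in its own open neighborhood --- a harmless slip you avoid by counting only over $N(v)$ and invoking Definition~\ref{def:graph-contractioon} explicitly.
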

\begin{proof}
Let $G$ is contractible to a graph $G/F$, via mapping $\psi$.
Assume, for the sake of contradiction, that no two vertices in $N[v]$ are in the same witness set.
This implies $|N[v]| = |\psi(N[v])|$, where $\psi(N_G[v]) = \bigcup_{u \in N_G[v]} \psi(u)$.
As $\psi(N_G[v]) \subseteq N_{G/F}(\psi(v))$ and $|N[v]| > d + 1$, vertex $\psi(v)$ is adjacent with $d + 1$ or more vertices in $G/F$.
This contradicts the fact that the maximum degree of vertices in $G/F$ is at most $d$.
Hence, our assumption was wrong and there are at least two vertices in $N[v]$ that are in some big-witness set in $G/F$-witness structure of $G$.
\end{proof}

We say that solution $F$ \emph{merges} at least two vertices in $N[v]$.
Note that for an edge $(u, v)$ in $G$, it is possible that $(u, v) \not\in F$ but $F$ merges $u, v$.

The following lemma allows us to conclude that an instance is a \no\ instance if we find a sizeable collection of \emph{large stars} that do not intersect with each other.
We present it in the form suitable for the application in the later part of the article. 
\begin{lemma} \label{lemma:disjoint-nbrs} For an instance $(G, k - 1, d)$, suppose there is subset $L^{\circ}$ of $V(G)$ that satisfies the following conditions:
$(i)$ For every vertex $v$ in $L^{\circ}$, $N(v)$ is an independent set of size at least $d + 1$.
$(ii)$ For any two different vertices $u, v$ in $L^{\circ}$, $N(v) \cap N(u) = \emptyset$. 
$(iii)$ $|L^{\circ}| \ge k$.
Then, $(G, k - 1, d)$ is a \no\ instance.
\end{lemma}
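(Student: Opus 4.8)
The plan is to argue by contradiction and reduce everything to a vertex-counting statement about big witness sets. Suppose $(G, k-1, d)$ is a \yes\ instance with a solution $F$, so $|F| \le k-1$, and let $\calW$ be the corresponding $(G/F)$-witness structure with associated map $\psi$. By condition $(i)$ every $v \in L^{\circ}$ has $\deg_G(v) = |N(v)| \ge d+1$, so Observation~\ref{obs:high-deg-vertex} applies to each such $v$: some (necessarily big) witness set contains at least two vertices of $N[v]$. The target is to show that these forced coincidences push at least $2k$ vertices into big witness sets, which contradicts the bound of at most $2(k-1) < 2k$ vertices guaranteed by Observation~\ref{obs:witness-structure-property}~(\ref{item:nr-vertices-big-bag}) applied with budget $k-1$.

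First I would classify each $v \in L^{\circ}$. If two vertices of the \emph{open} neighborhood $N(v)$ lie in a common witness set, call $v$ of \emph{open type} and earmark those two vertices; otherwise the two merged vertices promised by Observation~\ref{obs:high-deg-vertex} must be $v$ itself together with one of its neighbors, so $v$ lies in a big witness set, and I call $v$ of \emph{center type}. For open-type vertices the earmarked pairs are automatically pairwise disjoint, because by condition $(ii)$ the sets $N(v)$ are pairwise disjoint; hence $b$ open-type vertices already contribute $2b$ distinct vertices to big witness sets. In the clean extreme where every $v \in L^{\circ}$ is of open type, this already yields $2|L^{\circ}| \ge 2k$ distinct vertices in big witness sets and the contradiction is immediate. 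The remaining work is to show that center-type vertices contribute just as generously.

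The crux, and the step I expect to be the main obstacle, is the interaction between two center-type vertices that happen to be adjacent: condition $(ii)$ forbids common neighbors but not adjacency of the centers themselves, so their closed neighborhoods can overlap exactly at the two centers, and a single contraction might superficially appear to serve both. To resolve this I would exploit the degree bound together with the independence of each $N(v)$ and their pairwise disjointness. Suppose a witness set $W$ contains $p$ of the centers $v_1,\dots,v_p$, with common image $w = \psi(v_1) = \dots = \psi(v_p)$. The set $\bigcup_i (N(v_i) \setminus W)$ consists of pairwise disjoint vertices, all adjacent to $w$, and since each $|N(v_i)| \ge d+1$ while $\sum_i |N(v_i) \cap W| \le |W|$, it has at least $p(d+1) - |W|$ elements. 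As $\deg_{G/F}(w) \le d$, these vertices cannot map to more than $d$ distinct witness sets, so either $W$ is already large or a large number of these disjoint neighbors are themselves merged and thereby forced into big witness sets. Either way, for $d \ge 2$ every center costs at least two vertices of big witness sets, and clustering centers into one witness set only increases the cost; the one subtlety to track carefully is that a center lying in another center's neighborhood is charged only once, which is controlled by the fact that, by condition $(ii)$, any vertex belongs to at most one set $N(v)$. Combining the disjoint contributions of the open-type vertices with this per-witness-set lower bound shows that the total number of vertices in big witness sets is at least $2|L^{\circ}| \ge 2k$, contradicting Observation~\ref{obs:witness-structure-property}~(\ref{item:nr-vertices-big-bag}) and proving that $(G, k-1, d)$ is a \no\ instance.
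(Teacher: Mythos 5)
Your route is genuinely different from the paper's: the paper counts \emph{edges} of $F$ --- each vertex of $L^{\circ}$ forces either one edge of $F$ incident to itself or two edges of $F$ incident to its private neighborhood $N(v)$, and pairwise disjointness of the neighborhoods then gives $|F| \ge |L^{\circ}| \ge k > k-1$ --- whereas you count \emph{vertices in big witness sets} against the bound $2(k-1)$ from Observation~\ref{obs:witness-structure-property}~(\ref{item:nr-vertices-big-bag}). In the case where $L^{\circ}$ is an independent set (which holds in every application of the lemma in the paper, and which the paper's own proof tacitly assumes when it asserts $N[u] \cap N[v] = \emptyset$), your argument is correct and in fact needs no crux at all: every charge lies inside $N[v]$, the sets $N[v]$ are pairwise disjoint, and $2|L^{\circ}| \ge 2k > 2(k-1)$.

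The gap is in your treatment of adjacent vertices of $L^{\circ}$, which you correctly identify as the crux, and the failure there is structural, not a matter of bookkeeping. Consider an adjacent pair $v_1, v_2 \in L^{\circ}$ where the solution contracts exactly the edges $(v_1,u_1)$ and $(u_1,y_2)$, with $u_1 \in N(v_1)$, $y_2 \in N(v_2)$, $N(u_1)=\{v_1,y_2\}$ and $N(y_2)=\{u_1,v_2\}$, so that the witness set is $W=\{v_1,u_1,y_2\}$. Then $v_1$ is of center type with charge $\{v_1,u_1\}$, while $v_2$ is of \emph{open} type whose earmarked pair is $\{v_1,y_2\}$ (note $v_1 \in N(v_2)$); the union of the two charges has only three vertices, not four. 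Your degree-based rescue produces nothing here: $N(v_1)\setminus W$ consists of exactly $d$ pairwise non-adjacent vertices ($v_2$ and $d-1$ private neighbors of $v_1$), all of which may lie in singleton witness sets, giving $\psi(W)$ degree exactly $d$ --- so $W$ is not large \emph{and} none of the disjoint neighbors is merged, defeating the dichotomy in your crux paragraph (which, moreover, only addresses center--center interaction, not this center--open one). Since such pairs cost only $3$ big-witness vertices per $2$ vertices of $L^{\circ}$, the best obtainable total is roughly $\lceil 3k/2 \rceil$, and $\lceil 3k/2 \rceil \le 2(k-1)$ for every $k \ge 4$; hence the contradiction you aim for cannot be reached by counting vertices in big witness sets. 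This is precisely why the paper charges edges against the budget $k-1$ rather than vertices against $2(k-1)$: in the configuration above the pair still forces two edges of $F$, so the edge count retains the needed rate of one unit per vertex of $L^{\circ}$.
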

\begin{proof}
Assume, for the sake of contradiction, that $(G, k - 1, d)$ is a \yes\ instance.
Let $F$ be a solution to $(G, k, d)$ and $G$ is $(k-1)$-contractible to be $G/F$ via $\psi$.
By Observation~\ref{obs:high-deg-vertex}, for every vertex $v$ in $L^{\circ}$, there are at least two vertices in $N[v]$ which are in same witness set in the $G/F$-witness structure of $G$.
As $N(v)$ is an independent set, there is no edge whose both endpoints are in $N(v)$.
Hence, for every vertex in $L^{\circ}$, one of the following two statements must be true: 
$(a)$ $F$ contains an edge incident to $v$. 
$(b)$ $F$ contains at least two edges incident to $N(v)$ but are not incident to $\{v\}$.
Let $L_1$ be the collection of vertices in $L^{\circ}$ for which the first statement is true. 
Let $F_1$ be the subset of $F$ that are incident to some vertex in $L_1$. 
Recall that for any two vertices in $u, v \in L_1$, as $N[v] \cap N[u] = \emptyset$.
Hence, no edge in $F_1$ is incident to more than one vertex in $L^{\circ}$.
Hence, $|F_1| = |L_1|$.
For every $v$ in $L \setminus L_1$, there are at least two edges incident to $N(v)$.
Note that these edges are in $F \setminus F_1$ as they are not incident to any vertex in $L^{\circ}$.
As every edge in $F \setminus F_1$ can be incident to the open neighborhood of at most two vertices in $L \setminus L_1$, we can conclude that $2 |F \setminus F_1| \ge 2 |L \setminus L_1|$.
This implies that the number of edges in $F$ is at least $|L_1| + |L \setminus L_1| = |L|$.
This contradicts the fact that $|F| \le k - 1$ and $|L| \ge k + 1$.
Hence our assumption is wrong and $(G, k, d)$ is a \no\ instance.
\end{proof}

\subsection{Parameterized Complexity} 
An instance of a parameterized problem $\Pi$ comprises of an input $I$, which is an input of the classical instance of the problem and an integer $k$, which is called as the parameter.
A problem $\Pi$ is said to be \emph{fixed-parameter tractable} or in \FPT\ if given an instance $(I,k)$ of $\Pi$, we can decide whether or not $(I,k)$ is a \yes\ instance of $\Pi$ in  time $f(k)\cdot |I|^{\OO(1)}$.
Here, $f(\cdot)$ is some computable function whose value depends only on $k$. 

A \emph{compression} of a parameterized problem $\Pi_1$ into a (non-parameterized) problem $\Pi_2$ is a polynomial algorithm that maps each instance $(I_1, k_1)$ of $\Pi_1$ to an instance $I$ of $\Pi_2$ such that $(1)$ $(I, k)$ is a \yes\ instance of $\Pi_1$ if and only if $I_2$ is a \yes\ instance of $\Pi_2$, and $(2)$ size of  $I_2$ is bounded by $g(k)$ for a computable function $g(\cdot)$.
The output $I_2$ is also called a compression. 
The function $g$ is said to be the size of the compression.
A compression is polynomial if $g$ is polynomial.
A kernelization algorithm for a parameterized problem $\Pi$ is a polynomial algorithm that maps each instance $(I, k)$ of $\Pi$ to an instance $(I', k')$ of $\Pi$ such that $(1)$ $(I, k)$ is a \yes\ instance of $\Pi$ if and only if $(I', k')$ is a \yes\ instance of $\Pi$, and $(2)$ $|I'| + k'$ is bounded by $g(k)$ for a computable function $g(\cdot)$.
Respectively, $(I', k')$ is a kernel and $g$ is its size. 
A kernel is polynomial if $g$ is polynomial.

It is typical to describe a compression or kernelization algorithm as a series of reduction rules.
A reduction rule is a polynomial algorithm that takes as an input an instance of a problem and output another, usually reduced, instance.
A reduction rule said to be \emph{applicable} on an instance if the output instances is different than the instance.
A reduction rule is \emph{safe} if the input instance is a \yes\ instance if and only if the output instance is a \yes\ instance.

For details on parameterized complexity and related terminologies, we refer the reader to the books of Downey and Fellows~\cite{DF-new}, Flum  and Grohe~\cite{flumgrohe}, Niedermeier~\cite{niedermeier2006}, and the more recent books by Cygan et al.~\cite{saurabh-book} and Fomin et al.~\cite{fomin2019kernelization}.
\section{A Lower Bound for the Algorithm}
\label{sec:lower-bound}
In this section, we prove Theorem~\ref{thm:brute-force-algo-lb}. 
We present a reduction from $(k \times k)$-\textsc{Permutation Independent Set (PIS)} problem to \textsc{Maximum Degree Contraction} problem.
In the $(k \times k)$-\textsc{PIS} problem we are given a graph $H$ on a vertex set $[k] \times [k]$.
In other words, the vertex set is formed by a $k \times k$ table. 
We denote vertices in the table by $v[i, j]$ for $1 \le i, j \le k$. 
The question is whether there exists an independent set $X$ in $H$ that contains exactly one vertex from each row and each column of the table.
In other words, for every $i, j \in [k]$ there is exactly one element of $X$ that has $i$ on
the first coordinate and $j$ on the second coordinate. 
Note that without loss of generality we may assume that each row and each column of the table forms an independent set.~\footnote{Since we are looking for an independent set, it is intuitive to add all missing edges in a row or a column of the table. But to simply our reduction, we remove edges that have both their endpoints in the same row or column. It is easy to verify that this operation is safe.}
The following result is known for this problem.

\begin{proposition}[\cite{lokshtanov2018slightly}] \label{prop:ind-set-lower-bound} Unless \ETH\ fails, $(k \times k)$-\textsc{Permutation Independent Set} can not be solved in time $k^{o(k)}$.
\end{proposition}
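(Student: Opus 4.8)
The plan is to establish the contrapositive: from a hypothetical algorithm solving $(k \times k)$-\textsc{Permutation Independent Set} in time $k^{o(k)}$, I would build an algorithm deciding \textsc{$3$-Coloring} on $N$-vertex graphs in time $2^{o(N)}$, contradicting \ETH. The whole reduction hinges on a grouping (variable-packing) trick that encodes an $N$-vertex coloring instance into a table of side length $k = \Theta(N / \log N)$; since then $k \log k = \Theta(N)$, a $k^{o(k)} = 2^{o(k \log k)}$ procedure becomes a $2^{o(N)}$ procedure. I would start from \textsc{$3$-Coloring} because under \ETH\ it has no $2^{o(N)}$-time algorithm even on graphs with $\calO(N)$ edges, and crucially its constraints are \emph{binary} (one per edge), which matches the fact that adjacency in the target graph $H$ is a pairwise relation.

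For the construction I would first fix a group size $s = \Theta(\log N)$ small enough that $3^{s} \le k := \lceil N/s \rceil$, and partition $V(G')$ into $k$ groups $B_1, \dots, B_k$, each of size at most $s$. Row $i$ of the table represents $B_i$, and its vertices encode the (at most $3^{s} \le k$) proper $3$-colorings of the induced subgraph $G'[B_i]$. I would then place an edge of $H$ between $v[i, j]$ and $v[i', j']$ (with $i \ne i'$) precisely when the local colorings they encode disagree on some edge of $G'$ running between $B_i$ and $B_{i'}$, that is, when they assign the same color to the two endpoints of such an edge. With this rule, intra-row pairs are never joined, so rows are automatically independent sets, and a selection of one non-adjacent vertex per row corresponds exactly to a globally consistent family of local colorings, i.e.\ to a proper $3$-coloring of $G'$. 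Thus the plain ``one vertex per row'' relaxation already captures $3$-colorability.

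The genuinely delicate step is upgrading this to the \emph{permutation} requirement, namely forcing the solution to use \emph{exactly one vertex from each column} as well. This is restrictive because a permutation must cover all $k$ columns bijectively, whereas a proper coloring only constrains the row choices; naively indexing local colorings by columns would wrongly reject two groups that happen to pick colorings sharing a column. The route I would pursue is to pad every row to full width with ``neutral'' vertices that conflict with nothing, and to lay out the genuine and neutral cells according to a systematic (Latin-square-type) pattern so that (i) any conflict-free one-per-row selection can be completed to a full permutation by routing the unused rows through the neutral cells, and (ii) conversely every permutation solution restricts to a conflict-free one-per-row selection of genuine cells encoding a proper coloring. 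Making this padding respect the column-independence condition and preserve the equivalence in both directions, without inflating $k$ beyond $\Theta(N/\log N)$, is where the real care is needed and is the main obstacle of the proof.

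Finally I would account for the parameters. The constructed instance of $(k \times k)$-\textsc{Permutation Independent Set} has $k = \Theta(N/\log N)$ and size $\mathrm{poly}(N)$, and is a \yes\ instance if and only if $G'$ is $3$-colorable. Running the hypothetical $k^{o(k)}$ algorithm on it would take time $k^{o(k)} = 2^{o(k \log k)} = 2^{o(N)}$ and decide $3$-colorability, contradicting \ETH. This yields the stated bound.
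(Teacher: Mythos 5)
First, a point of reference: the paper does not prove this proposition at all --- it imports it verbatim from Lokshtanov, Marx and Saurabh \cite{lokshtanov2018slightly} --- so your attempt has to be measured against their proof. The first half of your proposal reconstructs it faithfully: grouping an $N$-vertex \textsc{3-Coloring} instance into $k=\Theta(N/\log N)$ groups of size $s=\Theta(\log N)$ with $3^{s}\le k$, letting rows index groups and cells index proper colorings of a group, putting conflict edges between incompatible local colorings, and the arithmetic $k^{o(k)}=2^{o(k\log k)}=2^{o(N)}$ are exactly their lower bound for the \emph{one-vertex-per-row} variant ($(k\times k)$-\textsc{Clique}/\textsc{Independent Set}). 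The genuine gap is precisely the step you flag as ``where the real care is needed'': enforcing the column constraint. The Latin-square padding you sketch cannot be repaired, and you offer no substitute for it.

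Concretely: $(a)$ there are no ``unused rows'' to route through neutral cells --- a one-per-row selection already uses every row, so the rerouting picture is off from the start; $(b)$ if the padding vertices in genuine rows really ``conflict with nothing,'' then a permutation may select neutral cells in genuine rows, producing \yes\ instances of $(k\times k)$-\textsc{PIS} from non-$3$-colorable graphs, killing the backward direction; $(c)$ the natural fix --- filling each row by \emph{duplicating} its genuine colorings across columns so a chosen coloring can be moved to a free column --- provably fails: one needs a system of distinct representatives (Hall's condition) for every choice of one coloring per row, but for any fixed column $j^{\ast}$ one may choose in each row a coloring not occupying cell $(i,j^{\ast})$ (possible whenever a row carries at least two colorings), whereupon all chosen colorings occupy columns inside $[k]\setminus\{j^{\ast}\}$ and no SDR exists; taking $G'$ with few edges between groups makes such adversarial choices realizable by actual proper colorings, killing the forward direction. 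What \cite{lokshtanov2018slightly} actually do is different in kind: they first prove the lower bound for the one-per-row version, and then give a \emph{randomized self-reduction} to the permutation version, relabeling the columns of each row by independent uniform functions $[k]\rightarrow[k]$; a fixed one-per-row solution becomes rainbow (all column labels distinct) with probability $k!/k^{k}=2^{-\calO(k)}$, so $2^{\calO(k)}$ independent relabelings suffice and are absorbed into the $2^{o(k\log k)}$ budget, there are no false positives because any permutation solution of a relabeled instance is in particular a one-per-row solution of the original, and the randomness is removed with an explicit $2^{\calO(k)}$-size family of relabelings. That randomized-relabeling device (or an equivalent mechanism) is the missing idea; without it your reduction establishes the lower bound only for the non-permutation variant.
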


\paragraph*{Reduction}
The reduction accepts an instance, say $(H, k)$, of $(k \times k)$-\textsc{Permutation Independent Set} as an input.
Here, $H$ is a graph with vertex set formed by a $k \times k$ table.
The reduction modifies a copy of the graph $H$ in the following way. 
\begin{itemize}
\item[-] It adds a vertex corresponding to each row in the table and makes it adjacent with all vertices in that row.
Let $R = \{r_1, r_2, \dots, r_k\}$ be the set of vertices corresponding to rows. 
\item[-] It adds a vertex corresponding to each column in the table and makes it adjacent with all vertices in that column.
Let $C = \{c_1, c_2, \dots, c_k\}$ be the set of vertices corresponding to columns.
\item[-] It adds set $S = \{s_1, s_2, \dots, s_k\}$ of $k$ vertices.
For every $i$ in $[k]$, it makes $s_i$ adjacent with every vertex in $V(H) \cup C$ and with $r_i$. 
\item[-] For every vertex $r_i$ in $R$, it adds $k^2$ pendant vertices and makes them adjacent with $r_i$. 
\item[-] For every vertex $c_j$ in $C$, it adds $(k^2 - k + 1)$ pendant vertices and makes them adjacent with $c_j$.
\end{itemize}
See Figure~\ref{fig:reduction-IS-to-Max-Deg-Cont} for an illustration.
Let $G$ be the graph obtained from a copy of graph $H$ with the above modifications.
The algorithm returns $(G, k, k^2 + k)$ as instance of \textsc{MDC}.

\begin{figure}[t]
\begin{center}
\includegraphics[scale=0.75]{./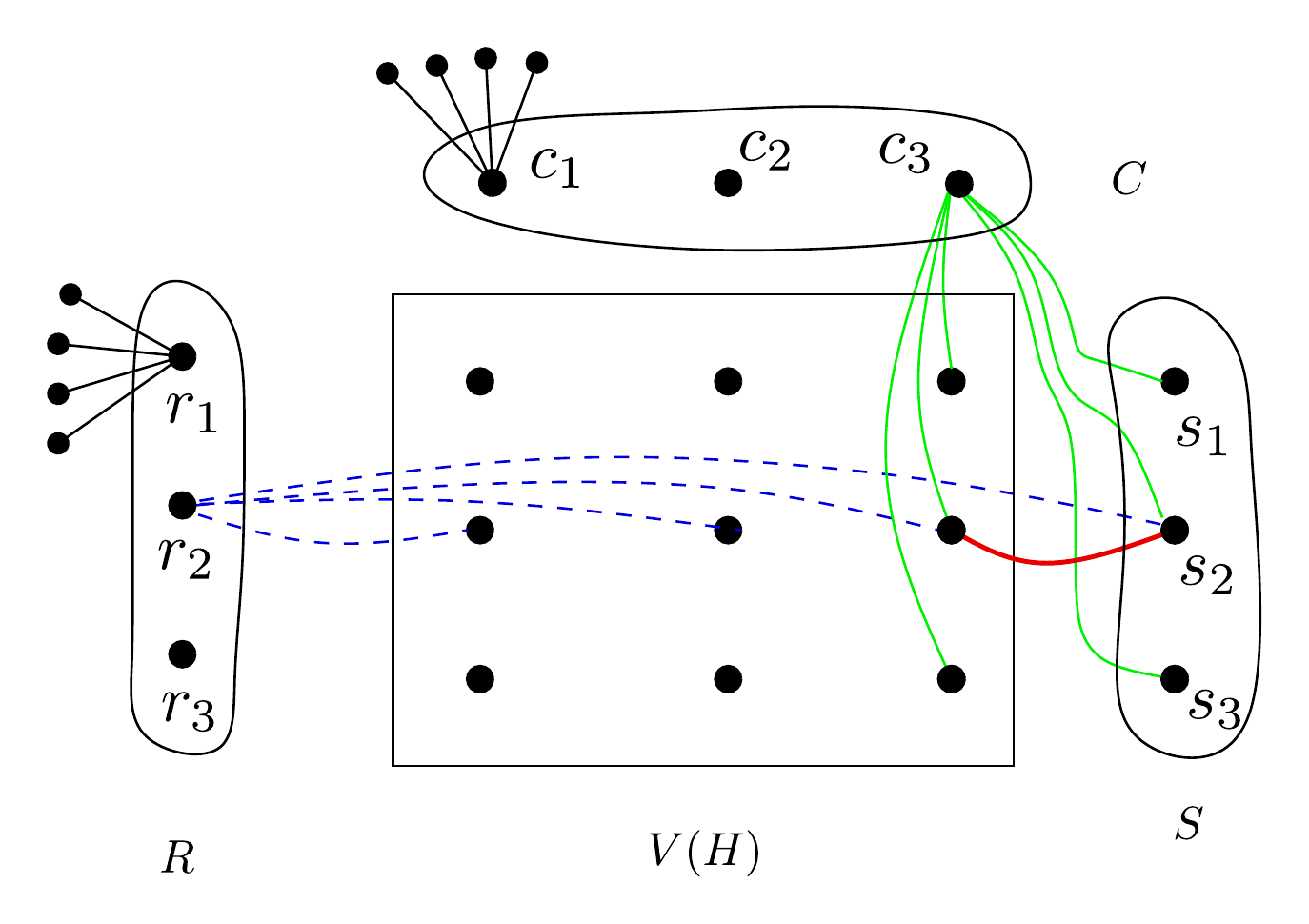}
\end{center}
\caption{Dotted (blue) lines and thin (green) lines show the adjacency of vertices in $R$ and $C$, respectively. 
Contracting the thick (red) edge $(v[2, 3], s_2)$ represents selecting vertex $v[2, 3]$ into the independent set.
For the sake of clarity, we do not depict all edges present in the graph. \label{fig:reduction-IS-to-Max-Deg-Cont}}
\end{figure}

We present intuition of the proof of correctness.
We describe how a solution, if it exists, to $(G, k, d)$ leads to a solution to $(H, k)$.
We hope that this will also provide some intuition as to how a solution to $(H, k)$ leads to a solution to $(G, k, d)$.
Note that $S, C, R$ are independent sets in $G$.
Every vertex in $R \cup C \cup S$ has degree $d + 1$ and every vertex in $V(G) \setminus (R \cup C \cup S)$ has degree strictly less than $d$.
We first argue that any solution for $(G, k, d)$ can only contain edges in $E(G)$ that have one endpoint in $V(H)$ and another endpoint in $S$.
Then, we prove that for every $i\in [k]$ a solution must pick an edge incident to some vertex in $i^{th}$ row and on $s_i$ to reduce the degree of vertex $r_i$.
We prove a similar statement for every column.
Hence, for every $i\in [k]$, a solution contains an edge of the form $(v[i, j], s_i)$ for some $j \in [k]$.
As there are at most $k$ edges in a solution, every edge is of this form. 
For $i_1, i_2, j_1, j_2 \in [k]$, let $(v[i_1, j_1], s_{i_1})$ and $(v[i_2, j_2], s_{i_2})$ be two edges in a solution.
We argue that if $(v[i_1, j_1], v[i_2, j_2])$ is an edge in $G$ (and hence in $H$) then degrees of vertices obtained by contracting 
$(v[i_1, j_1], s_{i_1})$ and $(v[i_2, j_2], s_{i_2})$ are more than $d$.
As this is true for any two arbitrary edges in solution, their endpoints in  $V(H)$ form an independent set in $H$.
We formalize this intuition in the following two lemmas.

\begin{lemma} \label{lemma:correct-reduction-PermIS-MaxDegCont-forward} Suppose the reduction returns $(G, k, d)$ when the input is $(H, k)$.
If $(H, k)$ is a \yes\ instance of $(k \times k)$-\textsc{PIS} then $(G, k, d)$ is a \yes\ instance of \textsc{MDC}.
\end{lemma}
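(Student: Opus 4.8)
The plan is to turn a solution of the \textsc{PIS} instance into an explicit solution of the \textsc{MDC} instance and then verify the degree bound by hand. Since $(H,k)$ is a \yes\ instance, fix an independent set $X$ in $H$ that picks exactly one vertex per row and per column; writing it as $X=\{v[1,\sigma(1)],\dots,v[k,\sigma(k)]\}$, the selection $\sigma$ is a permutation of $[k]$. The natural candidate solution is $F=\{(v[i,\sigma(i)],s_i)\mid i\in[k]\}$, mirroring the intuition that contracting $(v[i,j],s_i)$ encodes ``select $v[i,j]$''. First I would check admissibility: each pair is genuinely an edge of $G$ because $s_i$ is adjacent to all of $V(H)$; the edges are vertex-disjoint since they use distinct $s_i$ and lie in distinct rows, so $F$ is a matching with $|F|=k$. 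Consequently the $(G/F)$-witness structure $\calW$ consists of the $k$ two-element sets $W_i:=\{v[i,\sigma(i)],s_i\}$, contracted to vertices $w_i$, together with singletons for all remaining vertices.

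The heart of the proof is to show that every vertex of $G/F$ has degree at most $d=k^2+k$, which I would do class by class. Singleton vertices (the untouched $H$-vertices and all pendants) cause no trouble: by Observation~\ref{obs:witness-structure-property}(\ref{item:degree-bound}) their degree in $G/F$ does not exceed their degree in $G$, and every such vertex already has degree strictly below $d$. For a row vertex $r_i$, its two $G$-neighbours $v[i,\sigma(i)]$ and $s_i$ are precisely the pair merged into $w_i$, so $r_i$ loses one unit of its degree $d+1$ and ends at $d$. For a column vertex $c_j$ I would invoke surjectivity of $\sigma$: the unique $i_0$ with $\sigma(i_0)=j$ has both $v[i_0,j]$ and $s_{i_0}$ adjacent to $c_j$, and these are merged into $w_{i_0}$, again dropping $c_j$ from $d+1$ to $d$. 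This is exactly the step that uses the ``one vertex per column'' guarantee of the transversal.

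The delicate case is a merged vertex $w_i$, and I expect this to be the main obstacle. I would enumerate the witness sets adjacent to $W_i$: through $s_i$ one reaches every other $H$-vertex, namely the $k-1$ merged sets $w_\ell$ and the $k^2-k$ remaining singleton $H$-vertices, together with all $k$ column vertices and $r_i$; the endpoint $v[i,\sigma(i)]$ contributes only neighbours already on this list. Summing yields $\deg_{G/F}(w_i)=(k-1)+(k^2-k)+k+1=k^2+k=d$, so the bound holds with equality. The subtle point to argue with care is that $w_i$ acquires no neighbour beyond this list; here I would keep explicit track of the independent-set property of $X$ (no two chosen vertices are adjacent in $H$), distinguishing where it is genuinely needed from where only the transversal structure is used, since it is this independence that the converse direction will crucially exploit.

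Having verified all four degree bounds together with $|F|\le k$, the set $F$ is a solution, so $(G,k,d)$ is a \yes\ instance, completing the forward implication. Before committing to the write-up I would double-check the two arithmetic facts that make the construction tight, namely $\deg_G(r_i)=\deg_G(c_j)=\deg_G(s_i)=d+1$ and $\deg_G(v[i,j])<d$, since these are what force a \emph{single} merge at each of $r_i$ and $c_j$ to be both available and sufficient to reach degree exactly $d$.
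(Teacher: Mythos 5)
Your proposal is correct and follows essentially the same route as the paper: the identical solution $F=\{(v[i,\sigma(i)],s_i)\mid i\in[k]\}$, observing it is a matching of size $k$, followed by a class-by-class degree count (singletons via Observation~\ref{obs:witness-structure-property}(\ref{item:degree-bound}), row vertices via the merge inside $N(r_i)$, column vertices via surjectivity of $\sigma$). The one place you genuinely diverge is the merged vertices $w_i$, and your treatment there is the more careful one. The paper argues that the set $S^{\circ}$ of merged vertices is independent in $G/F$, invoking the independence of $X$; but since the construction makes $s_i$ adjacent to \emph{every} vertex of $V(H)$, the edge $(s_i, v[j,\sigma(j)])$ always joins the witness sets $W_i$ and $W_j$, so any two merged vertices are adjacent in $G/F$ regardless of whether $X$ is independent --- the paper's claim that no edge joins $\{s_i,v[i,\sigma(i)]\}$ to $\{s_j,v[j,\sigma(j)]\}$ does not hold for the reduction as written. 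Your direct enumeration, which lists all $k-1$ other merged vertices among the neighbours of $w_i$ and arrives at $(k-1)+(k^2-k)+k+1=k^2+k=d$, is consistent with the construction; the paper's tally $1+k+(k^2-1)=d$ reaches the same total only because its count of ``$k^2-1$ vertices in $V(H)\setminus X$'' silently includes those merged neighbours (note $|V(H)\setminus X|=k^2-k$). You can also discharge your one remaining hedge: since $N(s_i)\cup N(v[i,\sigma(i)])\subseteq V(H)\cup C\cup S\cup\{r_i\}$ and pendant vertices attach only to $R\cup C$, your neighbour list for $w_i$ is complete, and an $H$-edge between two selected vertices creates no neighbour beyond that list; so this forward direction in fact uses only the transversal (permutation) property of $X$ and never its independence --- an observation worth recording, since it shows the independence of $X$ must do all of its work in the backward direction, where the same always-adjacent phenomenon among merged vertices deserves scrutiny.
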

\begin{proof}
Suppose $(H, k)$ is a \yes\ instance and let $X$ be an independent set in $H$ that contains exactly one vertex from each row and each column of the table.
Define a function $\rho : [k] \rightarrow [k]$ such that $j = \rho(i)$ if $v[i, j]$ is a vertex in $X$. 
By the properties of $X$, we can conclude that $\rho$ is one-to-one and onto function.
We construct solution $F$ to $(G, k, d)$ using independent set $X$.
For every vertex $v[i, \rho(i)]$ in $X$, add edge $(s_i, v[i, \rho(i)])$ in $F$.
By construction, the cardinality of $F$ is $k$.
We argue that the maximum degree of any vertex in $G/F$ is $d$.
As mentioned before, in graph $G$, set $R \cup C \cup S$ is the collection of all vertices of degree strictly greater than $d$. 
More precisely, every vertex in $R \cup C \cup S$ has degree $d + 1$.
We demonstrate that contracting edges in $F$ reduces the degree of each vertex in $R \cup C \ \cup S$ by one.

Note that edges in $F$ form a matching in $G$.
For every $i$ in $[k]$, let $s^\circ_i$ be the new vertex added while contracting edge $(s_i, v[i, \rho(i)])$.
Let $G' = G/F$ and $S^{\circ} = \{s^\circ_1, s^\circ_2, \dots, s^\circ_k\}$. 
Note that $V(G')$ can be partitioned into $R, C, S^\circ$, $V(H) \setminus X$, and pendent vertices which are adjacent with $R \cup C$.
Every vertex in $V(H) \setminus X$ is adjacent with at most $k^2 - 2k + k + 2 \le d - 2k + 2 \le d$ vertices in $G'$.
Every vertex in $R$ is adjacent with $k - 1$ vertices in $V(H) \setminus X$, one vertex in $S^{\circ}$, and $k^2$ pendant vertices in $G'$.
Hence, degree of every vertex in $R$ in $G'$ is $k - 1 + 1 + k^2 = d$.
For a vertex, say $c_j$, in $C$, there exists a vertex $v[i, \rho(i)]$ in $X$ such that $j = \rho(i)$.
Hence, in graph $G'$, vertex $c_j$ is adjacent with $k-1$ vertices in $V(H)$, $k$ vertices in $S^\circ$, and $k^2 - k + 1$ pendent vertices.
Hence, degree of $c_j$ in $G'$ is $k - 1 + k + k^2 - k + 1 = d$. 
Since $c_j$ is an arbitrary vertex in $C$, this is true for every vertex in $C$. 

We now argue that $S^\circ$ is an independent set in $G'$.
Consider two vertices, say $s^{\circ}_i, s^{\circ}_{j}$ in $S^\circ$.
By construction, vertices $s_i$ and $s_j$ are not adjacent with each other in $G$.
As $X$ is an independent set in $G$, vertices $v[i, \rho(i)], v[j, \rho(j)]$ in $X$ are not  adjacent with each other.
This implies there is no edge with one endpoint in set $\{s_i, v[i, \rho(i)]\}$ and another endpoint in $\{s_j, v[j, \rho(j)]\}$.
This implies that vertices $s^\circ_i$ and $s^\circ_j$ are not adjacent with each other in $G'$.
Since this is true for any two vertices in $S^\circ$, it is an independent set in $G'$.
By construction, for any $i$ in $[k]$, vertex $v[i, \rho(i)]$ is adjacent with only one vertex, viz $r_i$, in $R$.
Hence, any vertex $s^\circ_i$ in $S$ is adjacent with one vertex vertex in $R$, $k$ vertices in $C$ and $k^2 - 1$ vertices in $V(H) \setminus X$ in graph $G'$. 
This implies that every vertex in $S^\circ$ has degree $1 + k + k^2 - 1 = d$.
Hence, the maximum degree of any vertex in $G'$ is at most $d$.
This implies that $(G, k, d)$ is a \yes\ instance which concludes the proof.
\end{proof}

In the remaining section, we prove the following lemma.

\begin{lemma} \label{lemma:correct-reduction-PermIS-MaxDegCont-backword} Suppose the reduction returns $(G, k, d)$ when the input is $(H, k)$.
If if $(G, k, d)$ is a \yes\ instance of \textsc{MDC} then $(H, k)$ is a \yes\ instance of $(k \times k)$-\textsc{PIS}.
\end{lemma}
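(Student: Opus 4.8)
The plan is to assume $(G,k,d)$ is a \yes\ instance of \textsc{MDC}, fix a solution $F$ of size at most $k$ together with the associated $G/F$-witness structure $\calW$ and mapping $\psi$, and extract from it a permutation independent set in $H$. The key structural fact to establish first is that every edge of $F$ must have one endpoint in $V(H)$ and the other in $S$; once this is known, the solution essentially encodes a selection of one cell $v[i,j]$ per index $i$, and the degree constraints will force this selection to be a permutation independent set.

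\textbf{Step 1: locating where the solution acts.} First I would observe that the only vertices of $G$ with degree exceeding $d$ are exactly those in $R\cup C\cup S$, each having degree precisely $d+1$. By Observation~\ref{obs:high-deg-vertex}, for each such high-degree vertex $x$ the solution $F$ must merge at least two vertices of $N[x]$ into a common witness set. I would then rule out, one edge-type at a time, every possible placement of a solution edge \emph{except} those of the form $(v[i,j],s_i)$. The pendant vertices attached to $R$ and $C$ are useful here: contracting any edge incident to $r_i$ (or $c_j$) other than the intended one either fails to lower its degree or raises some neighbor's degree above $d$, because each $r_i$ already has $k^2$ pendants and each $c_j$ has $k^2-k+1$ pendants, leaving no slack. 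The vertices in $S$ are what let us shave off exactly one unit of degree from $r_i$, from every $c_j$, and from $s_i$ itself simultaneously, which is why legitimate contractions are confined to $S$-incident edges; and since $S$ is an independent set adjacent to all of $V(H)\cup C$, the only admissible contraction touching $s_i$ that does not blow up degrees is with a cell vertex $v[i,j]$.

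\textbf{Step 2: forcing a full permutation.} Having shown $F\subseteq\{(v[i,j],s_i)\}$, I would argue that for \emph{every} $i\in[k]$ the vertex $r_i$ has degree $d+1$ and the only way to reduce it using an $S$-incident edge is to contract an edge $(v[i,\rho(i)],s_i)$ lying in the $i$th row (note $r_i$ is adjacent to $s_i$ and to the $i$th-row cells, so the merged vertex absorbs a neighbor of $r_i$). Thus each of the $k$ indices $i$ consumes at least one edge of $F$, and since $|F|\le k$ there is exactly one edge per row, defining a function $\rho:[k]\to[k]$ with $v[i,\rho(i)]\in X$. An analogous counting argument on the column vertices $c_j$ — each of degree $d+1$, each needing its degree reduced by one — forces $\rho$ to hit every column as well, hence $\rho$ is a bijection and $X$ contains exactly one vertex per row and per column. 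Here I would use Lemma~\ref{lemma:disjoint-nbrs} or a direct degree count to guarantee no index is left unserviced.

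\textbf{Step 3: independence, and the main obstacle.} Finally I would show $X=\{v[i,\rho(i)]\}$ is independent in $H$. Suppose two chosen cells $v[i_1,\rho(i_1)]$ and $v[i_2,\rho(i_2)]$ were adjacent in $H$. Contracting their respective edges $(v[i_t,\rho(i_t)],s_{i_t})$ creates merged vertices $s^\circ_{i_1},s^\circ_{i_2}$ that inherit the $H$-adjacency, so in $G/F$ there is an extra edge between them; I would verify that this pushes the degree of $s^\circ_{i_1}$ (already exactly $d$ in the independent case, as computed in Lemma~\ref{lemma:correct-reduction-PermIS-MaxDegCont-forward}) to $d+1$, contradicting that $F$ is a solution. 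I expect \textbf{Step 1} to be the main obstacle: ruling out all the spurious edge placements cleanly requires a careful case analysis over edges incident to $R$, to $C$, within $V(H)$, and the interaction with pendants, and one must be careful that $F$ merges vertices rather than merely containing specific edges — so the argument should be phrased in terms of witness sets and the degree bound of Observation~\ref{obs:witness-structure-property}(\ref{item:degree-bound}), rather than naively assuming the contracted edges are exactly the solution edges.
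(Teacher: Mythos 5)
Your overall architecture matches the paper's: confine $F$ to edges of the form $(v[i,j],s_i)$, then force one such edge per row and per column, then verify independence via the degree of the merged vertices $s^{\circ}_i$ (your Steps 2 and 3 are essentially the paper's concluding argument, including the degree-$(d+1)$ contradiction for an adjacent pair). The gap is in Step 1, which you correctly flag as the crux but do not carry out, and the mechanism you sketch for it would fail. You claim that contracting any edge other than the intended ones ``either fails to lower its degree or raises some neighbor's degree above $d$.'' This dichotomy is false for several of the edge classes that must be excluded: contracting a pendant edge at $r_i$ (or $c_j$) lowers that vertex's degree to exactly $d$ and raises nobody's degree; contracting an edge $(c_j, v[i,j])$ of $E(C, V(H))$ produces a merged vertex of degree $d+1+\deg_H(v[i,j])$, which can be as small as $d+1$ and is not an immediate obstruction with budget $k-1$; and contracting an edge of $E(V(H),V(H))$ creates a \emph{low}-degree merged vertex while simultaneously reducing the degree of every $s_l$. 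None of these placements is locally infeasible, so no purely local degree analysis can rule them out.

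What actually excludes them is a global budget argument, and this is exactly what the paper's Claims 1 through 5 supply: after wasting one edge on such a contraction, the $k$ row vertices (or a suitably updated copy of $R$) still form $k$ pairwise-disjoint stars whose centers have degree $d+1$ and whose neighborhoods are independent and disjoint, so Lemma~\ref{lemma:disjoint-nbrs} applied to the contracted instance with budget $k-1$, combined with Observation~\ref{obs:reduced-inst-yes}, shows that instance is a \no\ instance; the cases $E(R,V(H))$ and $E(V(H),V(H))$ are handled in the paper by edge-counting arguments over the columns and rows, respectively. You cite Lemma~\ref{lemma:disjoint-nbrs} only in Step 2, where it is not the bottleneck. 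The only classes your local reasoning genuinely covers are $E(C\cup R, S)$ (merged degree at least $2k^2+k-1 \ge d+k$, a \no\ instance with budget $k-1$ by an Observation~\ref{obs:trivial-no}-style bound) and, as it happens, $E(R,V(H))$ (merged degree at least $k^2+2k = d+k$, which would even simplify the paper's Claim 4); for the remaining classes the deferred case analysis cannot be completed by the mechanism you propose, so the proof as planned does not go through without importing the paper's disjoint-stars budget argument into Step 1.
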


To prove the lemma, we first investigate how a solution to $(G, k, d)$ can intersect with edges in $G$.
Recall that for vertex subset $X, Y$, we denote the set of all edges with one endpoint in $X$ and another endpoint in $Y$ by $E(X, Y)$.
Let $P_C$ and $P_R$ be the collection of pendant vertices that are adjacent with $C$ and $R$, respectively.
By construction, edges of $G$ can be partitioned into following five sets: 
$E(C \cup R, P_C \cup P_R)$, $E(C \cup R, S)$, $E(C \cup R, V(H))$, $E(V(H), V(H))$, and $E(S, V(H))$.
We first prove that any solution to $(G, k, d)$ does not intersect with the first four sets.

Suppose $(G, k, d)$ is a \yes\ instance and $F \subseteq E(G)$ be a solution to $(G, k, d)$.
\begin{claim}
\label{claim:IS-backword-start}
$F \cap E(C \cup R, P_C \cup P_R) = \emptyset$.
\end{claim}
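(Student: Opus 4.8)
The plan is to suppose, towards a contradiction, that the given solution $F$ contains an edge of $E(C\cup R,P_C\cup P_R)$ and to exhibit a vertex of $R\cup C\cup S$ whose degree cannot drop to $d$. Recall that $d=k^2+k$ and that, by the degree count preceding the lemma, the vertices of $R\cup C\cup S$ are exactly those of degree $d+1$, while every other vertex of $G$ has degree at most $d$; hence in $G/F$ each of the $3k$ vertices of $R\cup C\cup S$ must have its degree strictly reduced. Since the pendant sets $P_R,P_C$ attach only to $R$ and $C$ respectively, an edge of $E(C\cup R,P_C\cup P_R)$ is either of the form $(r_i,p)$ with $p$ a pendant attached to $r_i$ or of the form $(c_j,p)$ with $p$ a pendant attached to $c_j$, and I would treat these two cases separately.

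The crux is a counting step that, using only the row vertices, forces $F$ to be a very restricted matching. The closed neighbourhoods $N[r_1],\dots,N[r_k]$ are pairwise disjoint, since they involve distinct rows of the table, distinct apices $s_\ell$, distinct vertices $r_\ell$, and disjoint pendant sets. By Observation~\ref{obs:high-deg-vertex}, for every $\ell\in[k]$ some witness set of the $G/F$-witness structure $\calW$ contains at least two vertices of $N[r_\ell]$, and such a set is necessarily big. Writing $W_1,\dots,W_m$ for the big witness sets and $w_t=|W_t|\ge 2$, disjointness gives $\sum_{\ell}|W_t\cap N[r_\ell]|\le w_t$, so a single $W_t$ can contain two vertices of $N[r_\ell]$ for at most $w_t/2$ indices $\ell$. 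As all $k$ indices are served, summing over $t$ yields $\sum_t w_t\ge 2k$. On the other hand $\sum_t(w_t-1)\le|F|\le k$ and $m\le k$ by Observation~\ref{obs:witness-structure-property}(\ref{item:nr-big-witness-set}), so $\sum_t w_t=\sum_t(w_t-1)+m\le 2k$. Every inequality is therefore tight: $m=k$, each $w_t=2$, $|F|=k$, $F$ is a matching, and its $k$ edges are in bijection with $[k]$, the edge serving $r_\ell$ having both endpoints in $N[r_\ell]$.

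Both cases then close quickly. If the offending edge is $(c_j,p)$ with $p$ a pendant of $c_j$, neither endpoint lies in any $N[r_\ell]$ (a column apex and its private pendant meet no closed row-neighbourhood), contradicting that every edge of $F$ is internal to some $N[r_\ell]$. If instead $(r_i,p)\in F$, then $(r_i,p)$ must be the matching edge assigned to $r_i$; because $s_i$ lies in $N[r_\ell]$ only for $\ell=i$ and $(r_i,p)$ is not incident to $s_i$, no edge of $F$ touches $s_i$, so $s_i$ is a singleton witness set. Reducing its degree would require two vertices of $N(s_i)=V(H)\cup C\cup\{r_i\}$ to lie in a common witness set. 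However, for $\ell\ne i$ the intersection $N(s_i)\cap N[r_\ell]$ is exactly the (independent) set of vertices in row $\ell$, so the single edge assigned to $r_\ell$ cannot place two of its endpoints inside $N(s_i)$; and the edge $(r_i,p)$ meets $N(s_i)$ only in $r_i$. Hence $\psi$ is injective on $N(s_i)$ and $\deg_{G/F}(\psi(s_i))=d+1>d$, contradicting that $F$ is a solution.

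I expect the counting step of the second paragraph to be the main obstacle, since it is the only place where non-matching solutions must be excluded: a large witness set could, in principle, merge two vertices of a single row through their common apex $s_\ell$ and thereby repair several vertices of $S$ at once. The pairwise disjointness of the closed row-neighbourhoods, combined with the bound of Observation~\ref{obs:witness-structure-property}(\ref{item:nr-big-witness-set}) on the number of big witness sets, is exactly what collapses every witness set to size two; once $F$ is known to be a matching internal to the $N[r_\ell]$, the independence of each row finishes the argument.
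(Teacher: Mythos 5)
Your proof is correct, but it follows a genuinely different route from the paper's. The paper argues locally about the single offending edge $e$: by Observation~\ref{obs:reduced-inst-yes}, the instance $(G/\{e\},k-1,d)$ would still be a \yes\ instance, while the $k$ disjoint large stars centred at the row vertices (the set $R$ itself in the column--pendant case, or $R$ with $r_i$ replaced by the contracted vertex in the row--pendant case) are fed into Lemma~\ref{lemma:disjoint-nbrs} to conclude that this contracted instance is a \no\ instance --- an immediate contradiction, recycled as a template for the later claims of the section. You instead keep $G$ intact and extract global structure from the $G/F$-witness structure: the pairwise disjointness of $N[r_1],\dots,N[r_k]$ together with Observation~\ref{obs:high-deg-vertex} gives a lower bound of $2k$ on the total size of the big witness sets, while Observation~\ref{obs:witness-structure-property} gives a matching upper bound, and tightness forces $F$ to be a matching of exactly $k$ edges, one lying inside each $N[r_\ell]$; both kinds of offending edges then contradict this structure outright (a column--pendant edge lies in no $N[r_\ell]$, and a row--pendant edge leaves $s_i$ untouched so that its $d+1$ neighbours cannot be merged, using independence of each row). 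The paper's approach buys brevity and modularity; yours buys a stronger intermediate conclusion --- every solution is a matching with one edge per closed row neighbourhood, which would also shorten Claims~\ref{claim:IS-backward-second} to~\ref{claim:IS-backword-end} --- and, notably, it does not need hypothesis $(i)$ of Lemma~\ref{lemma:disjoint-nbrs}. That point is not cosmetic: in this construction $N(r_i)$ is \emph{not} an independent set, because $s_i$ is adjacent to every vertex of row $i$ (indeed, contracting an edge $(s_i,v[i,j])$ inside $N(r_i)$ is exactly how valid solutions repair $r_i$), so the paper's assertion that $R$ satisfies the premise of Lemma~\ref{lemma:disjoint-nbrs} is not literally true and needs repair, whereas your counting uses only disjointness of the closed neighbourhoods and is immune to this issue.
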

\begin{proof} 
Assume that there exist an edge, say $(c_i, v)$, in $F \cap E(C, P_C)$ where vertices $c_i, v$ are in $C$ and $P_C$, respectively.
Note that, instance $(G/\{(c_i, v)\}, k - 1, d)$ and set $R$ satisfy the premise of Lemma~\ref{lemma:disjoint-nbrs}.
Hence, we can conclude that $(G/\{(c_i, v)\}, k - 1, d)$ is a \no\ instance.
This contradicts Observation~\ref{obs:reduced-inst-yes}.
Hence our assumption is wrong and $F \cap E(C, P_C)$ is an empty set.

Assume that there exist edge $(r_i, v)$ in $F \cap E(R, P_R)$ where vertices $r_i, v$ are in $R$ and $P_R$, respectively.
Let $R^{\prime}$ be the set obtained from $R$ by removing $r_i$ and adding the vertex which was introduced while contracting edge $(r_i, v)$.
Note that, instance $(G/\{(r_i, v)\}, k - 1, d)$ and set $R^{\prime}$ satisfy the premise of Lemma~\ref{lemma:disjoint-nbrs}.
Hence, we can conclude that $(G/\{(r_i, v)\}, k - 1, d)$ is a \no\ instance.
This contradicts Observation~\ref{obs:reduced-inst-yes}.
Hence our assumption is wrong and $F \cap E(R, P_R)$ is an empty set.

By the construction of $G$, sets $E(C, P_R)$ and $E(R, P_R)$ are empty.
This implies that there is no edge in $F \cap E(C \cup R, P_C \cup P_R)$. 
\end{proof}

\begin{claim} \label{claim:IS-backward-second}
$F \cap E(C \cup R, S) = \emptyset$.
\end{claim}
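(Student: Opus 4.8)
My plan is to mimic the contract-and-contradict strategy of Claim~\ref{claim:IS-backword-start}, but to derive the contradiction from the sheer degree of the vertex created by the contraction, rather than from Lemma~\ref{lemma:disjoint-nbrs}. First I would record that the edges of $E(C \cup R, S)$ come in exactly two flavours: since $s_i$ is the only vertex of $S$ adjacent to $r_i$, the set $E(R, S)$ consists precisely of the edges $(r_i, s_i)$ for $i \in [k]$; and since every $s_i$ is adjacent to every vertex of $C$, the set $E(C, S)$ consists of all edges $(c_j, s_i)$ with $i, j \in [k]$. Assume, for the sake of contradiction, that the solution $F$ contains some edge $e \in E(C \cup R, S)$. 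Recall that the reduction outputs $d = k^2 + k$, so by Observation~\ref{obs:trivial-no} the instance $(G/\{e\}, k-1, d)$ is a \no\ instance as soon as $G/\{e\}$ has a vertex of degree at least $d + (k-1) + 1 = k^2 + 2k$. I will show that the merged vertex always has degree at least this large, which contradicts Observation~\ref{obs:reduced-inst-yes}.

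For the first flavour, suppose $e = (r_i, s_i)$ and let $w$ be the vertex replacing $\{r_i, s_i\}$ in $G/\{e\}$. By construction $s_i$ is adjacent to all $k^2$ vertices of $V(H)$, and $r_i$ is adjacent to its $k^2$ private pendant vertices; these $2k^2$ vertices are pairwise distinct and none of them equals $r_i$ or $s_i$, so all of them survive as neighbours of $w$. Hence $\deg_{G/\{e\}}(w) \ge 2k^2 \ge k^2 + 2k$ for $k \ge 2$.

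For the second flavour, suppose $e = (c_j, s_i)$ and let $w$ be the corresponding merged vertex. Again $s_i$ contributes all $k^2$ vertices of $V(H)$, while $c_j$ contributes its $k^2 - k + 1$ private pendant vertices; these two groups are disjoint and avoid $c_j, s_i$, so $\deg_{G/\{e\}}(w) \ge k^2 + (k^2 - k + 1) = 2k^2 - k + 1 \ge k^2 + 2k$ for all large $k$. In either case Observation~\ref{obs:trivial-no} declares $(G/\{e\}, k-1, d)$ a \no\ instance, whereas Observation~\ref{obs:reduced-inst-yes} (applied to the solution $F$ containing $e$) declares it a \yes\ instance, a contradiction. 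Therefore no edge of $E(C \cup R, S)$ lies in $F$. The only point that requires care is the degree bookkeeping: one must check that the neighbours inherited through the two endpoints of $e$ are genuinely distinct and that neither endpoint is double-counted. Since $V(H)$ and the pendant sets are pairwise disjoint and contain neither endpoint of $e$, this is immediate, and the crude bound $2k^2 - k + 1$ already clears the threshold $k^2 + 2k$; so I expect the argument to be short, with the bookkeeping being the only mild obstacle.
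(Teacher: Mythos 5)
Your proposal follows essentially the same route as the paper's own proof: assume an edge $e \in E(C \cup R, S)$ lies in $F$, contract it, lower-bound the degree of the merged vertex $w$, conclude from the degree threshold that $(G/\{e\}, k-1, d)$ is a \no\ instance, and contradict Observation~\ref{obs:reduced-inst-yes}. The paper does exactly this for $E(C,S)$ and dismisses $E(R,S)$ as ``similar''; if anything, your citation is the more accurate one, since the relevant tool is indeed Observation~\ref{obs:trivial-no} (the paper nominally invokes Observation~\ref{obs:high-deg-vertex}, which by itself only yields the merging statement, not a \no\ instance). The one quantitative caveat is that your crude neighbour counts are slightly too lossy: in the $E(C,S)$ case the inequality $2k^2 - k + 1 \ge k^2 + 2k$ fails at $k = 2$ (it reads $7 < 8$), so ``for all large $k$'' hides a real, if tiny, hole. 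The paper avoids this by also counting the neighbours $w$ inherits inside $(S \setminus \{s_i\}) \cup (C \setminus \{c_j\})$, giving $\deg(w) \ge 2k^2 + k - 1 \ge d + (k-1) + 1$ for every $k \ge 2$, which matches the implicit assumption already needed in your $E(R,S)$ case; alternatively, one may assume $k \ge 3$ without loss of generality for the purposes of the \ETH\ lower bound. With either repair your argument is complete and coincides with the paper's.
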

\begin{proof} 
Assume that there exist an edge, say $(c_i, s_j)$, in $F \cap E(C, S)$ where vertices $c_i, s_j$ are in $C$ and $S$, respectively.
Let $w$ be the new vertex introduced while contracting edge $(c_i, s_j)$.
In graph $(G/\{(c_i, s_j)\}$, vertex $w$ is adjacent with every vertex in $(S \setminus \{s_j\}) \cup V(H) \cup (C \setminus \{c_i\})$ and with all pendent vertices which were adjacent with $c_i$ in $G$.
Hence, the degree of $w$ in $(G/\{(c_i, s_j)\})$ is at least $2k^2 + k  - 1 \ge (k^2 + k) + (k - 1) + 1 = d + (k - 1) + 1$.
By Observation~\ref{obs:high-deg-vertex}, $(G/\{(c_i, s_j)\}, k - 1, d)$ is a \no\ instance. 
This contradicts Observation~\ref{obs:reduced-inst-yes}.
Hence our assumption is wrong and $F \cap E(C, S)$ is an empty set.
We can conclude that $F \cap E(C, S)$ is an empty set by a similar argument. 
This concludes the proof of the claim. 
\end{proof}

\begin{claim}
$F \cap E(C, V(H)) = \emptyset$.
\end{claim}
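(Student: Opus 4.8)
The claim to establish is that a solution $F$ to the \yes\ instance $(G, k, d)$ contains no edge with one endpoint in $C$ and the other in $V(H)$, that is, $F \cap E(C, V(H)) = \emptyset$. The plan is to proceed by contradiction: assume $F$ contains such an edge $(c_j, v[i,j'])$, where $c_j \in C$ and $v[i, j']$ is a vertex of the table lying in column $j$ (so $j' = j$ by construction, since $c_j$ is adjacent only to vertices in its own column). I would then contract this single edge and examine the degree of the resulting merged vertex, aiming to show it is too large for the remaining budget to repair, thereby contradicting Observation~\ref{obs:reduced-inst-yes}.

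**Carrying out the degree count.**
Let $w$ be the vertex obtained by contracting $(c_j, v[i,j])$ in $G/\{(c_j, v[i,j])\}$. The vertex $w$ inherits the neighborhoods of both $c_j$ and $v[i,j]$. First, $c_j$ contributes its $k^2 - k + 1$ pendant vertices, its $k$ neighbors in $S$ (every $s_\ell$ is adjacent to all of $C$), and the $k$ column vertices of $V(H)$ in column $j$. Second, $v[i,j]$ contributes $r_i$ and $s_i$, together with any of its neighbors inside $H$. Even discarding the $H$-internal neighbors and accounting for the shared column vertices only once, $w$ is adjacent to at least the $k^2 - k + 1$ pendants of $c_j$, the $k$ vertices of $S$, the remaining $k - 1$ column vertices of $V(H)$, and $r_i$. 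This already gives $\deg(w) \ge (k^2 - k + 1) + k + (k-1) + 1 = k^2 + k + 1 = d + 1$; in fact the count comfortably exceeds $d + 1$ once the $s_i$ and the $H$-internal neighbors are included. The key point is to push the degree strictly above $d + (k-1)$, so that, following exactly the template of Claim~\ref{claim:IS-backward-second}, Observation~\ref{obs:high-deg-vertex} forces $(G/\{(c_j, v[i,j])\}, k-1, d)$ to be a \no\ instance.

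**The expected obstacle and its resolution.**
The main subtlety, and the step I would treat most carefully, is that the bound $\deg(w) > d + 1$ alone is not sufficient: a single high-degree vertex can in principle be fixed by contractions within its neighborhood. What is actually needed is the stronger conclusion supplied by Observation~\ref{obs:high-deg-vertex} applied in the reduced instance with budget $k-1$, or alternatively an appeal to Lemma~\ref{lemma:disjoint-nbrs}. The cleanest route is to verify that $\deg(w) \ge d + k$ in $G/\{(c_j, v[i,j])\}$, since $w$ gathers the large pendant set of $c_j$ plus the genuinely distinct neighbors $S \cup \{r_i\}$. Once the degree reaches $d + k = d + (k-1) + 1$, Observation~\ref{obs:high-deg-vertex} (applied to the instance with parameter $k-1$) guarantees that no solution of size $k-1$ can reduce $w$'s degree to at most $d$, because even contracting every available edge can decrease the degree by at most $k-1$. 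This yields the desired contradiction with Observation~\ref{obs:reduced-inst-yes}, completing the argument.
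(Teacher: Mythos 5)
Your degree count contains a genuine error, and it is exactly the error that forces the paper to use a different argument for this claim. When you contract $(c_j, v[i,j])$, the set $S$ is \emph{not} a ``genuinely distinct'' contribution of $v[i,j]$: by construction every vertex of $S$ is adjacent to every vertex of $C$, so $S$ is already contained in $N(c_j)$. The only neighbors that $v[i,j]$ adds beyond $N(c_j)$ are $r_i$ and $N_H(v[i,j])$. Hence the merged vertex $w$ has degree exactly $(k-1) + k + (k^2-k+1) + 1 + \deg_H(v[i,j]) = d + 1 + \deg_H(v[i,j])$, where the terms count the remaining column-$j$ vertices, the set $S$, the pendants of $c_j$, the vertex $r_i$, and the $H$-neighbors of $v[i,j]$. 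Since $v[i,j]$ may have very few (possibly zero) neighbors in $H$, you cannot conclude $\deg(w) \ge d + k$; you are only guaranteed $\deg(w) \ge d+1$. As you yourself observe, a single vertex of degree $d+1$ in an instance with budget $k-1$ is not an immediate contradiction, since one contraction inside its neighborhood could repair it. So the threshold needed for the argument of Claim~\ref{claim:IS-backward-second} (where the merged vertex really does have degree roughly $2k^2$, far above $d+k$) is simply not met here, and Observation~\ref{obs:trivial-no} does not apply; Observation~\ref{obs:high-deg-vertex} alone only says that some two vertices of $N[w]$ must be merged, which is not a contradiction by itself.

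The paper closes this gap with a global argument rather than a local one. In the contracted instance $(G/\{(c_j, v[i,j])\}, k-1, d)$ none of the row vertices $r_1, \dots, r_k$ is affected (no $r_i$ is adjacent to $c_j$), so each still has degree $d+1$, and their neighborhoods are pairwise disjoint. Taking $L^{\circ} = R$ in Lemma~\ref{lemma:disjoint-nbrs} shows that the reduced instance is a \no\ instance, because a budget of $k-1$ edges cannot merge two vertices inside each of $k$ vertex-disjoint stars; this contradicts Observation~\ref{obs:reduced-inst-yes}. You do mention Lemma~\ref{lemma:disjoint-nbrs} as an alternative route, but you never instantiate it --- in particular, you never identify $R$ as the witnessing family --- and the route you actually pursue fails on the degree count above.
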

\begin{proof} 
Assume that there exist an edge, say $(c_j, x_{ij})$, in $F \cap E(C, V(H))$ where vertices $c_j, x_{ij}$ are in $C$ and $V(H)$, respectively.
Note that, instance $(G/\{(c_j, x_{ij})\}, k - 1, d)$ and set $R$ satisfy the premise of Lemma~\ref{lemma:disjoint-nbrs}.
Hence, we can conclude that $(G/\{(c_j, x_{ij})\}, k - 1, d)$ is a \no\ instance.
This contradicts Observation~\ref{obs:reduced-inst-yes}.
Hence our assumption is wrong and $F \cap E(C, V(H))$ is an empty set. \end{proof}

\begin{claim} $F \cap E(R, V(H)) = \emptyset$.
\end{claim}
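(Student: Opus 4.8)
The plan is to proceed exactly as in the proof of Claim~\ref{claim:IS-backward-second}. I would assume, for the sake of contradiction, that some edge $(r_i, v[i, j])$ lies in $F \cap E(R, V(H))$, where $r_i \in R$ and $v[i, j]$ is a vertex in row $i$ of the table. I would then contract this single edge, denote the newly introduced vertex by $w$, put $G' = G/\{(r_i, v[i,j])\}$, and show that $\deg_{G'}(w)$ is large enough to force $(G', k-1, d)$ to be a \no\ instance. Since Observation~\ref{obs:reduced-inst-yes} ensures that contracting an edge of a solution keeps the instance a \yes\ instance, this gives the desired contradiction and establishes $F \cap E(R, V(H)) = \emptyset$.

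The crux of the argument is the degree count for $w$, whose neighborhood is $(N_G(r_i) \cup N_G(v[i,j])) \setminus \{r_i, v[i,j]\}$. From $r_i$, the vertex $w$ inherits the $k^2$ pendant vertices of $P_R$ attached to $r_i$, the vertex $s_i$, and the $k - 1$ remaining vertices of row $i$. From $v[i,j]$, it inherits the whole set $S = \{s_1, \dots, s_k\}$ (recall that every $s_\ell$ is adjacent to all of $V(H)$), the column vertex $c_j$, and the neighbors of $v[i,j]$ in $H$. The only overlap between these two contributions is $s_i$, which already lies in $S$; moreover, since each row of the table is an independent set, the $k - 1$ row-$i$ neighbors are disjoint from the $H$-neighbors of $v[i,j]$, while the pendants and $c_j$ are clearly fresh. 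Hence $\deg_{G'}(w) \ge k^2 + (k-1) + 1 + k = k^2 + 2k = d + k$, where the last equality uses $d = k^2 + k$.

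Finally, I would invoke Observation~\ref{obs:trivial-no} on $(G', k-1, d)$: since $\deg_{G'}(w) \ge d + k = d + (k-1) + 1$, the instance $(G', k - 1, d)$ is a \no\ instance, contradicting Observation~\ref{obs:reduced-inst-yes}. I expect the only delicate point to be the bookkeeping in this degree count, in particular observing that $v[i,j]$ contributes the entire set $S$ as new neighbors while sharing only $s_i$ (and the contracted endpoint $r_i$) with the closed neighborhood of $r_i$, so that a single contraction already exceeds the threshold $d + (k-1) + 1$ corresponding to the reduced budget $k - 1$. Apart from this count, the argument is a direct transcription of the pattern already used in the $E(C, S)$ and $E(C, V(H))$ cases.
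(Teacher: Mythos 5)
Your proof is correct, but it is not the paper's argument for this claim; it takes a genuinely different and more direct route. The paper proves $F \cap E(R, V(H)) = \emptyset$ by a global budget count rather than a local degree count: assuming the intersection is nonempty, it observes that every column vertex $c_j$ has degree $d+1$, so the solution must merge two vertices of $N[c_j]$ for each $j$; letting $J'$ be the set of columns receiving no edge of $F \cap E(V(H), S)$, it argues that only $|J'|$ edges remain to serve the neighborhoods of $2|J'|$-many vertices, forcing $F \setminus E(V(H), S)$ to be a matching whose edges, since each $N(c_j)$ is independent, must lie in $E(C, V(H))$ --- contradicting the previously proven claim that $F \cap E(C, V(H)) = \emptyset$. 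You instead transplant the single-edge-contraction pattern of Claims~\ref{claim:IS-backword-start} and~\ref{claim:IS-backward-second}: contract $(r_i, v[i,j])$, count the degree of the merged vertex $w$, and conclude via Observations~\ref{obs:trivial-no} and~\ref{obs:reduced-inst-yes}. Your count is right: $w$ sees the $k^2$ pendants of $r_i$, the $k-1$ remaining row-$i$ vertices, all $k$ vertices of $S$ (every $s_\ell$ is adjacent to all of $V(H)$), and $c_j$, and these contributions are pairwise disjoint, giving $\deg_{G'}(w) \ge k^2 + 2k = d + k = d + (k-1) + 1$. What your route buys is locality and independence from the machinery of the earlier claims (you need neither Lemma~\ref{lemma:disjoint-nbrs} nor the claim about $E(C, V(H))$); what it costs is that the bound is tight with zero slack --- the threshold of Observation~\ref{obs:trivial-no} for budget $k-1$ is exactly $d + k$, so the observation that $v[i,j]$ contributes the whole of $S$ while $r_i$ alone sees only $s_i$ is doing all the work, precisely as you flagged. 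As a side remark, you also cite the right observation: the paper's analogous Claim~\ref{claim:IS-backward-second} invokes Observation~\ref{obs:high-deg-vertex} at the point where Observation~\ref{obs:trivial-no} is the one actually needed, and your write-up avoids that slip.
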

\begin{proof}
Assume that is $F \cap E(R, V(H))$ is not empty.
As any vertex $c_j$ in $C$ has degree $d + 1$, edges in $F$ merge at least two vertices in $N[c_j]$ (Lemma~\ref{lemma:disjoint-nbrs}).
We argue that if our assumption is correct then there are not enough edges to merge two vertices in each $N[c_j]$.

Let $J'$ be the set of columns such that there is no edge of the form $(x_{ij'}, s)$ in $F$, where $i, j' \in [k]$ and $s \in S$.
Note that set $J'$ is not empty as there are $k$ columns and at most $(k - 1)$ edges in $F \cap E(V(H), S)$.
There are at most $|F| - (k - |J'|) = |J'|$ many edges to merge two vertices in $N[c_j']$ for each $j'$ in $J'$.
For any two different vertices $c_j, c_{j'}$ in $C$, their neighbourhoods outside $S$ do not intersect. Formally, $(N[c_j] \setminus S) \cap (N[c_{j'}] \setminus S) = \emptyset$.  
Hence, $|J'|$ many edges need to cover $2 |J'|$ vertices.
This implies that edges in $F \setminus E(V(H), S)$ form a matching in $G$. 
For any vertex $c_j$ in $C$, its neighborhood is an independent set.
Hence, the only possible way to merge two vertices in each $N[c_{j'}]$ using edges in matching is to contract an edge incident to $c_{j'}$ and one of its neighbors in $V(H)$.
Hence, all the edges in $F \setminus E(V(H), S)$ are in $E(C, V(H))$.
This is a contradiction to Claim~\ref{claim:IS-backward-second} which states that there is no solution edge in $E(C, V(H)$.
Hence our assumption was wrong and $F \cap E(R, V(H)) = \emptyset$. \end{proof}

\begin{claim}
\label{claim:IS-backword-end}
$F \cap E(V(H), V(H)) = \emptyset$.
\end{claim}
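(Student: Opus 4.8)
The plan is to prove something slightly stronger than the claim, namely that any solution $F$ consists \emph{exactly} of $k$ edges of the form $(s_i, v[i, j_i])$, one for each $i \in [k]$. This immediately yields $F \cap E(V(H), V(H)) = \emptyset$, because the entire budget is then spent on edges of $E(S, V(H))$.

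First I would fix the relevant objects. By the four preceding claims every edge of $F$ lies in $E(V(H), V(H)) \cup E(S, V(H))$, so no edge of $F$ is incident to a vertex of $R \cup C$ or to a pendant vertex; in particular $V(F) \subseteq V(H) \cup S$. Let $\calW$ be the $(G/F)$-witness structure of $G$ and let $B$ be the set of vertices lying in big witness sets, so $B \subseteq V(H) \cup S$ and, by Observation~\ref{obs:witness-structure-property}(\ref{item:nr-vertices-big-bag}), $|B| \le 2k$. For each $i \in [k]$ set $A_i := \{v[i,1], \dots, v[i,k]\} \cup \{s_i\}$, the $i$-th row together with $s_i$; these $k$ sets are pairwise disjoint and $\bigcup_i A_i = V(H) \cup S$.

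Next I would establish a matching lower bound $|B \cap A_i| \ge 2$ for every $i$. Each $r_i$ has degree $d+1$, so by Observation~\ref{obs:high-deg-vertex} the solution merges two vertices of $N[r_i]$ into one witness set. Since $r_i$ and the pendants adjacent to $r_i$ are not in $V(F)$, these two merged vertices lie in $N[r_i] \cap (V(H) \cup S) = A_i$ and sit in a common, hence big, witness set. Summing over the disjoint blocks gives $2k \le \sum_i |B \cap A_i| = |B| \le 2k$, so equality holds throughout: $|B \cap A_i| = 2$ for every $i$, the sets $B \cap A_i$ partition $B$, and each such pair lies inside a single witness set. As the $A_i$ are disjoint and each is confined to one witness set, every big witness set is a union of these pairs, so there are at most $k$ big witness sets. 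Writing $F'$ for the union of spanning trees of the witness sets, $|F'| = |B| - (\#\text{big sets}) \ge 2k - k = k$, while $|F'| \le |F| \le k$; hence there are exactly $k$ big witness sets, each equal to one pair $B \cap A_i$ and therefore of size exactly $2$.

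Finally I would translate this structure back into edges. Each size-$2$ witness set $B \cap A_i$ is spanned by a single edge of $F$ with both endpoints in $A_i = \{v[i,1], \dots, v[i,k]\} \cup \{s_i\}$. Since the row $\{v[i,1], \dots, v[i,k]\}$ is independent in $G$, this edge cannot join two row vertices, so it must be $(s_i, v[i, j_i])$ for some $j_i$. These $k$ edges already exhaust the budget, so $F = \{(s_i, v[i, j_i]) : i \in [k]\} \subseteq E(S, V(H))$, giving $F \cap E(V(H), V(H)) = \emptyset$. The main obstacle, and the reason the earlier claims do not suffice on their own, is that two row vertices could in principle be merged \emph{indirectly} through some $s_\ell$ (which is adjacent to all of $V(H)$) rather than by a direct $(s_i, v[i,j])$ edge; the global double counting of $B$ — pinning every big witness set down to size exactly two — is precisely what rules out such wasteful indirect merges.
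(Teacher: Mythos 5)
Your proof is correct, but it takes a genuinely different route from the paper's. The paper argues locally by contradiction: assuming an edge $e \in F \cap E(V(H), V(H))$ exists, it contracts $e$ (invoking Observation~\ref{obs:reduced-inst-yes}) and notes that the remaining $k-1$ edges must still merge two vertices inside each of the $k$ pairwise disjoint closed neighborhoods $N[r_i]$ (Observation~\ref{obs:high-deg-vertex}); since the preceding claim guarantees no solution edge lies in $E(R, V(H))$, those $k-1$ edges would have to cover at least $2k-1$ vertices while having only $2k-2$ endpoints --- a contradiction in three lines. You instead run a global extremal count with no assumed bad edge: with $V(F) \subseteq V(H) \cup S$ secured by the four earlier claims, each disjoint block $A_i = \mathrm{row}_i \cup \{s_i\}$ must contain at least two vertices of big witness sets, the bound $|B| \le 2k$ from Observation~\ref{obs:witness-structure-property}~(\ref{item:nr-vertices-big-bag}) forces equality throughout, and the spanning-tree count $|F'| = |B| - t$ (where $t$ is the number of big witness sets) then pins every big witness set down to a single edge $(s_i, v[i, j_i])$, using the independence of each row. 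Both arguments rest on the same two pillars --- the merge requirement at the degree-$(d+1)$ vertices $r_i$ and the $2k$ endpoint budget --- but yours proves strictly more: any solution is exactly a matching $\{(s_i, v[i, j_i]) : i \in [k]\}$, which already subsumes the ``exactly one edge incident to each $s_l$'' and ``one vertex from each row'' steps that the paper re-derives separately (via Lemma~\ref{lemma:disjoint-nbrs}) at the start of its proof of Lemma~\ref{lemma:correct-reduction-PermIS-MaxDegCont-backword}; adopting your argument would shorten that later proof (though the per-column and independence parts would still be needed). The trade-off is that your tightness chain is longer and every inequality must hold with equality simultaneously --- including, as you correctly flag, the exclusion of wasteful indirect merges of two row vertices through some $s_\ell$, which no single local obstruction rules out --- whereas the paper's local contradiction is self-contained and shorter.
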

\begin{proof}
Assume that there exists an edge, say $(x_{ij}, x_{i'j'})$, in $F \cap E(V(H), V(H))$ for some $i, j, i', j' \in [k]$.
Consider instance $(G/\{(x_{ij}, x_{i'j'})\}, k - 1, d)$.
As any vertex $r_i$ in $R$ has degree $d + 1$, edges in any solution for the instance merge at least two vertices in $N[r_i]$ (Lemma~\ref{lemma:disjoint-nbrs}).
Hence, $F \setminus \{(x_{ij}, x_{i'j'})\}$ merges at least two vertices in $N[r_i]$ for each $r_i$ in $R$. 
Let $Y$ be the set of vertices in $G/\{(x_{ij}, x_{i'j'})\}$ such that $Y$ contains at least two vertices in $N[r_i]$ for every $r_i$ in $R$.
Note that the cardinality of $Y$ is at least $2k - 1$.
By Claim~$4$, there is no edge in $F \cap E(R, V(H)$.
This implies that $(k - 1)$ in $F \setminus \{(x_{ij}, x_{i'j'})\}$ covers at least $2k - 1$ vertices.
This is a contradiction as any edge can cover at most two vertices.
Hence our assumption is wrong and $F \cap E(V(H), V(H)) = \emptyset$. 
\end{proof}

\begin{proof}(of Lemma~\ref{lemma:correct-reduction-PermIS-MaxDegCont-backword})
By Claims~\ref{claim:IS-backword-start} to \ref{claim:IS-backword-end}, every edge in $F$ is of the form $(x_{ij}, s_l)$ for some $i, j, l \in [k]$ where $x_{ij} \in V(H)$ and $s_l \in S$.
Let $X'$ be the collection of vertices in $V(H)$ that are endpoints of some edges in $F$.
The size of $X'$ is at most $k$.
In the remaining part, we argue that $X'$ is an independent set in $H$ and contains one vertex from each row and column.
 
We first argue that for every vertex in $s_l$ in $S$, there is exactly one edge in $F$ which is incident to $s_l$.
Note that $S$ is an independent set and every vertex in it has the degree $d + 1$. 
By Lemma~\ref{lemma:disjoint-nbrs}, edges in $F$ merges at least two vertices in $N[s_l]$ for every $s_l$. 
As $|F| \le k$ and $|S| = k$, there is exactly one edge incident to $s_l$ for every $l \in [k]$. 

We now prove that $X'$ contains one vertex from every row and column. 
Recall that for every $i \in [k]$, the degree of vertex $r_i$ is $d + 1$ in $G$ and $N(r_i)$ contains $s_i$ and all the vertices in $i^{th}$ row.
By Lemma~\ref{lemma:disjoint-nbrs}, for every $i$, edges in $F$ merge at least two vertices in $N[r_i]$.
Hence, the other endpoint of the edge incident $s_i$ is some vertex in $i^{th}$ row.
This implies there exists a vertex in $X'$ from each row.
By similar arguments, we can prove that there exists a vertex in $X'$ from each column.

It remains to argue that $X'$ is an independent set in $H$. 
Define function $\phi : S \rightarrow V(H)$ as follows: For every $s_l \in S$, assign $\phi(s_l) = x_{ij}$ if $(x_{ij}, s_l) \in F$ for some $i, j \in [k]$.
As there is exactly one edge in $F$ which is incident to $s_l$, function $\phi$ is well defined.
Assume that there exists an edge $(\phi(s_l), \phi(s_{l'}))$ in $H$.
Let $s^{\circ}_l$ and $s^{\circ}_{l'}$ be the two new vertices added while contracting edges $(s_l, \phi(s_l))$ and $(s_{l'}, \phi(s_{l'}))$.
Note that $s^{\circ}_l$ is adjacent with $s^{\circ}_{l'}$, one vertex in $R$, $k$ vertices in $R C$, and $k^2 - 1$ vertices in $V(H)$.
Hence, degree of $s^{\circ}_l$ is $1 + 1 + k + k^2 - 1 = d + 1$.
This contradicts the fact that every vertex in $G/F$ has degree at most $d$. 
Hence our assumption was wrong and there is no edge $(\phi(s_l), \phi(s_{l'}))$ in $H$.
Since, $s_l, s_{l'}$ are any two arbitrary vertices in $S$, we can conclude that $X'$ is an independent set in $H$.

Hence, if $(G, k, d)$ is a \yes\ instance than so is $(H, k)$.
\end{proof}

We are now in a position to present a proof of Theorem~\ref{thm:brute-force-algo-lb} using Proposition~\ref{prop:ind-set-lower-bound}, Lemma~\ref{lemma:correct-reduction-PermIS-MaxDegCont-forward}, and Lemma~\ref{lemma:correct-reduction-PermIS-MaxDegCont-backword}.

\begin{proof}(of Theorem~\ref{thm:brute-force-algo-lb})
Assume, for the sake of contradiction, that there is an algorithm, say $\calA$, that given any instance $(G, k, d)$ of \textsc{MDC} runs in time $n^{o(k)}$ and correctly determines whether it is a \yes\ instance or not.
Using this algorithm as subroutine, we construct an algorithm to solve $(k \times k)$-\textsc{PIS}.

Consider Algorithm $\calB$ that given an instance $(H, k)$ of $(k \times k)$-\textsc{PIS} construct an instance $(G, k, d)$ of \textsc{MDC} as described in the reduction.
Then, it calls Algorithm $\calA$ as subroutine on instance $(G, k, d)$.
If Algorithm $\calA$ returns \yes\ then Algorithm $\calB$ returns \yes\ otherwise it returns \no.
The correctness of Algorithm $\calB$ follows from the correctness of Algorithm $\calA$, Lemma~\ref{lemma:correct-reduction-PermIS-MaxDegCont-forward}, and Lemma~\ref{lemma:correct-reduction-PermIS-MaxDegCont-backword}.
We now argue the running time of Algorithm $\calB$.
By the description of the reduction, it is easy to see that given instance $(H, k)$, the algorithm computes instance $(G, k, d)$ in time polynomial in $|V(H)| \in \calO(k^2)$ and $|V(G)| = n \in \calO(k^2)$.
Hence, the total running time of the algorithm is $n^{o(k)} = k^{o(k)}$.

This implies there is an algorithm to solve $(k\times k)$-\textsc{PIS} in time $k^{o(k)}$.
But this contradicts Proposition~\ref{prop:ind-set-lower-bound}.
Hence, our assumption is wrong which concludes the proof.
\end{proof}
\section{A Different \FPT\ Algorithm}
\label{sec:fpt-algo}
In this section, we present a different \FPT\ algorithm for \textsc{Maximum Degree Contraction}.
We introduce a variation of the problem called \textsc{Labeled-Maximum Degree Contraction (Labeled-MDC)}.
We present an \FPT\ algorithm for \textsc{Labeled-MDC} and use it as a subroutine to present an \FPT\ algorithm for \textsc{MDC}.

Informally, an instance of \textsc{Labeled-MDC} is an instance of \textsc{MDC} along with a labeling of vertices in the graph.
Every vertex has a red or blue label.
We are only interest in a solution that satisfies the following properties:
$(1)$ every edge has red labelled endpoints, and
$(2)$ for any red-labelled maximal connected component, a solution either spans none or all the vertices in that component.
We remark that because of the second condition, this problem is \emph{not} a restricted version of \textsc{MDC}.
We formally define \textsc{Labeled-MDC} as follows.

\defparproblem{\textsc{Labeled-MDC}}{Graph $G$, a partition $V_r, V_b$ of $V(G)$, and integers $k, d$}{$k + d$}{Does there exist a subset $F$ of $E(G)$ of size at most $k$ such that $(a)$ every vertex in $G/F$ has degree at most $d$; $(b)$ $V(F) \subseteq V_r$; and $(c)$ for a connected component $C$ of $G[V_r]$, if $C \cap V(F) \neq \emptyset$ then $C \subseteq V(F)$.
}

We say a set of edges $F$ is a \emph{solution} to instance $(G, (V_r, V_b), k, d)$ if the number of edges in $F$ is at most $k$, the maximum degree of graph $G/F$ is at most $d$, $V(F) \subseteq V_r$ and for a connected component $C$ of $G[V_r]$, if $C \cap V(F) \neq \emptyset$ then $C \subseteq V(F)$.

It is easy to see that if $(G, (V_r, V_b), k, d)$ is a \yes\ instance of \textsc{Labeled-MDC} then $(G, k, d)$ is a \yes\ instance of \textsc{MDC}.
Let $\calU$ be the family of all subsets of $V(G)$.
If $(G, k, d)$ is a \yes\ instance of \textsc{MDC} then $(G, (V_r, V(G) \setminus V_r), k, d)$ is a \yes\ instance of \textsc{Labeled-MDC} for some set $V_r$ in $\calU$.
We use \emph{universal sets} to construct a `small’ family of subsets of $V(G)$ that suffices for our purpose.
We assume that there is a unique integer in $[n]$ for every vertex in $V(G)$.
We use a subset of $[n]$ and a corresponding subset of $V(G)$ interchangeably. 
\begin{definition}[Universal Sets] An $(n, l)$-universal set is a family $\calU$ of subsets of $[n]$ such that for any $S \subseteq [n]$ of size $l$, the family $\{A \cap S\ |\ A \in \calU \}$ contains all subsets of $S$.
\end{definition}
\begin{proposition}[\cite{alon1995color}]\label{prop:universal-set} For any $n, l \ge 1$ one can construct an $(n, l)$-universal set of size $2^{\calO(l)} \cdot \log(n)$ in time $2^{\calO(l)} \cdot n \log(n)$.
\end{proposition}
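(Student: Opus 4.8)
The plan is to derive the universal family from a \emph{perfect hash family} together with the power set of a small range. Recall that an $(n,l)$-perfect hash family is a collection $\mathcal{H}$ of functions $h : [n] \to [l]$ such that every $l$-element subset $S \subseteq [n]$ is mapped injectively (equivalently, bijectively onto $[l]$) by at least one $h \in \mathcal{H}$. Given such a family, I would set $\calU = \{\, h^{-1}(B) \mid h \in \mathcal{H},\ B \subseteq [l] \,\}$, where $h^{-1}(B) = \{x \in [n] \mid h(x) \in B\}$. Thus the construction of $\calU$ is reduced entirely to the construction of $\mathcal{H}$, since the range $[l]$ has only $2^l$ subsets.

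To verify universality, fix any $S$ with $|S| = l$ and any target $T \subseteq S$. Choose $h \in \mathcal{H}$ injective on $S$ and put $B = h(T) \subseteq [l]$. Because $h$ is injective on $S$, for every $x \in S$ we have $h(x) \in B$ if and only if $x \in T$; hence $h^{-1}(B) \cap S = T$. As $T$ ranges over all subsets of $S$, the family $\{A \cap S \mid A \in \calU\}$ contains every subset of $S$, which is exactly the defining property. The size of $\calU$ is $|\mathcal{H}| \cdot 2^l$, so a family $\mathcal{H}$ of size $2^{\calO(l)} \log n$ yields $|\calU| = 2^{\calO(l)} \log n$, and enumerating all preimages $h^{-1}(B)$ takes $2^{\calO(l)} \cdot n \log n$ time on top of the cost of building $\mathcal{H}$.

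It remains to construct $\mathcal{H}$. For existence, a uniformly random $h : [n] \to [l]$ is injective on a fixed $l$-set with probability $l!/l^l \ge e^{-l}$, so with $m$ independent functions the probability that all of them fail on that set is at most $(1 - e^{-l})^m \le e^{-m e^{-l}}$; a union bound over the at most $n^l$ sets shows that $m > e^l \cdot l \ln n$, i.e.\ $m = 2^{\calO(l)} \log n$, functions suffice. I would then make this constructive by the two-stage splitter approach underlying~\cite{alon1995color}: first apply a deterministically constructible family $[n] \to [l^2]$ that is injective on every $l$-set (coarse perfect hashing, obtainable with $l^{\calO(1)} \log n$ functions), which shrinks the effective universe to size polynomial in $l$; then compose with a perfect hash family $[l^2] \to [l]$ whose size depends on $l$ alone and which can be found by exhaustive search or by derandomizing the probabilistic bound over the now small universe. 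Composing the two stages gives $\mathcal{H}$ of size $2^{\calO(l)} \log n$ in time $2^{\calO(l)} \cdot n \log n$.

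The hard part will not be the universality argument, which is routine, but the \emph{deterministic} construction meeting both the size and the time bound. Concretely, the delicate step is the first stage: producing, deterministically and in time $2^{\calO(l)} \cdot n \log n$, a small family that is injective on every $l$-subset of $[n]$ while keeping the range polynomial in $l$. This is where the splitter machinery and its near-optimal derandomization do the real work, and it is precisely the content imported from~\cite{alon1995color}. Once the universe has been reduced to size $\mathrm{poly}(l)$, everything is low-stakes, since the dependence on $n$ has already been absorbed into the first stage.
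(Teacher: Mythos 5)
The paper does not prove this proposition at all: it is imported verbatim from the cited literature (the splitter-based construction of Naor--Schulman--Srinivasan underlying \cite{alon1995color}), so there is no in-paper argument to compare against. Your reconstruction is correct and follows exactly the standard route behind that citation --- the reduction from universal sets to perfect hash families (your injectivity argument giving $h^{-1}(B) \cap S = T$ is sound, as is the size/time accounting $|\mathcal{H}| \cdot 2^l$), and the one step you defer, the deterministic two-stage splitter construction of $\mathcal{H}$ within the stated time bound, is precisely the content the paper itself imports by citation, so deferring it is legitimate here.
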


In the following lemma, we argue that an \FPT\ algorithm for \textsc{Labeled-MDC} leads to an \FPT\ algorithm for \textsc{MDC}.
\begin{lemma}\label{lemma:fpt-col-to-MDC} Suppose there is an algorithm that given an instance $(G, (V_r, V_b), k, d)$ of \textsc{Labeled-MDC} runs in time $f(k, d) \cdot n^{\calO(1)}$ and correctly determines whether it is a \yes\ instance.
Then, there is an algorithm that given an instance $(G, k, d)$ of \textsc{MDC} runs in time $2^{\calO(dk)} \cdot f(k, d) \cdot n^{\calO(1)}$ and correctly determines whether it is a \yes\ instance.
\end{lemma}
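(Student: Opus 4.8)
The goal is to reduce solving \textsc{MDC} on $(G,k,d)$ to polynomially many calls of the \textsc{Labeled-MDC} algorithm, paying only a $2^{\calO(dk)}$ overhead. The key structural fact is Observation~\ref{obs:witness-structure-property}~(\ref{item:nr-vertices-big-bag}): in any solution, the big witness sets together contain at most $2k$ vertices of $G$. Moreover, by Observation~\ref{obs:high-deg-vertex} combined with Observation~\ref{obs:bound-large-deg-vertices}, all the ``interesting'' action of a solution is confined to a bounded neighbourhood: every high-degree vertex must have two vertices of its closed neighbourhood merged, and the number of high-degree vertices is at most $k(d+2)$. The vertices actually touched by a solution $F$, namely $V(F)$, number at most $2k$, and they all lie inside the set $N[L]$ where $L$ is the set of vertices of degree at least $d+1$. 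By Observation~\ref{obs:trivial-no} we may assume $\Delta(G)\le k+d$, so $|N[L]|\le k(d+2)\cdot(k+d+1)$, which is polynomial in $k$ and $d$.

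\textbf{Key steps.}
First I would handle the trivial cases: if $\Delta(G)\le d$ return \yes, and if some vertex has degree exceeding $k+d$ return \no\ by Observation~\ref{obs:trivial-no}. Otherwise compute $L$, the set of vertices of degree at least $d+1$; if $|L|>k(d+2)$ return \no\ by Observation~\ref{obs:bound-large-deg-vertices}. Now $V(F)\subseteq N[L]$ for any solution $F$, and $N[L]$ is a set of size $N\in k^{\calO(1)}\cdot\POLY(d)$. The plan is to guess the set $V(F)$ of at most $2k$ vertices touched by the solution, but listing all such subsets directly costs $N^{\calO(k)}$, which is too slow. Instead I would invoke Proposition~\ref{prop:universal-set} to build, on the ground set $N[L]$ of size $N$, an $(N,2k)$-universal set $\calU$ of size $2^{\calO(k)}\cdot\log N$ in time $2^{\calO(k)}\cdot N\log N$. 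For a genuine solution $F$, the set $S:=V(F)$ has size at most $2k$; by the universal-set property there is some $A\in\calU$ with $A\cap S=S$, i.e.\ $A\supseteq V(F)$, while $A$ avoids the rest of $S$'s complement appropriately. I would then, for each $A\in\calU$, set $V_r:=A$ and $V_b:=V(G)\setminus A$ and call the \textsc{Labeled-MDC} subroutine on $(G,(V_r,V_b),k,d)$, returning \yes\ the moment any call does.

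\textbf{Correctness and the main obstacle.}
The forward direction is immediate from the remarks preceding the lemma: any \yes\ instance of \textsc{Labeled-MDC} yields a \yes\ instance of \textsc{MDC}. For the reverse, given a solution $F$ to $(G,k,d)$, choosing $A\in\calU$ that agrees with $V(F)$ on the $2k$-element set $V(F)$ gives $A\supseteq V(F)$; then $F$ witnesses that $(G,(A,V(G)\setminus A),k,d)$ is a \yes\ instance of \textsc{Labeled-MDC}, since $V(F)\subseteq A=V_r$. The total running time is $|\calU|\cdot f(k,d)\cdot n^{\calO(1)} = 2^{\calO(k)}\cdot\log N\cdot f(k,d)\cdot n^{\calO(1)} = 2^{\calO(dk)}\cdot f(k,d)\cdot n^{\calO(1)}$, as required. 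The main obstacle I anticipate is the labeling condition~$(c)$ of \textsc{Labeled-MDC}: the universal set guarantees a red set $A$ containing $V(F)$, but a solution must contract \emph{whole} red connected components, whereas $F$ might only partially span a component of $G[A]$. The careful point is therefore to choose the universal-set parameter so that $A$ can be taken to equal $V(F)$ exactly on the relevant coordinates---the $(N,2k)$-universal set supplies, for the $2k$-element set $V(F)$, a member $A$ with $A\cap V(F)=V(F)$ and we may further ensure (by the defining property applied to $V(F)$ together with a few boundary vertices) that $G[A]$'s components coincide with the connected components of $F$, so that condition~$(c)$ is met. Verifying this alignment between the universal-set guarantee and the component-saturation requirement is the delicate step; everything else is bookkeeping on sizes and running times.
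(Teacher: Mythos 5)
Your overall architecture matches the paper's proof (handle trivial cases, build a universal family, try each member as $V_r$, call the \textsc{Labeled-MDC} oracle), and you correctly identified the decisive obstacle: condition~$(c)$, which forces solutions to saturate whole red components. But your resolution of that obstacle is where the proof breaks. An $(N,2k)$-universal set only controls the trace of family members on sets of size $2k$; it guarantees some $A$ with $A \cap V(F) = V(F)$, but it gives you \emph{no} control over $A$ on $N(V(F))$. If $A$ contains even one vertex $v \in N(V(F))$, then $v$ lies in the same component of $G[A]$ as part of $V(F)$, that component intersects $V(F)$ but is not contained in it, and $F$ fails condition~$(c)$ --- so the backward direction collapses. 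Your proposed patch (``$V(F)$ together with a few boundary vertices'') is exactly the right idea, but it is not a few vertices and it does not fit your parameter: the paper shows, via Observation~\ref{obs:witness-structure-property}~(\ref{item:nr-big-witness-set}) and the degree bound on $G/F$, that $|N(V(F))| \le kd$ (each vertex of $N(V(F))$ is a singleton witness set adjacent to one of the at most $k$ big-witness vertices, each of degree at most $d$). One must therefore take a $(n, 2k+kd)$-universal set and apply its defining property to the set $N[V(F)]$ of size at most $2k+kd$, extracting an $A$ with $A \cap N[V(F)] = V(F)$; this simultaneously puts $V(F)$ in red and all of $N(V(F))$ in blue, which is what makes condition~$(c)$ hold. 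The tell-tale sign of the gap is your running time: your family has size $2^{\calO(k)} \log N$, yet the lemma's statement demands (and needs) a $2^{\calO(dk)}$ factor --- that factor is precisely the size of the $(n,2k+kd)$-universal family, so a $2^{\calO(k)}$-size family cannot be carrying the full burden of the argument.

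A secondary error: your restriction of the ground set to $N[L]$ is unsound, because $V(F) \subseteq N[L]$ can fail. A witness set may contain ``connector'' vertices outside $N[L]$: e.g., if $v$ has degree $d+1$ with an independent neighborhood, and two neighbors $u_1,u_2$ of $v$ are joined only through a degree-$2$ vertex $w \notin N[L]$, then the cheapest way to lower $\deg(v)$ may be to contract the path $u_1 w u_2$, putting $w \in V(F) \setminus N[L]$. This flaw is easily repaired --- the paper simply builds the universal family over all of $[n]$, which costs only a $\log n$ factor in the family size --- but as written your correctness argument for the existence of a suitable $A$ quantifies over the wrong ground set.
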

\begin{proof}
Let $\calA$ be an algorithm that given an instance $(G, (V_r, V_b), k, d)$ of \textsc{Labeled-MDC} runs in time $f(k, d) \cdot n^{\calO(1)}$ and correctly determines whether it is a \yes\ instance.
We first describe an algorithm for \textsc{MDC} that uses $\calA$ as a subroutine.
For the input $(G, k, d)$, the algorithm constructs a $(U, 2k + kd)$-universal family $\calU$ using Proposition~\ref{prop:universal-set}.
For every set $V_r$ in $\calU$, the algorithm runs Algorithm~$\calA$ with input $(G, (V_r, V(G) \setminus V_r), k, d)$. 
The algorithm returns \yes\ if Algorithm~$\calA$ returns \yes\ for one of these inputs otherwise it returns \no.
This completes the description of the algorithm.
The running time of the algorithm follows from the description and Proposition~\ref{prop:universal-set}.
In the remaining proof, we argue the correctness of the algorithm. 
More precisely, we prove that $(G, k, d)$ is a \yes\ instance of \textsc{MDC} if and only if there is a subset $V_r$ in $\calU$ such that 
$(G, (V_r, V(G) \setminus V_r), k, d)$ is a \yes\ instance of \textsc{Labeled-MDC}.

Suppose that $(G, k, d)$ is a \yes\ instance of \textsc{MDC} and let $F$ be a solution to it.
Note that $|V(F)| \le 2k$.
We first argue that the number of vertices in $N(V(F))$ is at most $kd$.
Let $\calW$ be the $G/F$-witness structure of $G$ and $\psi : V(G) \rightarrow V(G/F)$ be the corresponding function.
Consider an arbitrary vertex $v$ in $N(V(F))$.
As $v$ is not in $V(F)$, $\psi(v)$ corresponds to a small witness set in $\calW$.
As $v$ is in $N(V(F))$, $\psi(v)$ is adjacent to a vertex in $G/F$ that corresponds to a big witness set in $\calW$.
By Observation~\ref{obs:witness-structure-property} (\ref{item:nr-big-witness-set}), there are at most $k$ big witness sets in $\calW$.
Since the maximum degree of $G/F$ is at most $d$, there are at most $kd$ small witness sets in $\calW$ that are adjacent with some big witness set.
Hence, there are at most $kd$ vertices in $N(V(F))$.
As $\calU$ is a $(n, 2k + dk)$-universal set and $|N[V(F)]| \le 2k + dk$, there exists a set $A$ in $\calU$ such that the family $\{A \cap N[V(F)]~|\ A \in \calU \}$ contains all subsets of $N[V(F)]$.
This implies, there exists a set, say $V_r$, such that $V_r \cap N[V(F)] = V(F)$.
We argued that $(G, (V_r, V(G) \setminus V_r), k, d)$ is a \yes\ instance of \textsc{Labeled-MDC}.

Note that $G/F$ has maximum degree at most $d$ and $V(F) \subseteq V_r$.
We need to prove that for a connected component $C$ of $G[V_r]$ if $C \cap V(F) \neq \emptyset$ then $C \subseteq V(F)$. 
Assume that there exits a connected component $C$ of $G[V_r]$ such that $C \cap V(F) \neq \emptyset$ and $C \setminus V(F) \neq \emptyset$.
As $C$ is a connected component and $C \cap V(F) \neq \emptyset$, there exists a vertex $v$ in $C \setminus V(F)$ that is adjacent with some vertex in $V(F)$.
Hence, there is a vertex in $N(V(F)) \cap V_r$.
This contradicts the fact that $V_r \cap N[V(F)] = V(F)$.
Hence, our assumption is wrong and $C \setminus V(F)$ is an empty set.
This implies $(G, (V_r, V(G)\setminus V_r), k, d)$ is a \yes\ instance of \textsc{Labeled-MDC}. 
As mentioned before, it is easy to see that if $(G, (V_r, V_b), k, d)$ is a \yes\ instance of \textsc{Labeled-MDC} then $(G, k, d)$ is a \yes\ instance of \textsc{MDC}. This concludes the proof of the lemma.
\end{proof}

In the remaining section, we present a recursive algorithm for \textsc{Labeled-MDC}.
We start with the following simple reduction rules.
\begin{reduction rule}\label{rr:trivial-yes} For an instance $(G, (V_r, V_b), k, d)$, if the maximum degree of vertices in $G$ is at most $d$ and $k \ge 0$ then return a \yes\ instance.
\end{reduction rule}

It is easy to see that the first reduction rule is safe.
Recall that a set of edges $F$ is called solution to $(G, (V_r, V_b), k, d)$ if the number of edges in $F$ is at most $k$, the maximum degree of graph $G/F$ is at most $d$, $V(F) \subseteq V_r$, and for a connected component $C$ of $G[V_r]$, if $C \cap V(F) \neq \emptyset$ then $C \subseteq V(F)$.
Consider a connected component $C$ of $G[V_r]$.
If $|C| = 1$ then no solution edge can be incident to it. 
Also, if $|C| \ge 2k + 1$ then because of the last property and the fact that $|V(F)| \le 2k$, no solution edge can be incident to vertices in $C$.
These simple observations prove that the following reduction rule is safe.
\begin{reduction rule}\label{rr:size-red-comp} 
For an instance $(G, (V_r, V_b), k, d)$, if there is a connected component, say $C$, of $G[V_r]$ such that $|C| = 1$ or $|C| \ge 2k + 1$ then move $C$ from $V_r$ to $V_b$ i.e. return instance $(G, (V_r \setminus C, V_b \cup C), k, d)$.
\end{reduction rule}

By Observation~\ref{obs:high-deg-vertex}, vertex $v$ in $V_b$ can be adjacent to at most $d + k$ vertices in $V_r$. 
The following reduction rule ensures that the neighbors of $v$ in $V_r$ are not spread across many connected components. 
\begin{reduction rule}\label{rr:blue-nbr} For an instance $(G, (V_r, V_b), k, d)$, if there exists a vertex, say $v$, in $V_b$ for which $N_G(v)$ intersects with $d + 1$ different connected components of $G[V_r]$ then return a \NO\ instance.
\end{reduction rule}
\begin{lemma} \label{lemma:rr-blue-nbr} Reduction Rule~\ref{rr:blue-nbr} is safe.
\end{lemma}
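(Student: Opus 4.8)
The plan is to prove the contrapositive: I would show that whenever there is a vertex $v \in V_b$ whose neighborhood $N_G(v)$ meets at least $d+1$ distinct connected components of $G[V_r]$, the instance $(G, (V_r, V_b), k, d)$ cannot admit a solution. First I would recall the defining constraints on a \textsc{Labeled-MDC} solution $F$: namely $V(F) \subseteq V_r$, and for each connected component $C$ of $G[V_r]$ either $C \cap V(F) = \emptyset$ or $C \subseteq V(F)$. The key structural consequence I want to exploit is that, because $v \in V_b$, no edge of $F$ can be incident to $v$, so in the witness structure $\mathcal{W}$ of $G/F$ the vertex $v$ sits in a singleton (small) witness set. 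Hence $v$ retains a distinct neighbor in $G/F$ for every witness set that its neighborhood maps onto.

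The heart of the argument is to count the witness sets that $N_G(v)$ can be collapsed into. Since each edge of $F$ lives entirely inside a single component of $G[V_r]$, and two vertices of $N_G(v)$ lying in \emph{different} components of $G[V_r]$ can never be merged by $F$ (there is no path of $F$-edges between distinct components), each of the $d+1$ components contributing a neighbor of $v$ yields at least one distinct witness set adjacent to $\psi(v)$. I would formalize this by letting $C_1, \dots, C_{d+1}$ be the distinct components meeting $N_G(v)$ and picking a representative neighbor $u_i \in N_G(v) \cap C_i$ for each $i$; then $\psi(u_1), \dots, \psi(u_{d+1})$ are pairwise distinct, since $\psi$ can only identify vertices joined within one witness set, and a witness set is connected in $G$, hence confined to a single component of $G[V_r]$ (as witness sets of nontrivial size consist of $V_r$-vertices). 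This gives $d+1$ distinct neighbors of $\psi(v)$ in $G/F$, contradicting the maximum-degree-at-most-$d$ requirement.

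The step I expect to be the main obstacle is justifying cleanly that a big witness set is contained in one connected component of $G[V_r]$, so that two neighbors of $v$ in different components genuinely cannot collapse together. This relies on two facts: every witness set is connected in $G$ (Definition~\ref{def:graph-contractioon}), and every vertex spanned by $F$ lies in $V_r$ (the constraint $V(F) \subseteq V_r$ together with the component-closure property). A connected subset of $V(G)$ all of whose nontrivial members are in $V_r$ and which is connected via $F$-edges must lie inside one $G[V_r]$-component; I would argue that any edge of $F$ connecting two vertices of a big witness set is an edge within $G[V_r]$, so the whole set is $G[V_r]$-connected. Once this is pinned down, the degree contradiction is immediate and the safety of the reduction rule follows, since deleting a solution is impossible precisely when the premise holds, so returning a \no\ instance is correct.

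\begin{proof}
We prove the contrapositive: we show that if there is a vertex $v$ in $V_b$ such that $N_G(v)$ intersects $d + 1$ distinct connected components of $G[V_r]$, then $(G, (V_r, V_b), k, d)$ is a \no\ instance.

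Assume, for the sake of contradiction, that there is such a vertex $v$ and that $F$ is a solution to $(G, (V_r, V_b), k, d)$. Let $\mathcal{W}$ be the $G/F$-witness structure of $G$ and let $\psi : V(G) \rightarrow V(G/F)$ be the corresponding function. Since $V(F) \subseteq V_r$ and $v \in V_b$, no edge in $F$ is incident to $v$; hence $W(\psi(v)) = \{v\}$ is a small witness set.

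Let $C_1, C_2, \dots, C_{d+1}$ be $d + 1$ distinct connected components of $G[V_r]$ that intersect $N_G(v)$, and for each $i \in [d+1]$ fix a vertex $u_i \in N_G(v) \cap C_i$. We claim that $\psi(u_1), \psi(u_2), \dots, \psi(u_{d+1})$ are pairwise distinct. Suppose $\psi(u_i) = \psi(u_{i'})$ for some $i \neq i'$. Then $u_i$ and $u_{i'}$ lie in the same witness set $W$, which is a big witness set. By the definition of a witness structure, $G[W]$ is connected, and $F$ contains a spanning tree of $W$. Every edge of this spanning tree lies in $F$, so both its endpoints are in $V(F) \subseteq V_r$; thus $W \subseteq V_r$ and $G[W]$ is a connected subgraph of $G[V_r]$. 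Consequently all vertices of $W$ lie in a single connected component of $G[V_r]$, contradicting the fact that $u_i \in C_i$ and $u_{i'} \in C_{i'}$ with $C_i \neq C_{i'}$. Hence the values $\psi(u_1), \dots, \psi(u_{d+1})$ are pairwise distinct.

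Since each $u_i \in N_G(v)$ and $W(\psi(v)) = \{v\}$, each edge $(v, u_i)$ witnesses that $E(W(\psi(v)), W(\psi(u_i)))$ is non-empty, so $\psi(u_i) \in N_{G/F}(\psi(v))$ for every $i \in [d+1]$. Therefore $\psi(v)$ has at least $d + 1$ distinct neighbors in $G/F$, contradicting the fact that the maximum degree of $G/F$ is at most $d$. Hence $F$ cannot be a solution, and $(G, (V_r, V_b), k, d)$ is a \no\ instance. This establishes the safety of Reduction Rule~\ref{rr:blue-nbr}.
\end{proof}
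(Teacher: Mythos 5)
Your proof is correct and follows essentially the same route as the paper: pick one representative neighbor $u_i$ from each of the $d+1$ components, use $V(F) \subseteq V_r$ to argue that the $\psi(u_i)$ are pairwise distinct and that $\psi(v)$ is distinct from all of them, and conclude that $\psi(v)$ has degree at least $d+1$ in $G/F$, a contradiction. The only difference is cosmetic: you spell out in more detail why a big witness set (being a connected component of $(V(F),F)$ with $V(F) \subseteq V_r$) lies inside a single component of $G[V_r]$, a step the paper states more tersely.
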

\begin{proof}
Assume that $(G, (V_r, V_b), k, d)$ is a \yes\ instance.
Let $F$ be its solution and it contracts $G$ to $G/F$ via mapping $\psi$.
Suppose $C_1, C_2, \dots, C_{d+1}$ are connected components of $G[V_r]$ such that $C_i \cap N(v) \neq \emptyset$ for $i \in [d + 1]$.
For every $i$, consider a vertex, say $u_i$, in $C_i \cap N(v)$.
Let $U = \{u_1, u_2, \dots, u_{d + 1}\}$.
Define $\psi(U) = \bigcup_{u \in U} \psi(u)$.
For $i, j \in [d + 1]$, $i \neq j$ implies $\psi(u_i) \neq \psi(u_j)$ as $C_i$ and $C_j$ are two different connected components of $G[V_r]$ and $V(F) \subseteq V_r$.
This implies $|\psi(U)| = |U| = d + 1$.
As $V(F) \subseteq V_r$ and $v \in V_b$, $F$ does not contain an edge incident to $v$.
Hence, $\psi(v) \neq \psi(u_i)$ for any $i \in [d + 1]$.
As $\psi(U) \subseteq N_{G/F}(\psi(v))$ and $|\psi(U)| \ge d + 1$, vertex $\psi(v)$ is adjacent with $d + 1$ or more vertices in $G/F$.
This contradicts the fact that the maximum degree of vertices in $G/F$ is at most $d$.
Hence, our assumption was wrong and $(G, (V_r, V_b), k, d)$ is a \no\ instance.
\end{proof}

The algorithm exhaustively applies the reduction rules mentioned above.
On a reduced instance, the algorithm creates multiple instances using the following subroutine.
For an instance $(G, (V_r, V_b), k, d)$, a subset $R$ of $V_r$, and a $(d + 1)$-coloring of $R$, the subroutine creates a new instance by contracting each colored component of $R$ into a single vertex, and (re-)label it blue. 
We need the notion of `valid coloring’ to filter out colorings that will not produce a `smaller’ instance.
For graph $H$, a vertex coloring $\phi : V(H) \rightarrow [d + 1]$ is said to be a \emph{valid coloring} if every monochromatic connected component is of size at least two.
We now describe the subroutine.

\paragraph*{Subroutine \texttt{Colorwise-Contraction}} This subroutine takes as an input an instance $(G, (V_r, V_b), k, d)$ of \textsc{Labeled-MDC}, a non-empty subset $R$ of $V_r$, and a valid coloring $\phi$ of $G[R]$.
It returns another instance of \textsc{Labeled-MDC}. 
It initializes $G’ = G$, $V_r’ = V_r$, $V_b’ = V_b$, and $k’ = k$.
For a monochromatic connected component $C$ of $G[R]$, the subroutine finds a spanning tree of $G[C]$ and contracts all edges in it.
Let $v_C$ be the vertex obtained at the end of this series of edge contractions.
It updates $V’_r = V_r \setminus C$, $V’_b = V_b \cup \{v_C\}$ and reduces $k$ by $|C| - 1$.
The subroutine repeats this procedure for every monochromatic connected component of $G[R]$.
It returns $(G’, (V’_r, V’_b), k’, d)$ as instance of \textsc{Labeled-MDC}.
This completes the description of the subroutine.

It is easy to verify that $(V_r’, V’_b)$ is a partition of $V(G’)$.
As $\phi$ is a valid coloring of $G[R]$, a union of spanning trees of all monochromatic connected components of $G[R]$ contains at least $|R|/2$ edges.
Hence, the subroutine contracts at least $|R|/2$ edges.
This small observation will be helpful to get a bound on the running time of the algorithm. 
\begin{remark}
\label{remark:reduced-graph-prop}
$k’ \le k - |R|/2$.
\end{remark}

Let $\texttt{CC}[(G, (V_r, V_b), k, d); R; \phi]$ denote the instance returned by the subroutine when the input is $(G, (V_r, V_b), k, d)$, $R$, and $\phi$.
In the following lemma, we prove if the original instance is a \yes\ instance than at least one of the reduced instances is a \yes\ instance. 
\begin{lemma}
\label{lemma:colorwise-contraction-correct}
Consider a \yes\ instance $(G, (V_r, V_b), k, d)$ of \textsc{Labeled-MDC}.
Let $R$ be a union of some connected components of $G[V_r]$.
Suppose there is solution $F$ to $(G, (V_r, V_b), k, d)$ such that $R \subseteq V(F)$.
Then, there is a valid coloring $\phi : R \rightarrow [d + 1]$ of $G[R]$ for which $\emph{\texttt{CC}}[(G, (V_r, V_b), k, d); R; \phi]$ is a \yes\ instance.
\end{lemma}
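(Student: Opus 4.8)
The plan is to read the coloring $\phi$ directly off the witness structure of $G/F$ and then recover a solution for the contracted instance by restricting $F$. Write $\calW$ for the $G/F$-witness structure and $\psi$ for the associated map. First I would show that the big witness sets of $\calW$ partition $R$. Since $V(F) \subseteq V_r$, every spanning-tree edge of a big witness set lies in $G[V_r]$, so each big witness set is contained in a single connected component of $G[V_r]$. As $R$ is a union of connected components of $G[V_r]$ and $R \subseteq V(F)$, any big witness set meeting $R$ is entirely inside $R$, while every vertex of $R$, being an endpoint of an edge of $F$, lies in some big witness set. Hence the big witness sets contained in $R$ are exactly a partition of $R$, and each such set $W$ induces a connected subgraph $G[W]$ of $G[R]$.

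Next I would build $\phi$ from a proper coloring of the quotient. Consider the auxiliary graph on the big witness sets contained in $R$, making two of them adjacent whenever $G$ has an edge between them; by the definition of a witness structure this graph is isomorphic to the subgraph of $G/F$ induced on the corresponding vertices, so its maximum degree is at most $d$. Since a graph of maximum degree $\Delta$ admits a proper $(\Delta+1)$-coloring, we obtain a proper coloring of this auxiliary graph with $[d+1]$. Pulling it back---assigning to every vertex of a witness set $W \subseteq R$ the color of $W$---yields a $(d+1)$-coloring $\phi$ of $G[R]$.

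I would then verify that $\phi$ is valid and that it reproduces $F$ on $R$. Each $W \subseteq R$ is monochromatic and $G[W]$ is connected, while any two witness sets joined by an edge of $G[R]$ receive distinct colors; therefore the monochromatic connected components of $G[R]$ under $\phi$ are precisely the big witness sets contained in $R$. These all have size at least two, so $\phi$ is valid, and \texttt{Colorwise-Contraction} contracts exactly these components. Writing $F_R$ for the union of the spanning trees of the big witness sets contained in $R$, the returned instance is $(G', (V_r', V_b'), k', d)$ with $G' = G/F_R$, $V_r' = V_r \setminus R$, and $k' = k - |F_R|$.

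Finally I would exhibit $F' = F \setminus F_R$ as a solution to the returned instance. Contracting $F_R$ and then $F'$ is the same as contracting $F$, so $G'/F' = G/F$ has maximum degree at most $d$; also $|F'| = |F| - |F_R| \le k - |F_R| = k'$, and since $V(F_R) = R$ while the remaining edges of $F$ span only witness sets disjoint from $R$, we get $V(F') \subseteq V_r \setminus R = V_r'$. For the last condition, the connected components of $G'[V_r'] = G[V_r \setminus R]$ are exactly the components of $G[V_r]$ disjoint from $R$; on each such component $C$ the sets $V(F)$ and $V(F')$ agree, so the ``spans none or all'' property of $F$ carries over to $F'$. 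Thus the returned instance is a \yes\ instance. The main obstacle is the middle step: one must argue that the monochromatic components coincide \emph{exactly} with the witness sets---both that each witness set stays connected and monochromatic and, crucially, that adjacent witness sets never share a color so that no two of them merge---which is precisely what forces a proper $(d+1)$-coloring of the bounded-degree quotient rather than an arbitrary coloring.
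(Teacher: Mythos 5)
Your proof follows essentially the same route as the paper's: show that the big witness sets of the $G/F$-witness structure that meet $R$ partition $R$, pull back a proper $(d+1)$-coloring of the bounded-degree quotient to obtain a valid coloring $\phi$ whose monochromatic connected components are exactly those witness sets, and then let the edges of $F$ away from $R$ solve the instance returned by \texttt{Colorwise-Contraction}. (The paper colors all of $G/F$ and restricts, rather than coloring only the induced quotient on $R$; this is immaterial.)

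One bookkeeping step in your final paragraph is imprecise. You set $F' = F \setminus F_R$, where $F_R$ is the union of the spanning trees contracted by the subroutine, and write $|F'| = |F| - |F_R|$. But \texttt{Colorwise-Contraction} chooses an \emph{arbitrary} spanning tree of each $G[C]$, so $F_R$ need not be a subset of $F$ at all; and even if each spanning tree happened to lie inside $F$, the set $F_1$ of $F$-edges inside the witness sets contained in $R$ could properly contain $F_R$, since nothing forces $F$ to induce a tree on each witness set. In either case $F \setminus F_R$ may retain edges with endpoints in $R$, which breaks both your identity $G'/F' = G/F$ and the containment $V(F') \subseteq V_r'$ that you assert via ``the remaining edges of $F$ span only witness sets disjoint from $R$.'' The repair is exactly what the paper does: take $F' = F \setminus F_1$, note that $G/F_R$ and $G/F_1$ are the same graph because both contract precisely the sets $W_i$, so $G'/F' = G/F$, and bound $|F'| = |F| - |F_1| \le k - |F_R| = k'$ using $|F_R| = \sum_i (|W_i| - 1) \le |F_1|$ (the paper packages this as $F^{\star} = (F \setminus F_1) \cup F^{\circ}$ and works with $F^{\star} \setminus F^{\circ}$). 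Apart from this fixable slip, your argument is sound; indeed your explicit verification of condition $(c)$ for the reduced instance --- that the components of $G'[V_r']$ are exactly the components of $G[V_r]$ disjoint from $R$, on which $V(F)$ and $V(F')$ agree --- spells out a detail the paper leaves implicit.
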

\begin{proof}
Let $H = G/F$.
Consider the $H$-witness structure $\calW$ of $G$ and let $G$ be contracted to $H$ via $\psi$.
Define a subset $\calW_R$ of $\calW$ as the collection of witness sets that intersects $R$.
Formally, $\calW_R = \{W \in \calW\ |\ W \cap R \neq \emptyset \}$.
Let $\calW_R = \{W_1, W_2, \dots, W_q\}$.
For every $i \in [q]$, let $h_i$ be the vertex corresponding to $W_i$. 
In other words, $W_i = \{v \in V(G)\ |\ \psi(v) = h_i\}$.
Let $R_H = \{h_1, h_2, \dots, h_q\}$.

Let $F_1$ be the collection of edges in $F$ that are incident to some vertex in $R$.
Hence, $R \subseteq V(F_1)$.
As $R$ is the union of connected components in $G[V_r]$ and $V(F_1) \subseteq V(F) \subseteq V_r$, we can conclude that $R = V(F_1) = \bigcup_{i \in [q]} W_i$.
Hence, $\{W_1, W_2, \dots, W_q\}$ is a partition of $R$.
As there is a solution edge incident to every vertex in $R$, every witness set in $\calW_R$ is a big witness set. 
This implies for every $i \in [q]$, there is a subset $F_i$ of $F$ such that $W_i = V(F_i)$.
As the maximum degree of vertices in graph $H$ is at most $d$, there is a proper $(d+1)$-coloring, say $\gamma$, of $H$.
For $i, j \in [q]$, if $(h_i, h_j)$ is an edge in $H$ then $\gamma(h_i) \neq \gamma(h_j)$.
Define a coloring $\phi:R \rightarrow [d + 1]$ as follows.
For $v \in R$, $\phi(v) = \gamma(h_i)$ where $v \in W_i$.
As $\{W_1, W_2, \dots, W_q\}$ is a partition of $R$, function $\phi$ is well defined.
Since $W_i$ is a big witness set, $\phi$ is a valid coloring.

By the construction of $\phi$, any witness set in $\calW$ is monochromatic.
Since $\gamma$ is a proper coloring of $H$, any two witness sets adjacent to each other have distinct colors.
Hence, every witness set in $\calW_R$ is a monochromatic connected component of coloring $\phi$.
As algorithm constructs every valid coloring of $R$, it also consider this coloring and create instance $(G’, (V_r’, V_b’), k’, d) = \texttt{CC}[(G, (V_r, V_b), k, d), R; \phi]$.
For every $i \in [q]$, let $F^{\circ}_i$ be edges in a spanning tree of $G[W_i]$.
Define $F^{\circ} = \bigcup_{i \in [q]} F^{\circ}_i$.
As $W_i = V(F_i)$, graphs $G/F_1$ and $G/F^{\circ}$ are identical.
Also, $|F^{\circ}_i| \le |F_i|$ which implies $|F^{\circ}| \le |F_1|$.
Define $F^{\star} = (F \setminus F_1) \cup F^{\circ}$.
It is easy to verify that $F^{\star}$ is also a solution to $(G, (V_r, V_b), k, d)$.

We now argue that $F^{\star} \setminus F^{\circ}$ is a solution to $(G’, (V_r’, V_b’), k’, d)$.
By the description of the algorithm, $k’ = k - |F^{\circ}|$.
As $|F^{\star}| \le |F| \le k$ and $F^{\circ} \subseteq F^{\star}$, we have $|F^{\star} \setminus F^{\circ}| \le |F^{\star}| - |F^{\circ}| \le k’$.
Note that $G/F^{\star} = (G/F^{\circ})/(F^{\star}\setminus F^{\circ}) = G’/(F^{\star}\setminus F^{\circ})$ as $G’ = G/F^{\circ}$. 
This implies the maximum degree of $G’/(F^{\star}\setminus F^{\circ})$ is at most $d$.
The only thing that remains to argue is that $V(F^{\star}\setminus F^{\circ})$ is contained in $V_r’$.
By construction, $F \setminus F_1 = F^{\star} \setminus F^{\circ}$.
As $F_1$ is the set of edges in $F$ that were incident to $R$, we can conclude that no edge in $F^{\star} \setminus F^{\circ}$ is incident to $R$.
Recall that $V_r’ = V_r \setminus R$.
Hence, $V(F^{\star} \setminus F^{\circ}) \subseteq V_r’$.
This implies that $F^{\star} \setminus F^{\circ}$ is a solution to $(G’, (V_r’, V_b), k’, d)$ and concludes the proof of the lemma.
\end{proof}

In the above lemma, instead of considering any arbitrary subset $V_r$ we only consider a subset that is a union of one or more connected components of $G[V_r]$.
This suffices for our purpose as the algorithm calls the subroutine only on such subsets of $V_r$.
Also, note that we do not need to know the solution $F$ explicitly to apply the above lemma.
It suffices to know that such a solution exists. 
We are now able to present an algorithm for \textsc{Labeled-MDC}

\paragraph*{Algorithm for \textsc{Labeled-MDC}}
The algorithm takes as input an instance $(G, (V_r, V_b), k, d)$ of \textsc{Labeled-MDC} and returns \yes\ or \no.
If $k < 0$ then the algorithm returns \no.
If $k = 0$ then it finds the maximum degree of $G$.
If it is at most $d$ then the algorithm returns \yes\ otherwise it returns \no.
The algorithm exhaustively applies Reduction Rules~\ref{rr:trivial-yes}, \ref{rr:size-red-comp}, and \ref{rr:blue-nbr}.
If the reduced instance is a trivial \yes\ (resp. \no) instance then the algorithm returns \yes\ (resp. \no).
Otherwise, it creates multiple instances and makes recursive calls on these instances.
The algorithm returns \yes\ if one of the recursive calls returns \yes, otherwise; it returns \no.

We now describe the procedure used by the algorithm to create new instances.
Let $(G, (V_r, V_b), k, d)$ be the instance on which reduction rules are not applicable.
The algorithm finds a vertex, say $v$, in $G$ such that $\deg_G(v) \ge d + 1$. 
It considers the following two cases.
\begin{enumerate}
\item (Vertex $v$ is in $V_r$) 
Let $R$ be the connected component of $G[V_r]$ that contains $v$.
The algorithm constructs all valid coloring $\phi : R \rightarrow [d + 1]$ of $G[R]$.
For each coloring, the algorithm calls subroutine \texttt{Colorwise-Contraction} with input $(G, (V_r, V_b), k, d)$, $R$, and $\phi$.
The algorithm calls itself with the instances returned by this subroutine as the input.

\item (Vertex $v$ is in $V_b$) 
Let $C_1, C_2, \dots, C_q$ be the connected components of $G[V_r]$ such that $N(v) \cap C_i \neq \emptyset$ for every $i \in [q]$.
For a non-empty subset $I \subseteq [q]$, define $R_I := \bigcup_{i \in I} C_i$.
For every non-empty subset $I \subseteq [q]$, the algorithm proceeds as follows.
If $|R_I| \ge 2k + 1$, the algorithm discards this choice of $I$ and moves to the next one.
Otherwise, the algorithm constructs all valid colorings $\phi : R_I \rightarrow [d + 1]$ of $G[R_I]$.
For each coloring, the algorithm calls subroutine \texttt{Colorwise-Contraction} with input $(G, (V_r, V_b), k, d)$, $R_I$, and $\phi$.
The algorithm calls itself with the instance returned by this subroutine as input.
\end{enumerate}
This completes the description of the algorithm.

In the following lemma, we prove that the algorithm described above is correct and runs in the desired time.
\begin{lemma} \label{lemma:fpt-labeled-MDC}
There is an algorithm that given an instance $(G, (V_r, V_b), k, d)$ of \textsc{Labeled-MDC} runs in time $2^{(d + 2) k} \cdot (d + 1)^{2k} \cdot n^{\calO(1)}$ and correctly determines whether it is a \yes\ instance.
\end{lemma}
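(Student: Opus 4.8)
The plan is to establish the two directions of correctness of the recursive algorithm by induction on $k$, and then to bound its running time through a weighted recurrence on the number of leaves of the recursion tree. The base cases are immediate: for $k<0$ returning \no\ is correct since any solution uses a nonnegative number of edges, and for $k=0$ the empty set is the only candidate, so testing whether the maximum degree is at most $d$ decides the instance. Reduction Rules~\ref{rr:trivial-yes}--\ref{rr:blue-nbr} are safe, so it remains to handle the branching step, which operates on a reduced instance in which some vertex $v$ has $\deg_G(v)\ge d+1$.

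For completeness, assume the reduced instance is a \yes\ instance with solution $F$; recall $|V(F)|\le 2k$ by Observation~\ref{obs:witness-structure-property}(\ref{item:nr-vertices-big-bag}). The key observation is that every big witness set of the $G/F$-witness structure lies inside a single connected component of $G[V_r]$, because such a set is connected in $G$ and contained in $V(F)\subseteq V_r$. I would first treat the case $v\in V_r$: by Observation~\ref{obs:high-deg-vertex}, $F$ merges two vertices of $N[v]$, and since these lie in $V(F)\subseteq V_r$ and are equal to or adjacent to $v\in V_r$, they belong to the component $R$ of $G[V_r]$ containing $v$; hence $R\cap V(F)\neq\emptyset$ and, by the component constraint of \textsc{Labeled-MDC}, $R\subseteq V(F)$. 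For $v\in V_b$ (so $v\notin V(F)$, and its witness set is $\{v\}$), the two merged vertices lie in $N(v)\cap V(F)$, i.e.\ in components among $C_1,\dots,C_q$; taking $I^{\ast}=\{i:C_i\subseteq V(F)\}$ gives a nonempty $I^{\ast}$ with $R_{I^{\ast}}\subseteq V(F)$, so $|R_{I^{\ast}}|\le 2k$ and this choice is not discarded. In either case $R$ (respectively $R_{I^{\ast}}$) is a union of whole components of $G[V_r]$ contained in $V(F)$, so Lemma~\ref{lemma:colorwise-contraction-correct} yields a valid coloring for which the produced child is a \yes\ instance; since the algorithm enumerates this subset-coloring pair and recurses, the induction hypothesis on the strictly smaller parameter (Remark~\ref{remark:reduced-graph-prop}) returns \yes.

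For soundness I would show that any \yes\ child lifts to a \yes\ parent. If $F'$ solves the child instance $\texttt{CC}[(G,(V_r,V_b),k,d);R;\phi]=(G',(V_r',V_b'),k',d)$, where $G'=G/F^{\circ}$ and $F^{\circ}$ is the union of the contracted spanning trees, then I claim $F^{\circ}\cup F'$ solves the parent. Size and maximum-degree are routine, using $k'=k-|F^{\circ}|$ and $G/(F^{\circ}\cup F')=G'/F'$, and $V(F^{\circ}\cup F')\subseteq V_r$ holds since $V(F^{\circ})=R\subseteq V_r$ and $V(F')\subseteq V_r'=V_r\setminus R$. The only delicate point is the \textsc{Labeled-MDC} component constraint: because $R$ is a union of entire components of $G[V_r]$ and $\phi$ is a \emph{valid} coloring (every monochromatic component has size at least two), one gets $R=V(F^{\circ})$, so any component of $G[V_r]$ inside $R$ is fully covered, while any component disjoint from $R$ is also a component of $G'[V_r']$ and is handled by the constraint satisfied by $F'$. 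Together with the safety of the reduction rules, this yields soundness by induction.

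Finally, for the running time I would bound the number $T(k)$ of leaves of the recursion tree. Set $C_d=2^{d+2}(d+1)^2$ and prove $T(k)\le C_d^{\,k}$ by induction. A branch on a set of size $s:=|R|\ge 2$ produces at most $(d+1)^{s}$ valid colorings and, by Remark~\ref{remark:reduced-graph-prop}, each resulting instance has parameter at most $k-s/2$; in the case $v\in V_b$ there are at most $2^{q}-1\le 2^{d}$ subsets $I$ to consider, since Reduction Rule~\ref{rr:blue-nbr} guarantees $q\le d$. In the case $v\in V_r$ the single component gives $T(k)\le (d+1)^{s}C_d^{\,k-s/2}=C_d^{\,k}(d+1)^{s}C_d^{-s/2}\le C_d^{\,k}$, using $C_d\ge(d+1)^2$. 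In the case $v\in V_b$ each of the at most $2^{d}$ terms satisfies $(d+1)^{s_I}C_d^{\,k-s_I/2}\le C_d^{\,k}2^{-(d+2)s_I/2}\le C_d^{\,k}2^{-(d+2)}$ because $s_I\ge 2$, so their sum is at most $2^{d}\cdot 2^{-(d+2)}C_d^{\,k}\le C_d^{\,k}$. Hence $T(k)\le C_d^{\,k}=2^{(d+2)k}(d+1)^{2k}$, and since the recursion depth is at most $k$, the recursion tree has at most $(k+1)\cdot T(k)$ nodes, each performing $n^{\calO(1)}$ work (applying the reduction rules and generating its children, each built in polynomial time by the subroutine), giving total running time $2^{(d+2)k}(d+1)^{2k}\cdot n^{\calO(1)}$. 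The main obstacles I anticipate are the two places where the component structure must be tracked precisely: identifying the correct union of components in the $v\in V_b$ branch of completeness (and checking $|R_{I^{\ast}}|\le 2k$), and verifying the component constraint when lifting a child's solution in the soundness argument.
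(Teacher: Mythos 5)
Your proposal is correct and follows essentially the same route as the paper's proof: the same induction on $k$, the same case analysis on whether the high-degree vertex $v$ lies in $V_r$ or $V_b$ with Lemma~\ref{lemma:colorwise-contraction-correct} supplying the good coloring, and the same arithmetic for the running time (your explicit leaf-count recurrence $T(k)\le \bigl(2^{d+2}(d+1)^2\bigr)^k$ is just a repackaging of the paper's inline inductive time bound). If anything, you are more careful than the paper in two spots it glosses over: verifying the component constraint when lifting $F'\cup F^{\circ}$ in the soundness direction, and pinning down the subset $I^{\ast}$ of components contained in $V(F)$ in the $v\in V_b$ case.
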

\begin{proof}
We argue that the algorithm described above solves \textsc{Labeled-MDC} in the desired time.
We prove this lemma by the induction on the solution size $k$.

Consider the base case when the solution size is zero.
Here, the algorithm finds a maximum degree of the graph and depending on its value returns \yes\ or \no. 
It is easy to see that the lemma holds in this case. 
Assume that the lemma is true when the solution size is at most $k - 1$.

We first prove that given a \yes\ instance the algorithm returns \yes.
Suppose $(G, (V_r, V_b), k, d)$ is a \yes\ instance of \textsc{Labeled-MDC} and let $F$ be its solution. 
Note that this implies that $F$ is a solution to $(G, k, d)$. 
If the algorithm returned \yes\ because Reduction Rule~\ref{rr:trivial-yes} returned a \yes\ instance then the lemma is vacuously true.
By Lemma~\ref{lemma:rr-blue-nbr}, Reduction Rule~\ref{rr:blue-nbr} is not applicable on the input.
Consider the instance obtained by the exhaustive application Reduction Rules~\ref{rr:trivial-yes} and \ref{rr:size-red-comp} on the input instance.
For notational convenience, we denote this reduced instance by $(G, (V_r, V_b), k, d)$.
As Reduction Rule~\ref{rr:trivial-yes} is not applicable, there is a vertex in $G$ that has degree at least $d + 1$.
Let $v$ be the vertex of degree at least $d + 1$ found by the algorithm. 
By Observation~\ref{obs:high-deg-vertex}, $V(F)$ intersects with $N[v]$.

Consider the case when $v$ is in $V_r$ and let $R$ be the connected component of $G[V_r]$ that contains $v$.
Since $V(F) \subseteq V_r$, we have $R \cap V(F) \neq \emptyset$.
As $F$ is a solution to $(G, (V_r, V_b), k, d)$, $R \cap V(F) \neq \emptyset$ implies $R \subseteq V(F)$.
Instance $(G, (V_r, V_b), k, d)$, subset $R$ of $V_r$, and solution $F$ satisfies the premise of Lemma~\ref{lemma:colorwise-contraction-correct}.
Hence, there is a valid coloring $\phi:R \rightarrow [d + 1]$ of $G[R]$ such that 
$\texttt{CC}[(G, (V_r, V_b), k, d), R; \phi]$ is a \yes\ instance.
As $R \neq \emptyset$, Remark~\ref{remark:reduced-graph-prop} implies that $k’ < k$. 
By the induction hypothesis, the algorithm correctly returns \yes\ when the input is $(G’, (V’_r, V’_b), k’, d)$.
As one of the recursive calls returns \yes, the algorithm returns \yes\ when the input is $(G, (V_r, V_b), k, d)$ and $v$ is in $V_r$.

Consider the case when $v$ is in $V_b$.
Let $C_1, C_2, \dots, C_q$ be the connected components of $G[V_r]$ such that $N(v) \cap C_i \neq \emptyset$ for every $i \in [q]$.
Recall that for a non-empty subset $I \subseteq [q]$, $R_I = \bigcup_{i \in I} C_i$.
As $V(F)$ intersects $N[v]$ and $V(F) \subseteq V_r$, there exists a non-empty subset $I’ \subseteq [q]$ such that for $i \in [q]$, $C_i \cap N(v) \neq \emptyset$ if and only if $i \in I’$.
As $F$ is a solution to $(G, (V_r, V_b), k, d)$, $C_i \cap V(F) \neq \emptyset$ implies $C_i \subseteq V(F)$.
Hence, $R_{I’} \subseteq V(F)$.
As $|V(F)| \le 2k$, $|R_{I’}| \le 2k$.
For every non-empty subset $I \subseteq [q]$ for which $|R_I| \le 2k$, the algorithm constructs all valid coloring $\phi : R_I \rightarrow [d + 1]$ of $G[R_I]$ and calls \texttt{Colorwise-Contraction}.
Instance $(G, (V_r, V_b), k, d)$, subset $R_{I’}$ of $V_r$, and solution $F$ satisfies the premise of Lemma~\ref{lemma:colorwise-contraction-correct}.
Hence, there is a valid coloring $\phi:R_{I’} \rightarrow [d + 1]$ of $G[R_{I’}]$ such that $ (G’, (V’_r, V’_b), k’, d) = \texttt{CC}[(G, (V_r, V_b), k, d), R_{I’}, \phi]$ is a \yes\ instance.
As $R \neq \emptyset$, Remark~\ref{remark:reduced-graph-prop} implies that $k’ < k$. 
By the induction hypothesis, the algorithm correctly returns \yes\ when the input is $(G’, (V’_r, V’_b), k’, d)$.
As one of the recursive calls returns \yes, the algorithm returns \yes\ when the input is $(G, (V_r, V_b), k, d)$ and $v$ is in $V_b$.
This implies that if $(G, (V_r, V_b), k, d)$ is a \yes\ instance then the algorithm returns \yes. 

We now prove that if the algorithm returns \yes\ on instance $(G, (V_r, V_b), k, d)$ then it is a \yes\ instance of \textsc{Labeled-MDC}.
If the algorithm returned \yes\ because Reduction Rule~\ref{rr:trivial-yes} returned a \yes\ instance then the lemma is vacuously true.
Otherwise, there is a newly created instance, say $(G’, (V’_r, V’_b), k’, d)$, on which the recursive call of the algorithm returned \yes.
Let $R$ be the subset of $V_r$ and $\phi$ be its valid coloring such that \texttt{Colorwise-Contraction} returned this instance when input was $(G, (V_r, V_b), k, d)$, $R$, and $\phi$. 
Let $F^{\circ}$ be the edges in $G$ contracted by the subroutine to contract $G’$.
In other words, $F^{\circ}$ is a collection of spanning trees of connected monochromatic components of $G[R]$.
Note that $|F^{\circ}| = k - k’$.
The algorithm calls \texttt{Colorwise-Contraction} only on non-empty subsets $R$.
Hence, by Remark~\ref{remark:reduced-graph-prop}, $k’ < k$. 
By the induction hypothesis, $(G’, (V’_r, V’_b), k’, d)$ is a \yes\ instance of \textsc{Labeled-MDC}.
It is easy to see that if $F’$ is a solution to $(G’, (V’_r, V’_b), k’, d)$ then $F’ \cup F^{\circ}$ is a solution to $(G, (V_r, V_b), k, d)$.
This concludes the proof of the correctness of the algorithm.

We now bound the running time of the algorithm.
The algorithm can apply all the reduction rules in polynomial time. 
It creates new instances only when none of the reduction rules are applicable. 
As Reduction Rules~\ref{rr:size-red-comp} is not applicable, any connected component of $G[V_r]$ has at least two and at most $2k$ vertices. 
In Case(1), the algorithm creates at most $(d + 1)^{|R|}$ many instances. 
By Remark~\ref{remark:reduced-graph-prop} and the induction hypothesis, the time taken by the algorithm in this case is
 
$$(d + 1)^{|R|} \cdot 2^{(d + 2)(k - |R|/2)} \cdot (d + 1)^{2(k - |R|/2)} \cdot n^{\calO(1)} \le 2^{(d + 2) k} \cdot (d + 1)^{2k} \cdot n^{\calO(1)}. $$

As Reduction Rule~\ref{rr:blue-nbr} is not applicable, for any vertex $v$ in $V_b$, there are at most $d$ connected components of $G[V_r]$ that intersects $N(v)$.
In Case(2), the algorithm constructs all valid partitions of $R_I$ only when $|R_I| \le 2k$.
Hence, in this case, the algorithm creates $2^d \cdot (d + 1)^{|R|}$ many instances.
By Remark~\ref{remark:reduced-graph-prop} and the induction hypothesis, the time taken by the algorithm in this case is 

$$2^d \cdot (d + 1)^{|R|} \cdot 2^{(d + 2)(k - |R|/2)} \cdot (d + 1)^{2(k - |R|/2)} \cdot n^{\calO(1)} \le 2^{(d + 2) k} \cdot (d + 1)^{2k} \cdot n^{\calO(1)}. $$

As $|R| \ge 2$, we have $2^d \cdot 2^{(d + 2)(-|R|/2)} \le 1$.
This completes the proof of the lemma.
\end{proof}

The correctness of Theorem~\ref{thm:fpt-new} immediately follows from Lemma~\ref{lemma:fpt-col-to-MDC} and Lemma~\ref{lemma:fpt-labeled-MDC}.
\section{No Polynomial Kernel}
\label{sec:no-poly-kernel}
In this section, we prove that \textsc{Maximum Degree Contraction} does not admit a polynomial kernel when parameterized by $k + d$.
To show that, we present a reduction from \textsc{Red Blue Dominating Set (RBDS)}.
In this problem, an input is comprised of a bipartite graph $H$ with a bipartition $(R, B)$ of $V(H)$, and a positive integer $l$. 
The question is, does there exist a subset $R’$ of $R$ of size at most $l$ such that $N(R’) = B$?
Without loss of generality, we can assume that $l + 3 < |B|$ and no vertex in $R$ is adjacent to all but one vertices in $B$.
We know the following result about the compression of the problem.
See, for example, Theorem~$15.18$ in \cite{saurabh-book}.

\begin{proposition}\label{prop:no-poly-kernel-RBDS} Unless $\NP \subseteq \coNP/poly$, \textsc{RBDS}, parameterized by $|B|$, does not admit a polynomial compression.
\end{proposition}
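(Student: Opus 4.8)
The plan is to rule out a polynomial compression by exhibiting an OR-cross-composition of an \NP-hard problem into \textsc{RBDS} parameterized by $|B|$, and then to invoke the standard framework of Bodlaender, Jansen, and Kratsch: if an \NP-hard language cross-composes into a parameterized problem $Q$, then $Q$ admits no polynomial compression unless $\NP \subseteq \coNP/poly$. As the source language I would take \textsc{RBDS} itself (in its unparameterized, classical form), which is \NP-complete because it is exactly \textsc{Set Cover} with the blue side $B$ playing the role of the universe and each red vertex $r$ contributing the set $N(r)$; composing the very same problem into itself is the cleanest route.

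First I would fix a polynomial equivalence relation that declares two \textsc{RBDS} instances equivalent when they have the same number of red vertices $\rho$, the same number of blue vertices $\beta$, and the same budget $l$ (with all malformed inputs dumped into a single trash class). Hence I may assume the input is a list of $t$ instances $(H_1, l), \dots, (H_t, l)$ with $|R_i| = \rho$ and $|B_i| = \beta$ for every $i$. I would then build a single output instance on a common blue core $B_0$ of size $\beta$, obtained by identifying the blue sides of all instances, while keeping $t$ disjoint copies $R_1, \dots, R_t$ of the red vertices. This is affordable precisely because the parameter is $|B|$ and \emph{not} $|R|$: the red side may balloon to size $t\rho$, but the blue side stays small. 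On top of $B_0$ I would attach an instance-selector gadget $Z$ consisting of only $\calO(\log t)$ blue vertices, tag every red vertex of copy $i$ with the $\lceil \log t \rceil$-bit identifier of $i$, wire each tagged red vertex into $Z$ according to its identifier, and set the budget to $l^{\star} = l + c$, where $c$ is the selector's intended cost. The parameter of the output is then $|B^{\star}| = \beta + |Z| = \beta + \calO(\log t)$, which is polynomial in $\max_i |x_i| + \log t$, as cross-composition demands.

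The hard part will be the correctness of the selector, namely proving that a solution of size at most $l^{\star}$ exists if and only if some input $H_i$ is a \yes\ instance. The forward direction is routine: a genuine $l$-sized red solution for $H_i$ dominates $B_0$, and the $c$ budget slack lets me mop up $Z$ using the gadget vertices. The genuine obstacle is the reverse direction, and it is caused by the \emph{monotonicity} of domination: since every red copy points into the shared core $B_0$, nothing on its own stops a cheap solution from dominating $B_0$ using red vertices drawn from several different copies, which would be a false positive with no corresponding \yes\ input. The crux of the whole argument is therefore to design $Z$ and calibrate $c$ so that dominating the selector vertices within budget is affordable \emph{only} for an identifier-consistent choice of red vertices, thereby forcing every bounded-size solution to live inside a single copy $R_i$; this is exactly what the ``colors and IDs'' technique of Dom, Lokshtanov, and Saurabh accomplishes, and it is the step I would spend the most care on. Once single-copy consistency is forced, a size-$l$ part of the solution must dominate $B_0$ using reds of one instance $H_i$ alone, yielding a solution to $H_i$ and closing the equivalence; combined with the \NP-hardness of \textsc{RBDS} and the cross-composition theorem, this gives the claimed lower bound.
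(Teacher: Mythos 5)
The paper does not actually prove this proposition: it invokes it as a known result, citing Theorem~15.18 of Cygan et al.\ (the reference \texttt{saurabh-book}), whose underlying argument is the cross-composition/``colors and IDs'' lower bound of Dom, Lokshtanov, and Saurabh. Your outline heads toward exactly that standard argument, and the framing (cross-composition of an \NP-hard problem into \textsc{RBDS} parameterized by $|B|$, exploiting that only the blue side is the parameter so the red side may grow with $t$) is right. However, there is a genuine gap, and it sits precisely at the step you flag and defer: the instance selector. As literally described --- compose \emph{uncolored} \textsc{RBDS} into itself, add a set $Z$ of $\calO(\log t)$ blue selector vertices wired to red vertices by the bits of their copy identifier, and raise the budget to $l + c$ --- the construction cannot be completed. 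Domination is monotone, and a selector blue vertex is satisfied as soon as \emph{some} chosen red has the right ID bit; with up to $l^{\star}$ chosen reds, different reds can serve different selector vertices, so nothing forces all chosen reds to share one identifier. Worse, if the budget slack $c$ is spent on dedicated gadget reds that ``mop up'' $Z$, the selector decouples from the choice in $B_0$ entirely. A single $\calO(\log t)$-sized blue selector with a budget increment simply has no mechanism to enforce single-copy consistency for an unstructured solution.

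The known repair, which is the actual content of the colors-and-IDs technique you cite, changes the source problem: one cross-composes the \emph{colored} variant \textsc{Col-RBDS}, in which the red side is partitioned into $l$ color classes and a solution must contain exactly one red vertex of each color (its \NP-hardness must be established as a separate step). The one-per-color structure is what makes ID enforcement possible: for every pair of colors $\{a,b\}$ and every bit position $j$ one adds two blue vertices, adjacent respectively to color-$a$ reds with $j$-th ID bit $0$ together with color-$b$ reds with bit $1$, and to the complementary pattern; dominating both forces the chosen color-$a$ and color-$b$ reds to agree on bit $j$, and ranging over all pairs and bits forces a common ID, i.e., a single copy. This costs $\calO(l^2 \log t)$ extra blue vertices rather than $\calO(\log t)$, which is still fine for the parameter since one may assume $l \le |B|$ (each red in a minimal solution dominates a new blue vertex). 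So your proposal correctly identifies the obstacle and names the right technique, but the concrete gadget it proposes would fail, and the fix requires switching to the colored source problem and the pairwise-bit gadget --- the proposal as written does not contain the idea that actually closes the reverse direction.
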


If $|R| > 2^{|B|}$ then there are at least two different vertices, say $r_1, r_2$ such that $N(r_1) = N(r_2)$.
It is easy to see that it is safe to delete one of these two vertices. 
In this case, we can ensure, in polynomial time, that $|R| \le 2^{|B|}$ by repeating the above process.
This implies $\log_2 |R| \le |B|$.
Hence, we get the following corollary of Proposition~\ref{prop:no-poly-kernel-RBDS}.
\begin{corollary}\label{corr:no-poly-kernel-RBDS} Unless $\NP \subseteq \coNP/poly$, \textsc{RBDS}, parameterized by $|B| + \log_2|R|$, does not admit a polynomial compression.
\end{corollary}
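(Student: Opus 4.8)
The plan is to derive the corollary directly from Proposition~\ref{prop:no-poly-kernel-RBDS} by exploiting a simple preprocessing step that makes $|R|$ exponentially bounded in $|B|$, so that the two parameters $|B|$ and $|B| + \log_2|R|$ become polynomially related. First I would formalize the reduction rule hinted at in the paragraph preceding the corollary: if two vertices $r_1, r_2 \in R$ satisfy $N(r_1) = N(r_2)$, delete one of them, say $r_1$. To argue this is safe, note that $r_1$ and $r_2$ dominate exactly the same subset of $B$. Hence, if some solution $R'$ contains $r_1$, then either $r_2 \in R'$ already, in which case $R' \setminus \{r_1\}$ is still a dominating set, or $r_2 \notin R'$, in which case $(R' \setminus \{r_1\}) \cup \{r_2\}$ is a solution of the same size; either way there is a solution avoiding $r_1$, which survives the deletion. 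Applying this rule exhaustively takes polynomial time, and once no two vertices of $R$ share a neighborhood, distinct vertices of $R$ correspond to distinct subsets of $B$, giving $|R| \le 2^{|B|}$ and therefore $\log_2|R| \le |B|$.

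Next I would compose this preprocessing with the hypothetical compression. Suppose, for contradiction, that \textsc{RBDS} parameterized by $|B| + \log_2|R|$ admits a polynomial compression into some problem $\Pi$; that is, there is a polynomial-time algorithm mapping each instance to an equivalent instance of $\Pi$ whose size is bounded by a polynomial in $|B| + \log_2|R|$. Given an arbitrary instance $(H, (R, B), l)$ viewed under the parameter $|B|$, I would first apply the reduction rule above to obtain an equivalent instance with vertex set $R'$ satisfying $\log_2|R'| \le |B|$, and then run the assumed compression on this reduced instance. The resulting instance of $\Pi$ has size bounded by a polynomial in $|B| + \log_2|R'| \le 2|B|$, hence by a polynomial in $|B|$.

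Putting these together yields a polynomial compression of \textsc{RBDS} parameterized by $|B|$ into the very same problem $\Pi$: the composed algorithm runs in polynomial time, preserves the \yes/\no\ answer (by safety of the duplicate-removal rule together with correctness of the assumed compression), and outputs an instance of size polynomial in $|B|$. This contradicts Proposition~\ref{prop:no-poly-kernel-RBDS} unless $\NP \subseteq \coNP/poly$, which establishes the corollary. I do not expect a genuine technical obstacle here; the argument is a parameter-relation reduction, and the only points deserving care are the safety proof of the duplicate-removal rule and the bookkeeping that the compression target $\Pi$ is identical in both statements, so that the composition is indeed a compression in the formal sense.
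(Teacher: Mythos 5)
Your proposal is correct and takes essentially the same approach as the paper, which also derives the corollary from Proposition~\ref{prop:no-poly-kernel-RBDS} by exhaustively deleting one of any two vertices $r_1, r_2 \in R$ with $N(r_1) = N(r_2)$, thereby ensuring $|R| \le 2^{|B|}$ and hence $\log_2|R| \le |B|$. Your write-up simply makes explicit the safety argument for the duplicate-removal rule and the composition with the hypothetical compression, both of which the paper leaves implicit.
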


For the sake of clarity, we use both $|B|$ and $\log_2|R|$ as parameters instead of replacing $\log_2|R|$ by the larger parameter $|B|$.
For notational convenience, we assume that $\log_2|R|$ is an integer.
If this is not the case, one can add some isolated vertices in $R$ to ensure that $\log_2|R|$ is an integer.
This results in at most doubling of the number of vertices in it.

\begin{figure}[t]
\begin{center}
\includegraphics[scale=0.6]{./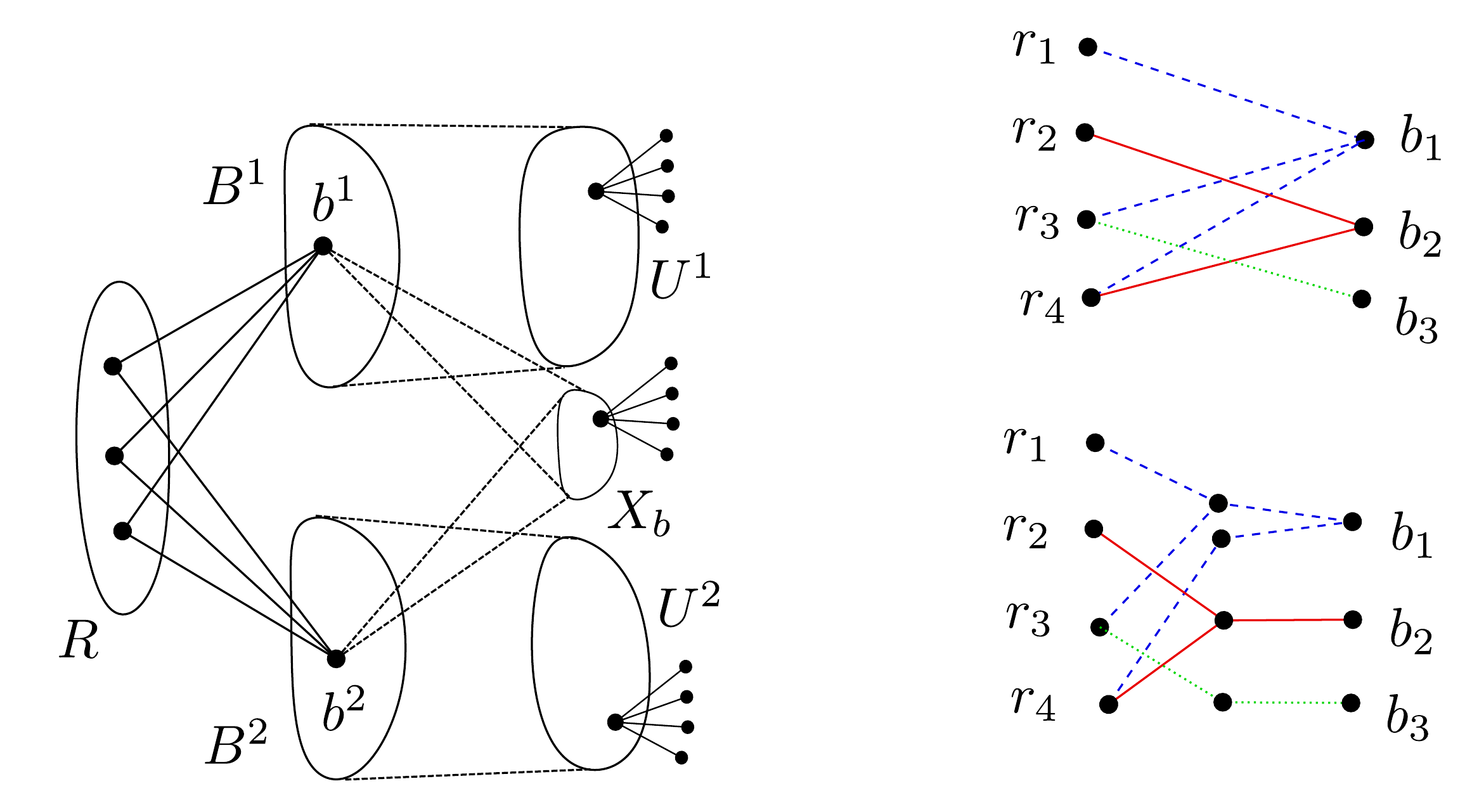}
\end{center}
\caption{(Left) Overview of the reduction. The doted lines indicate that there is a complete bipartite graph across two sets.
(Right) The operation of replacing edges incident to vertex in $B$ by a tree rooted at that vertex.
 \label{fig:RBDS-to-MDC-Overview}}
\end{figure}

We first present an overview of the reduction. 
Consider an instance $(H, R, B, l)$ of \textsc{RBDS}.
See Figure~\ref{fig:RBDS-to-MDC-Overview} for an illustration.
The reduction makes a copy of $R$ and two copies of $B$, say $B^1, B^2$.
For every vertex $b$ in $B$, we denote its two copies in $B^1, B^2$ by $b^1, b^2$, respectively.
For every edge $(r, b)$, the reduction adds edges $(r, b^1)$ and $(r, b^2)$.
It adds two independent sets $U^1, U^2$.
For every vertex $u \in U^1 \cup U^2$, it adds some pendent vertices adjacent to it. 
The reduction adds all edges to make a complete bipartite graph with $(B^1, U^1)$ as its bipartition.
Similarly, it adds all edges to make a complete bipartite graph with $(B^2, U^2)$ as its bipartition.
For every vertex $b$ in $B$, it adds a set of independent vertices $X_b$.
For every $x$ in $X_b$, it adds some pendent vertices adjacent to it and adds edges $(b^1, x)$, $(b^2, x)$.
We briefly present an intuition behind the construction before presenting the last step.
Let $G$ be the graph constructed so far and $k, d$ be two integers whose values depend only on $|B|, \log_2|R|$.
Suppose the reduction returns $(G, k, d)$ as an instance of \textsc{MDC}.

We set the value of $d$ and the number of pendant vertices such that it is ensured that the only vertices in $U^1 \cup U^2 \cup X_b$ have degree more than $d$ in $G$.
We fix $k$ and the sizes of sets $U^1, U^2, X_b$ to ensure that any solution for the reduced instance of \textsc{MDC} satisfy the following properties.
\begin{enumerate}
\item \label{item:no-BU} It does not include an edge with one of its endpoints in $B^1 \cup B^2$ and another in $U^1 \cup U^2$.
\item \label{item:no-BX} For any $b$ in $B$, it does not include an edge with one of its endpoints in $\{b^1, b^2\}$ and another in $X_b$. 
\item \label{item:no-pendent} It spans all vertices in $B^1 \cup B^2$. In other words, $B^1 \cup B^2 \subseteq V(F)$.
\item \label{item:l-parts} There are at most $l$ witness sets in the $G/F$-witness structure of $G$ that contain vertices in $B^1$ (similarly in $B^2$).
\item \label{item:twins-same} For every $b$ in $B$, $F$ includes $b^1, b^2$ in the same witness set.
\end{enumerate}

Property~(\ref{item:l-parts}) ensures that the degree constraints for the vertices in $U^1$ (similarly in $U^2$) are satisfied. 
Property~(\ref{item:twins-same}) ensures that for every $b$ in $B$, the degree constraints for the vertices in $X_b$ are satisfied.
Because of Property~(\ref{item:no-BU}) and (\ref{item:no-BX}), only the vertices in $R$ can make a witness set connected.
Hence, each witness set should contain at least one vertex from $R$.
We set the budget $k$ such that each witness set contains exactly one vertex from $R$.
To prove connectivity to witness set, this vertex needs to be adjacent to all vertices in that witness set.
Hence, the set of endpoints of edges in a solution to $(G, k, d)$ contains at most $l$ vertices in $R$ that dominates $B$.
This naturally leads to a solution to $(H. R, B, l)$.

We now present the last step in the construction.
The degree of the vertices formed by contracting a witness set can be larger than $d$.
To avoid this, we replace edges across $R, B$ that are incident to vertex $b$ in $B$ by a binary tree rooted at that vertex.
We ensure that for every edge incident $b$, there is a unique root-to-leaf path in the binary tree rooted at $b$ and vice versa. 

\paragraph*{Reduction}
Given an instance $(H, R, B, l)$ of \textsc{RBDS} as an input, the reduction outputs an instance $(G, k, d)$ of \textsc{MDC}.
The reduction creates an intermediate instance $(H^{\star}, R^{\star}, B^{\star}, l)$ of \textsc{RBDS}.
It takes a copy of $R$ and two copies of $B$, namely $B^1, B^2$, to create vertex set of graph $H^{\star}$.
Formally, $R^{\star} = R$ and $B^{\star} = B^1 \cup B^2$.
For every edge $(r, b)$ in $H$ such that $r \in R$ and $b \in B$, it adds edges $(r, b_1)$, $(r, b_2)$ to $H^{\star}$.
Here, vertices $b^1, b^2$ are copies of $b$ in $B^1, B^2$, respectively.
It is easy to see that $(H, R, B, l)$ is a \yes\ instance of \textsc{RBDS} if and only if $(H^{\star}, R, B^1 \cup B^2, l)$ is a \yes\ instance.

The reduction sets $k = 2|B| \cdot \log_2|R|$ and $d = 2|B| \cdot (\log_2|R| + k + 2)$, and constructs graph $G$ by modifying a copy of graph $H^{\circ}$ in the following way. 
It repeats the first three steps for $i = 1$ and $i = 2$.

\begin{itemize}
\item[-] For every vertex $b^i$ in $B^i$, it deletes all the edges incident to $b^i$ and constructs a binary tree that satisfies the following conditions:
$(i)$ the tree is rooted at $b^i$, 
$(ii)$ the height of the binary tree is $\log_2|R|$, and
$(ii)$ its leaves are the vertices in $N_{H^{\star}}(b^i)$.
Note that every edge incident to $b^{i}$ in $H^{\star}$ corresponds to an unique root-to-leaf path in this binary tree and vice versa.
Let $I^{i}$ be the collection of all the new vertices added in this step.
It is the collection of vertices in binary trees that are not roots or leaves. 
\item[-] It adds a set $U^i$ of $k + 1$ new vertices to $V(G)$. For every vertex $b^i$ in $B^i$ and every vertex $u$ in $U^i$, it adds edge $(b^i, u)$. It adds all edges to make $G[B^i \cup U^i]$ a complete bipartite graph with $B^i, U^i$ as its bipartition. 
\item[-] For every vertex $u$ in $U^i$, it adds $d - l$ pendant vertices adjacent to $u$.
\item[-] For every vertex $b$ in $B$, it adds a set $X_b$ of $k + 1$ new vertices. 
For every vertex $x$ in $X_b$, it adds edges $(b^1, x)$ and $(b^2, x)$. 
It also adds $d - 2$ pendant vertices adjacent to $x$.
Let $P_X^b$ be the set of pendent vertices adjacent to some vertex in $X_b$.
\end{itemize}
This completes the construction of $(G, k, d)$.

The following lemma identifies the set of vertices in $G$ that has degree more than $d$.
\begin{lemma}\label{lemma:RBDS-Red-Prop} 
Suppose the reduction returns $(G, k, d)$ when the input is $(H, R, B, l)$.
Then, $U \cup X$ is the collection of all vertices in $G$ that has degree strictly greater than $d$.
Here, $U = U^1 \cup U^2$ and $X = \bigcup_{b \in B} X_b$.
\end{lemma}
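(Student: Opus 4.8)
The plan is to prove the lemma by a direct case analysis over the vertex classes produced by the reduction, computing or bounding the degree of each class against the threshold $d = 2|B| \cdot (\log_2|R| + k + 2)$. First I would fix an exhaustive list of the vertex types in $G$: the leaves of the binary trees (which are exactly the vertices of $R$), the internal tree vertices $I^1 \cup I^2$, the tree roots $B^1 \cup B^2$, the sets $U = U^1 \cup U^2$ and $X = \bigcup_{b \in B} X_b$, and finally the pendant vertices (those attached to the vertices of $U$ and those in $\bigcup_{b} P_X^b$). Checking that this enumeration is complete is the first thing to get right, since the lemma asserts something about \emph{every} vertex of $G$.

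For the classes that should stay below the threshold I would compute as follows. A vertex $r \in R$ is a leaf precisely of the trees rooted at those copies $b^i$ with $(r,b) \in E(H)$; since each $b \in N_H(r)$ contributes the two roots $b^1, b^2$, we get $\deg_G(r) = 2|N_H(r)| \le 2|B|$, which is far below $d$. Every internal binary-tree vertex has at most one parent and two children, hence degree at most $3 \le d$, and every pendant vertex has degree $1$. The only mildly involved bound is for a root $b^i \in B^i$: it has at most two tree-children, exactly $|U^i| = k+1$ edges into $U^i$, and exactly $|X_b| = k+1$ edges into $X_b$, so $\deg_G(b^i) \le 2k+4$; here I would invoke the definition of $d$, noting that $|B| \ge 1$ and $\log_2|R| \ge 0$ give $d \ge 2(k+2) = 2k+4$, and conclude $\deg_G(b^i) \le d$.

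It then remains to show that every vertex of $U \cup X$ strictly exceeds the threshold. A vertex $u \in U^i$ is adjacent to all of $B^i$ together with its $d - l$ private pendants, so $\deg_G(u) = |B| + (d - l) = d + (|B| - l)$, which is larger than $d$ because the standing assumption $l + 3 < |B|$ forces $|B| - l > 0$. A vertex $x \in X_b$ is adjacent to $b^1$, to $b^2$, and to its attached pendant vertices, whose number is chosen so that these edges total $d+1$; hence $\deg_G(x) = d + 1 > d$. Combining the two directions shows that $U \cup X$ is exactly the set of vertices of degree strictly greater than $d$, as claimed.

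The main obstacle I anticipate is not any individual calculation, each of which is routine arithmetic, but the bookkeeping needed to make the case analysis airtight: ensuring the vertex-class list is exhaustive, and in particular verifying the root degree bound correctly aggregates the three gadgets incident to $b^i$ (the binary tree, the complete-bipartite connection to $U^i$, and the connection to $X_b$) simultaneously. Since the ``strictly greater than $d$'' conclusion for $U$ and $X$ is tight, relying respectively on the inequality $l < |B|$ and on the exact number of attached pendants, I would track these constants explicitly rather than hide them in $\calO(\cdot)$ notation.
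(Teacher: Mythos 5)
Your proof is correct and follows essentially the same route as the paper's: both partition $V(G)$ into the classes $R$, $I^1 \cup I^2$, $B^1 \cup B^2$, $U$, $X$, and the pendant vertices, and bound each class's degree directly, with the same tight calculations for $U$ (via $l < |B|$) and for the roots ($2k+4 \le d$). One remark: the reduction as printed attaches $d-2$ pendants to each $x \in X_b$, which would give $\deg_G(x) = d$ exactly, whereas the paper's proof --- like yours --- uses $d-1$ pendants so that $\deg_G(x) = d+1$; your reading matches the intended construction, on which the later claims about $X_b$ also rely.
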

\begin{proof}
Define $I := I^1 \cup I^2$ and $P := P^1 \cup P^2 \cup \left(\bigcup_{b \in B} P^b_X \right)$.
Note that sets $R, I, B^1, B^2, U, X, P$ forms a partition of $V(G)$.
Consider a vertex $r$ in $R$. 
This vertex is a leaf in the binary trees rooted at every vertex in $N_{H^{\star}}(r)$.
Hence, in graph $G$, any vertex in $R$ is adjacent to at most $2|B| = |B^1 \cup B^2|$ many vertices in $I$. 
Every vertex in $I$ is an internal vertex in a binary tree and hence adjacent to at most three vertices. 
Every vertex in $B^1$ is adjacent to at most two vertices from set $I$, all the vertices in set $U^1$ and all the vertices in $X_b$.
Hence, vertex in $B^1$ is adjacent with $2 + |U^1| + |X_b|$ many vertices.
As $|U^1| = |X_b| = k + 1$, every vertex in $B^1$ is adjacent to at most $2k + 4 \le d$ vertices. 
By similar arguments, every vertex in $B^2$ is adjacent to at most $d$ vertices.
As $P$ is a collection of pendant vertices, every vertex in it is adjacent with exactly one vertex.

We have proved that every vertex in $V(G) \setminus (U \cup X)$ has degree at most $d$.
It remains to prove that every vertex in $U \cup X$ has degree at least $d$.
Every vertex in $U^1$ is adjacent with $d - l$ pendant vertices and every vertex in $B^1$.
As $|B| > l$, every vertex in $U^1$ has degree strictly greater than $d$.
By similar arguments, every vertex in $U^2$ has degree strictly greater than $d$.
For every $b$ in $B$, every vertex in $X_b$ is adjacent with $d - 1$ pendant vertices and two vertices in $B^1 \cup B^2$.
Hence, every vertex in $X$ is adjacent with at least $d + 1$ vertices.
This concludes the proof of the lemma.
\end{proof}

In the remaining section, we argue that the reduction is correct.
In Lemma~\ref{lemma:RBDS-MaxDegCont-forward} and Lemma~\ref{lemma:RBDS-MaxDegCont-backward} we prove the forward and backward directions, respectively.

\begin{lemma}\label{lemma:RBDS-MaxDegCont-forward} Suppose the reduction returns $(G, k, d)$ when the input is $(H, R, B, l)$. 
If $(H, R, B, l)$ is a \yes\ instance of \textsc{RBDS} then $(G, k, d)$ is a \yes\ instance of \textsc{MDC}.
\end{lemma}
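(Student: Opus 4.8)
The plan is to convert a red-blue dominating set $R' \subseteq R$ with $|R'| \le l$ and $N(R') = B$ into an explicit witness structure of $G$ certifying that $(G,k,d)$ is a \yes\ instance. First I would fix, for each $b \in B$, a dominator $r_b \in R'$ with $(r_b,b) \in E(H)$; this exists because $R'$ dominates $B$. The intended $(G/F)$-witness structure groups vertices by their chosen dominator: for each $r \in R'$, the big witness set $W_r$ consists of $r$ together with, for every $b$ with $r_b = r$, the two copies $b^1,b^2$ and the internal tree-vertices lying on the unique root-to-leaf path from $b^1$ to $r$ in the binary tree rooted at $b^1$ and from $b^2$ to $r$ in the tree rooted at $b^2$. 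Every remaining vertex of $G$ forms a singleton. Since the leaves of these binary trees are exactly the vertices of $R$ and each such path meets $R$ only in its leaf $r$, the sets $W_r$ are pairwise disjoint, and each $W_r$ is connected because every one of its ``legs'' is a path ending at the common centre $r$. I take $F$ to be the union of the edge sets of these spiders (one spanning tree per $W_r$).

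The first routine check is the budget. Each leg contributes $\log_2|R|$ vertices other than $r$ (the root $b^i$ plus $\log_2|R|-1$ internal vertices), so $|W_r| = 1 + 2\log_2|R|\cdot n_r$, where $n_r$ is the number of $b$'s assigned to $r$; hence $|F| = \sum_{r \in R'}(|W_r|-1) = 2\log_2|R|\sum_{r}n_r = 2|B|\log_2|R| = k$, using $\sum_r n_r = |B|$. Thus $F$ meets the budget exactly.

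The substance of the proof is verifying that $G/F$ has maximum degree at most $d$, which I would do by inspecting each vertex type using Lemma~\ref{lemma:RBDS-Red-Prop}. Singletons cause no trouble: by Observation~\ref{obs:witness-structure-property}(\ref{item:degree-bound}) their degree cannot increase, and in $G$ every vertex outside $U \cup X$ has degree at most $d$. For $u \in U^1$, its neighbours are all of $B^1$ together with its $d-l$ pendants; in $G/F$ the set $B^1$ collapses into the witness sets $\{W_r : r \in R'\}$, of which there are at most $|R'| \le l$, so $\deg(u) \le l + (d-l) = d$, and symmetrically for $U^2$. For $x \in X_b$, its two non-pendant neighbours $b^1,b^2$ lie in the common set $W_{r_b}$ and are thus identified, dropping its degree from $d+1$ to $d$. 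These are precisely the places where the overview's degree constraints at $U$ and at $X$ are realised.

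The main obstacle I expect is bounding the degree of the new vertex $w_r$ obtained by contracting $W_r$, since this is what forces the specific value $d = 2|B|(\log_2|R| + k + 2)$. Its neighbours in $G/F$ are: all of $U^1$ and all of $U^2$ (each of size $k+1$); the sets $X_b$ over the $n_r$ vertices $b$ assigned to $r$ (contributing $n_r(k+1)$); the off-path children along the $2n_r$ tree-legs (at most $2n_r\log_2|R|$, one per non-leaf on-path vertex); and the parents of $r$ in the remaining binary trees in which $r$ is a leaf (at most $2|B|$). Bounding $n_r \le |B|$ and summing gives $\deg(w_r) \le |B|k + 2|B|\log_2|R| + 3|B| + 2k + 2$, and since $|B| > l+3 \ge 3$ a short calculation shows $d - \deg(w_r) \ge (|B|-2)(k+1) \ge 0$, so $\deg(w_r) \le d$. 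Having checked every vertex type, $F$ is a solution of size $k$, so $(G,k,d)$ is a \yes\ instance.
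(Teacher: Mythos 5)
Your proposal is correct and follows essentially the same route as the paper: the same construction of $F$ from the root-to-leaf paths corresponding to the edges of a dominating set of size at most $l$, the same budget count $|F| = 2|B|\log_2|R| = k$, and the same case analysis of degrees in $G/F$ (singletons via Observation~\ref{obs:witness-structure-property}(\ref{item:degree-bound}), vertices of $U$ via the $\le l$ witness sets meeting $B^1$, vertices of $X_b$ via the identification of $b^1$ and $b^2$, and the contracted witness sets as the main case). The only cosmetic difference is in the final arithmetic for the contracted vertex: you bound $n_r \le |B|$ and exhibit the slack $(|B|-2)(k+1) \ge 0$, whereas the paper counts per part $B_j$ and invokes the assumption that no vertex of $R$ dominates all but one vertex of $B$ (i.e.\ $1 + |B_j| \le |B^1 \cup B^2|$) — both yield $\deg \le d$.
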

\begin{proof}
As mentioned before, $(H, R, B, l)$ is a \yes\ instance of \textsc{RBDS} if and only if $(H^{\star}, R, B^1 \cup B^2, l)$ is a \yes\ instance of \textsc{RBDS}.
Let $R’ = \{r_1, r_2, \dots, r_{|R’|}\}$ be a subset of $R$ of size at most $l$ such that $B^1 \cup B^2 = N_{H^{\star}}(R)$. 
Without loss of generality, we assume that $R’$ is a minimal dominating set.
Partition $B^1 \cup B^2$ into $B_1, B_2, \dots, B_{|R’|}$ such that for every $j \in [|R’|]$, $r_j$ dominates vertices in $B_j$ and for every $b$ in $B$, vertices $b^1, b^2$ are in same part.
Since $R’$ is a minimal dominating set, $B_j$ is a non-empty.
For every $j \in [|R’|]$, define $F_j$ as follows: Initialize $F_j$ to an empty set.
For every $b$ in $B_j$, the consider binary tree rooted at $b^1$ (similarly at $b^2$).
Add the root-to-leaf paths in the trees that correspond to edges $(r_j, b^1)$ and $(r_j, b^2)$ to $F_j$.
Let $F$ be the union of all $F_j$s.
Formally, $F = \bigcup_{j \in [|R’|]} F_j$.
Since every path is of length $\log_2|R|$ and any two paths in $F$ are edge-disjoints, $|F| = 2|B| \cdot \log_2|R|$.
Consider the graph $G/F$ and let $G$ is contracted to $G/F$ via function $\psi$. 
Define $B^{\circ} = \{b^{\circ}_1, b^{\circ}_2, \dots, b^{\circ}_{|R’|}\}$, where $b^{\circ}_j$ is the vertex in $G/F$ which is obtained by contracting all edges in $F_j$.
We argue that the maximum degree of vertices in $G/F$ is at most $d$.

Vertices in $V(G/F)$ can be partitioned into $R \setminus R’ = R \setminus V(F)$, $I \setminus V(F)$, $B^{\circ}, U, X$, and $P$.
Here, $I, U, X, P$ are the sets defined in Lemma~\ref{lemma:RBDS-Red-Prop}.
For every vertex in $(R \cup I \cup R) \setminus V(F)$, we have $|W(\psi(v))| = 1$.
By Observation~\ref{obs:witness-structure-property} (\ref{item:degree-bound}), $\deg_{G/F}(\psi(v)) \le \deg_G(v)$.
By Lemma~\ref{lemma:RBDS-Red-Prop}, every vertex in $R \cup I \cup R$ has degree at most $d$.
Hence, we can conclude that $\deg_{G/F}(\psi(v)) \le d$. 
In graph $G/F$, every vertex $v$ in $U$ is adjacent to every vertex in $B^{\circ}$ and $d - l$ pendant vertices. 
As $|B^{\circ}| = |R’| \le l$, every vertex in $U$ is adjacent to at most $d$ vertices.
As $b^1,b^2$ are in same witness set for every $b$ in $B$, every vertex in $X$ is adjacent to one vertex in $B^{\circ}$ and $d - 1$ pendent vertices.
Hence, every vertex in $X$ has degree at most $d$ in $G/F$.

It remains to argue that the degree of $b^{\circ}_j$ is at most $d$ in $G/F$.
For every $j$ in $[|R’|]$, set $V(F_j)$ can be partitioned into the following three parts: $(i) V(F_j) \cap R$, $(ii) V(F_j) \cap I$, and $(iii) B_j = V(F_j) \cap (B^1 \cup B^2)$.
Consider the first part. 
By construction, $V(F_j) \cap R = \{r_j\}$.
As mentioned in the proof of Lemma~\ref{lemma:RBDS-Red-Prop}, in graph $G$, any vertex in $R$ is adjacent to at most $|B^1 \cup B^2|$ vertices in $I$.
Hence, the vertex in $V(F_j) \cap R$ is adjacent to at most $|B^1 \cup B^2| - |B_j|$ vertices outside $V(F_j)$.
Now consider the second part.
Every vertex in $V(F_j) \cap I$ is adjacent to at most one vertex outside $V(F_j)$.
As every $r_j$ to $b^1$ (similarly $r_j$ to $b^2$) path is of length $\log_2|R|$, and any two paths in $F_j$ are edge-disjoints, $|V(F_j) \cap I| = |B_j| \cdot \log_2|R|$.
Hence, vertices in $V(F_j) \cap I$ are adjacent to at most $|B_j| \cdot \log_2|R|$ vertices outside $V(F_j)$.
Now consider the third part.
Every vertex in $V(F_j) \cap (B^1 \cup B^2)$ is adjacent to at most one vertex in $I \setminus V(F_j)$, every vertex in $U$ and every vertex in $X_{j}$.
Here, $X_{j} = \bigcup_{b^1 \in B_j} X_b = \bigcup_{b^2 \in B_j} X_b$.
Hence, vertices in $V(F_j) \cap (B^1 \cup B^2)$ are adjacent to at most $|B_j| + (k + 1) + |B_j|(k + 1)$.
This implies that the number of vertices adjacent to $V(F_j)$ is at most $|B^1 \cup B^2| - |B_j| + |B_j| \cdot \log_2|R| + |B_j| + (k + 1) + |B_j| \cdot (k + 1) \le 2 |B^1 \cup B^2| \cdot (\log_2|R| + k + 2) = d$.
Here, we use the fact that $1 + |B_j| \le |B^1 \cup B^2|$.
This follows from our assumption that in graph $H$, no vertex in $R$ is adjacent to all but one vertex in $B$.
Hence, the degree of any vertex in $B^{\circ}$ is at most $d$ in $G/F$.

This prove that the maximum degree of any vertex in $G/F$ is at most $d$. 
Hence, if $(H, R, B, l)$ is a \yes\ instance, then so is $(G, k, d)$.
\end{proof}

We now prove the backward direction.
As in Section~\ref{sec:lower-bound}, we prove a series of claims about a solution to reduced instance.
We prove the five properties mentioned at the start of this section to prove the following lemma.

\begin{lemma}\label{lemma:RBDS-MaxDegCont-backward} Suppose the reduction returns $(G, k, d)$ when the input is $(H, R, B, l)$. 
If $(G, k, d)$ is a \yes\ instance of \textsc{MDC} then $(H, R, B, l)$ is a \yes\ instance of \textsc{RBDS}.
\end{lemma}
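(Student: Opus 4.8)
The plan is to prove the backward direction of the reduction by establishing, in order, the five structural properties listed at the start of the section, and then extracting a dominating set of size at most $l$ from the solution. Let $F$ be a solution to $(G, k, d)$ and let $\calW$ be the $G/F$-witness structure of $G$, with $G$ contracted via $\psi$. The strategy throughout mirrors Section~\ref{sec:lower-bound}: argue that contracting ``forbidden'' edges would create a vertex of degree exceeding $d$ (invoking Observation~\ref{obs:high-deg-vertex} or Lemma~\ref{lemma:disjoint-nbrs}), thereby forcing the solution to have the claimed shape.

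First I would settle Properties~(\ref{item:no-BU}) and (\ref{item:no-BX}), i.e.\ that $F$ contains no edge between $B^1 \cup B^2$ and $U^1 \cup U^2$ and no edge between $\{b^1, b^2\}$ and $X_b$. By Lemma~\ref{lemma:RBDS-Red-Prop}, the only high-degree vertices are those in $U \cup X$, each adjacent to a large set ($d-l$ or $d-2$) of pendant vertices forming an independent neighborhood. Contracting an edge incident to $B^i$ and some $u \in U^i$ would merge $u$ into a vertex still adjacent to all those pendants plus the remaining vertices of $U^i \cup B^i$, pushing its degree above $d$; a direct degree count after a single contraction (as in Claim~\ref{claim:IS-backward-second}) combined with Observation~\ref{obs:reduced-inst-yes} gives the contradiction. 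The argument for $X_b$ is analogous. Having excluded these edges, and since $I$-vertices and $R$-vertices form the only remaining connective tissue, I would then argue Property~(\ref{item:no-pendent}): every vertex in $U$ and $X$ has degree $>d$, so by Observation~\ref{obs:high-deg-vertex} the solution must merge vertices in their closed neighborhoods; the only available merges route through $B^1 \cup B^2$, forcing $B^1 \cup B^2 \subseteq V(F)$. A careful budget accounting, using $|V(F)| \le 2k = 4|B|\log_2|R|$ and the fact that connecting each $b^i$ into a witness set requires a full root-to-leaf path of length $\log_2|R|$ through the binary tree, should show that each $b^i$ lies in a big witness set whose connectivity is supplied by a single $R$-vertex.

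Next I would establish Property~(\ref{item:twins-same}), that $b^1$ and $b^2$ share a witness set for every $b$: each $x \in X_b$ has degree $d+1$ with its neighbors being $d-1$ pendants plus exactly $\{b^1, b^2\}$, so by Lemma~\ref{lemma:disjoint-nbrs}-style reasoning the only way to reduce every $x \in X_b$'s degree simultaneously (there are $k+1 > |F|$ such vertices, so no solution edge can touch them individually) is to merge $b^1$ and $b^2$ into the same witness set. Finally, Property~(\ref{item:l-parts}) follows from the degree bound at vertices of $U^1$: since each $u \in U^1$ retains $d-l$ pendants and is adjacent to every witness set meeting $B^1$, there can be at most $l$ such witness sets. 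Collecting these, each of the at most $l$ witness sets covering $B^1 \cup B^2$ contains exactly one vertex of $R$ (the connectivity and budget forces this), and that $R$-vertex must be adjacent in $H$ to every $b$ whose copies lie in the set. This yields $R' := V(F) \cap R$ with $|R'| \le l$ and $N_H(R') = B$, so $(H, R, B, l)$ is a \yes\ instance.

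The main obstacle I expect is Property~(\ref{item:no-pendent}) together with the exact budget bookkeeping: showing that each $b^i$ must be \emph{fully} connected into a witness set via a complete root-to-leaf path, and that the budget $k = 2|B|\log_2|R|$ is tight enough to force exactly one $R$-vertex per witness set while still allowing a genuine dominating set. The delicate point is that the binary trees are precisely what inflate the number of contractions needed per vertex of $B^i$ to $\log_2|R|$, so the counting $2|B| \cdot \log_2|R| = k$ must be shown to leave no slack for spurious merges; this is where an off-by-a-constant error would break the reduction, and it deserves the most careful treatment.
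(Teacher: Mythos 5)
Your proposal is correct and follows essentially the same route as the paper: the same five structural claims (Claims~\ref{claim:no-BU}--\ref{claim:twins-same}) established by the same mechanisms (single-contraction degree blow-ups via Observations~\ref{obs:reduced-inst-yes} and~\ref{obs:high-deg-vertex}, and the size-$(k+1)$ gadget sets $U^i$, $X_b$ defeating the budget $k$), followed by extracting the dominating set from the witness sets covering $B^1 \cup B^2$. You also correctly anticipate the paper's key final step: the tight count $k = 2|B|\log_2|R|$ forces $F$ to decompose exactly into edge-disjoint root-to-leaf paths of length $\log_2|R|$, one per vertex of $B^1 \cup B^2$, leaving no slack and hence at most one $R$-vertex per witness set.
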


We prove that if $(G, k, d)$ is a \yes\ instance of \textsc{MDC} then $(H^{\star}, R, B^1 \cup B^2, l)$ is a \yes\ instance of \textsc{RBDS}.
Recall that for vertex subset $X, Y$, we denote the set of all edges with one endpoint in $X$ and another endpoint in $Y$ by $E(X, Y)$.
Let $P^1_U, P^2_U$ and $P_X$ be the collection of pendent vertices adjacent to vertices in $U^1, U^2$, and $X$, respectively. 
By construction, we can partition edges of $G$ into the following four sets: 
$E(B^1 \cup B^2, U^1 \cup U^2)$, $E(B^1 \cup B^2, X)$, $E(U \cup X, P^1_U \cup P^2_U \cup P_X)$, and $E’$.
Here, $E’$ is the collection of edges that are not covered by the first three sets.

Suppose $(G, k, d)$ is a \yes\ instance and $F$ is a solution to $(G, k, d)$.
\begin{claim} \label{claim:no-BU} $F \cap E(B^1 \cup B^2, U^1 \cup U^2) = \emptyset$.
\end{claim}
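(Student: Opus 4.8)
The plan is to argue by contradiction: suppose some edge of $F$ lies in $E(B^1 \cup B^2, U^1 \cup U^2)$, say an edge $(b^1, u)$ with $b^1 \in B^1$ and $u \in U^1$. I would then examine the vertex $w$ obtained by contracting this single edge in $G/\{(b^1, u)\}$ and show its degree becomes prohibitively large. By Lemma~\ref{lemma:RBDS-Red-Prop}, $u$ already has degree greater than $d$ (it is adjacent to all $k+1$ vertices of $B^1$ and to its $d-l$ pendant vertices), so merging $u$ with $b^1$ only adds more neighbours: the contracted vertex $w$ inherits $u$'s neighbours together with $b^1$'s neighbours in $U^1$, in $I$, and all of $X_b$. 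The key quantitative step is to lower-bound $\deg(w)$ in the contracted graph by something of the form $d + (k-1) + 1$ or larger, so that Observation~\ref{obs:high-deg-vertex} (applied with budget $k-1$) forces $(G/\{(b^1,u)\}, k-1, d)$ to be a \no\ instance.

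Concretely, I would note that in $G/\{(b^1, u)\}$ the vertex $w$ is adjacent to all the remaining $k$ vertices of $U^1$ (from $b^1$), to the $d-l$ pendant vertices of $u$, and to every vertex of $B^1 \setminus \{b^1\}$ (from $u$, since $G[B^1 \cup U^1]$ is complete bipartite), which already gives roughly $k + (d - l) + k$ neighbours. Since $k = 2|B|\log_2|R|$ and $d = 2|B|(\log_2|R| + k + 2)$ are both large relative to $l < |B|$, this count comfortably exceeds $d + k$, and hence by Observation~\ref{obs:trivial-no} the instance $(G/\{(b^1,u)\}, k-1, d)$ is already a \no\ instance. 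This contradicts Observation~\ref{obs:reduced-inst-yes}, which guarantees that contracting any solution edge of a \yes\ instance yields a \yes\ instance. The same argument applies verbatim with $B^2, U^2$ in place of $B^1, U^1$, so I would state the $i=2$ case follows by an identical computation.

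The main obstacle I anticipate is getting the degree inequality tight enough and clean enough: one must be careful about which of $u$'s and $b^1$'s neighbours survive the contraction and are not double-counted, and about whether to invoke the simple degree bound of Observation~\ref{obs:trivial-no} or the witness-set argument of Observation~\ref{obs:high-deg-vertex}. Given the generous gap between $d$ and $l$ built into the reduction, I expect the cruder bound via Observation~\ref{obs:trivial-no} to suffice and to be the cleanest route, mirroring the style used in Claim~\ref{claim:IS-backward-second} of the earlier reduction; the only real work is verifying that even after contraction the new vertex retains more than $d + k$ distinct neighbours, which the completeness of the bipartite graph $G[B^i \cup U^i]$ together with the $d-l$ pendants makes straightforward.
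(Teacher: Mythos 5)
Your proposal is correct and follows essentially the same route as the paper's proof: assume an edge $(b^1,u) \in F \cap E(B^1,U^1)$, contract it, lower-bound the degree of the merged vertex $w$ (via $u$'s $d-l$ pendants, the complete bipartite adjacency to $B^1 \setminus \{b^1\}$, and the remaining vertices of $U^1$) so that it reaches the threshold $d+(k-1)+1$, conclude that $(G/\{(b^1,u)\}, k-1, d)$ is a \no\ instance, contradict Observation~\ref{obs:reduced-inst-yes}, and dispatch the $B^2,U^2$ case symmetrically (the cross sets $E(B^1,U^2)$ and $E(B^2,U^1)$ being empty by construction). The only slip is your tally ``$k+(d-l)+k$'', where the second $k$ should be $|B^1\setminus\{b^1\}| = |B|-1$; since the reduction assumes $l+3 < |B|$, the correct count $k+(d-l)+(|B|-1) \ge d+k+3$ still clears the threshold, so the argument stands as the paper's does.
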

\begin{proof}
Assume that there is an edge, say $(b^1, u)$, in $F \cap E(B^1, U^1)$ where vertices $b^1, u$ are in $B^1$ and $U^1$, respectively.
Let $w$ be the new vertex introduced while contracting edge $(b^1, u)$.
In graph $(G/\{(b^1, u)\}$, vertex $w$ is adjacent to every vertex in $B^1 \setminus \{b^1\} \cup X_b \cup (U^1 \setminus \{u\})$ and with all pendent vertices that were adjacent with $u$ in $G$.
Hence, the degree of $w$ in $(G/\{(b^1, u)\})$ is at least $|B| - 1 + |X_b| + |U^1| + d - l > d + (k - 1) + 1$.
By Observation~\ref{obs:high-deg-vertex}, $(G/\{(b^1, u)\}, k - 1, d)$ is a \no\ instance. 
This contradicts Observation~\ref{obs:reduced-inst-yes}.
Hence our assumption is wrong and $F \cap E(B^1, U^1)$ is an empty set.
By similar arguments, $F \cap E(B^2, U^2)$ is an empty set.
By construction, sets $E(B^1, U^2)$ and $E(B^2, U^1)$ are empty. 
This concludes the proof of the claim.
\end{proof}
\begin{claim} \label{claim:no-BX} $F \cap E(B^1 \cup B^2, X) = \emptyset$. 
\end{claim}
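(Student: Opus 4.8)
The plan is to mirror the template of Claim~\ref{claim:no-BU}: assume for contradiction that $F$ contains such an edge, contract it, and exhibit a vertex of the resulting graph whose degree is too large to be repaired with the remaining budget. A vertex of $X$ is adjacent, among $B^1\cup B^2$, only to the two copies $b^1,b^2$ of the vertex $b\in B$ to which it belongs; hence any edge of $E(B^1\cup B^2,X)$ has the form $(b^1,x)$ or $(b^2,x)$ with $x\in X_b$. By the symmetry between the two copies it suffices to rule out $(b^1,x)\in F$ with $b^1\in B^1$ and $x\in X_b$.

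\textbf{The degree computation.} First I would let $w$ be the vertex created by contracting $(b^1,x)$, so that in $G/\{(b^1,x)\}$ the neighbourhood of $w$ is exactly $\bigl(N_G(b^1)\cup N_G(x)\bigr)\setminus\{b^1,x\}$. By Lemma~\ref{lemma:RBDS-Red-Prop} we have $\deg_G(x)\ge d+1$, so $N_G(x)\setminus\{b^1\}$ has size at least $d$; this set consists of $b^2$ together with the pendant vertices attached to $x$. On the other hand $N_G(b^1)$ contains all $k+1$ vertices of $U^1$. The crucial observation is that these two contributions are disjoint: no vertex of $U^1$ is adjacent to $x$ (since $x\in X_b$ sees only $b^1$, $b^2$, and its own pendants), and $U^1\cap\{b^1,x\}=\emptyset$, so $U^1\subseteq N_G(b^1)\setminus\{x\}$ while $U^1\cap\bigl(N_G(x)\setminus\{b^1\}\bigr)=\emptyset$. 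Consequently $\deg_{G/\{(b^1,x)\}}(w)\ge |N_G(x)\setminus\{b^1\}|+|U^1|\ge d+(k+1)=d+k+1$.

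\textbf{Deriving the contradiction.} Since $d+k+1> d+(k-1)+1$, Observation~\ref{obs:trivial-no} applied to the instance $(G/\{(b^1,x)\},k-1,d)$ shows that it is a \no\ instance. But $(b^1,x)\in F$ and $(G,k,d)$ is a \yes\ instance, so Observation~\ref{obs:reduced-inst-yes} forces $(G/\{(b^1,x)\},k-1,d)$ to be a \yes\ instance, a contradiction. The case $(b^2,x)\in F$ is handled identically with the roles of the two copies exchanged, giving $F\cap E(B^1\cup B^2,X)=\emptyset$. The only step requiring care, and the one I would verify explicitly, is the disjointness of the two neighbourhood contributions so that the degrees genuinely add; once that is in place the bound $d+k+1$ sits comfortably above the threshold $d+(k-1)+1$ needed to invoke Observation~\ref{obs:trivial-no}, and the remainder is routine.
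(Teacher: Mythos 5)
Your proof is correct and takes essentially the same route as the paper's own argument: contract the offending edge, lower-bound the degree of the merged vertex $w$ (you count $N_G(x)\setminus\{b^1\}$ plus the disjoint set $U^1$, giving $d+k+1$, while the paper additionally counts $X_b\setminus\{x\}$ for a larger but equally sufficient bound), and conclude via Observation~\ref{obs:reduced-inst-yes} that the contracted instance cannot be a \yes\ instance. As a minor point in your favour, you invoke Observation~\ref{obs:trivial-no} for the ``\no\ instance'' step, which is the right citation -- the paper's proof cites Observation~\ref{obs:high-deg-vertex} there, apparently a slip.
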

\begin{proof}
To prove the claim, it suffices to prove that for any $b$ in $B$, $F$ does not include edge $(b^1, x)$ where $b^1$ is in $B^1$ and $x$ is in $X_b$.
For the sake of contradiction, assume such an edge exists.
Let $w$ be the new vertex introduced while contracting edge $(b^1, x)$.
In graph $(G/\{(b^1, x)\}$, vertex $w$ is adjacent to every vertex in $\{b^2\} \cup (X_b \setminus \{x\}) \cup U^1$ and with all pendent vertices that were adjacent with $x$ in $G$.
Hence, the degree of $w$ in $(G/\{(b^1, u)\})$ is at least $1 + |X_b| + |U^1| + d - 1 > d + (k - 1) + 1$.
By Observation~\ref{obs:high-deg-vertex}, $(G/\{(b^1, x)\}, k - 1, d)$ is a \no\ instance. 
This contradicts Observation~\ref{obs:reduced-inst-yes}.
Hence our assumption is wrong and there is no edge of the form $(b^1, x)$.
As every edge in $E(B^1, X)$ is of the form $(b^1, x)$ for some $b^1$ in $B^1$ and $x$ in $X_b$, we can conclude that $E(B^1, X)$ is an empty set.
By similar arguments, $E(B^3, X)$ is an empty set.
This concludes the proof of the claim.
\end{proof}
\begin{claim} \label{claim:B-in-F} $(B^1 \cup B^2) \subseteq V(F)$.
\end{claim}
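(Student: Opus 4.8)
The plan is to establish the claim by a localized counting argument carried out separately for each $b \in B$, showing that if a root $b^1 \in B^1$ (the case $b^2 \in B^2$ being symmetric) were missing from $V(F)$, then the $k+1$ vertices of the gadget set $X_b$ would force $F$ to use at least $k+1$ edges, contradicting $|F| \le k$. The engine of the argument is Observation~\ref{obs:high-deg-vertex}: by Lemma~\ref{lemma:RBDS-Red-Prop} each $x \in X_b$ has degree at least $d+1$ in $G$, its neighbourhood being $\{b^1, b^2\}$ together with a set $P_x$ of private pendant vertices, so any solution $F$ must place two vertices of $N[x]$ into a common witness set of the $G/F$-witness structure.

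First I would record the structural facts that pin down which pairs of $N[x]$ can be co-witnessed. Since a witness set containing two vertices is big, both of those vertices lie in $V(F)$; so if $b^1 \notin V(F)$, then $b^1$ cannot be one of the co-witnessed vertices. I would then argue that, under $b^1 \notin V(F)$, the co-witnessed pair cannot involve $b^2$ either: by Claim~\ref{claim:no-BX} the edges $(b^1,x)$ and $(b^2,x)$ do not belong to $F$, and every vertex of $P_x$ is adjacent only to $x$; hence the only $F$-edges incident to $x$ are of the form $(x,p)$ with $p \in P_x$, which means the witness set of $x$ is $\{x\}$ together with some of its pendants and can never contain $b^2$. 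Consequently the two co-witnessed vertices of $N[x]$ must both lie in $\{x\} \cup P_x$, which forces $x \in V(F)$ and, in particular, an edge $(x,p_x) \in F$ with $p_x \in P_x$.

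The argument then closes by pigeonhole. Assuming $b^1 \notin V(F)$, every one of the $k+1$ vertices $x \in X_b$ contributes such an edge $(x,p_x)$; since $X_b$ is an independent set and the pendant sets $P_x$ are pairwise disjoint and private, these $k+1$ edges are distinct, giving $|F| \ge k+1$ and contradicting $|F| \le k$. Thus $b^1 \in V(F)$, and the symmetric statement for $B^2$, ranged over all $b \in B$, yields $B^1 \cup B^2 \subseteq V(F)$. I note that the same reasoning in fact shows that $b^1$ and $b^2$ sit in a common witness set, anticipating Property~(\ref{item:twins-same}).

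I expect the main obstacle to be the case analysis of the second paragraph, namely ruling out that $x$ is absorbed into a witness set ``for free'' via one of the roots rather than through a dedicated pendant edge. This is precisely where Claim~\ref{claim:no-BX} (no solution edge between a root and its $X_b$) and the leaf structure of $P_x$ must be combined: together they force every resolution of the high degree of $x$ to spend one of the scarce $k$ edges, which is what makes the counting tight.
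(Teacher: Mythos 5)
Your proof is correct and follows essentially the same route as the paper's: assuming some $b^1 \in (B^1\cup B^2)\setminus V(F)$, you apply Observation~\ref{obs:high-deg-vertex} to each of the $k+1$ vertices of $X_b$, use Claim~\ref{claim:no-BX} to rule out the roots from the merged pair, and conclude that each $x \in X_b$ forces a private pendant edge into $F$, contradicting $|F|\le k$. Your second paragraph merely spells out in more detail than the paper why $b^2$ cannot participate in the merged pair, but the underlying argument is identical.
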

\begin{proof} 
Assume for the sake of contradiction that there is $b^1$ in $(B^1 \cup B^2) \setminus V(F)$.
Recall that every vertex $x$ in $X_b$, the degree of $x$ is $d + 1$.
By Observation~\ref{obs:high-deg-vertex}, there are at least two vertex in $N[x]$ which are in $V(F)$.
As $b^1$ is not incident to any solution edge, $b^1$ is not in $N[x] \cap V(F)$.
Vertex $b^2$ can be one of the vertices in $N[x] \cap V(F)$.
By Claim~\ref{claim:no-BX}, edge $(b^2, x)$ is not in any solution.
Hence, there is at least one vertex $N[x] \cap V(F)$ which is a pendent vertex adjacent to $x$ or the vertex $x$ itself.
This implies for every $x$ in $X_b$, there is a solution edge incident to pendent vertex adjacent to $x$.
Hence, there are at least $|X_b| = k + 1$ edges in $F$.
This contradicts the fact that $|F|$ is at most $k$.
Hence our assumption is wrong and $B^1 \cup B^2 \subseteq V(F)$.
\end{proof}
\begin{claim}\label{claim:l-parts} There are at most $l$ witness sets in the $G/F$-witness structure of $G$ that contains vertices in $B^1$ (similarly in $B^2$).
\end{claim}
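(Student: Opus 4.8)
The plan is to bound $t$, the number of witness sets of the $G/F$-witness structure that meet $B^1$, by exhibiting a single vertex of $U^1$ that survives contraction as an isolated witness set and is forced to be adjacent to all $t$ of these witness sets as well as to all of its private pendant vertices. Since such a vertex sees $t + (d-l)$ distinct neighbours in $G/F$ and $G/F$ has maximum degree at most $d$, this immediately yields $t \le l$.

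The first step is to locate a vertex $u^{\star} \in U^1$ with $u^{\star} \notin V(F)$. The only neighbours of a vertex $u \in U^1$ in $G$ are the $|B|$ vertices of $B^1$ and its $d-l$ private pendant vertices. By Claim~\ref{claim:no-BU}, $F$ contains no edge between $B^1$ and $U^1$, so every $F$-edge incident to $u$ must run to one of $u$'s pendant vertices; as a pendant vertex of $u$ has $u$ as its unique neighbour, these edges are private to $u$ and hence pairwise distinct across different vertices of $U^1$. Consequently each $u \in U^1 \cap V(F)$ charges at least one distinct edge of $F$, so $|U^1 \cap V(F)| \le |F| \le k$. Since $|U^1| = k + 1$, at least one vertex $u^{\star} \in U^1$ lies outside $V(F)$.

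The second step is the degree count at $u^{\star}$. As $u^{\star} \notin V(F)$, it forms a singleton witness set, and each of its $d-l$ pendant vertices also lies outside $V(F)$ (its only incident edge is the one to $u^{\star}$), so every such pendant is itself a singleton witness set and contributes a distinct neighbour to $\psi(u^{\star})$ in $G/F$. Moreover $u^{\star}$ is adjacent in $G$ to every vertex of $B^1$; since the $t$ witness sets meeting $B^1$ are all big (by Claim~\ref{claim:B-in-F}) and none of them contains $u^{\star}$, the vertex $\psi(u^{\star})$ is adjacent to each of these $t$ witness sets. These $t$ neighbours are distinct from the $d-l$ pendant neighbours, so $\deg_{G/F}(\psi(u^{\star})) \ge t + (d-l)$. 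Because $F$ is a solution, $\deg_{G/F}(\psi(u^{\star})) \le d$, and therefore $t \le l$. The symmetric argument using $U^2$ (again of size $k+1$) bounds the number of witness sets meeting $B^2$ by $l$.

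I expect the main obstacle to be exactly the sharpening from the easy bound to the required one: a direct count of the neighbours of $\psi(u)$ for an arbitrary $u \in U^1$ only gives $t \le d$, because some of $u$'s pendant vertices might be absorbed into $u$'s witness set and thus stop contributing to its degree. The decisive point is that setting $|U^1| = k+1$ guarantees, by a pigeonhole argument against the budget $k$, an untouched vertex $u^{\star}$ none of whose pendant vertices can collapse, which is what turns $t \le d$ into $t \le l$.
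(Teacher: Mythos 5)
Your proof is correct and takes essentially the same approach as the paper's: both hinge on Claim~\ref{claim:no-BU} together with the fact that $|U^1| = k+1$ exceeds the budget $k$, so that degree relief for a $U^1$-vertex can only come from contracting its private pendant edges. The only difference is organizational: the paper assumes $|\psi(B^1)| = l + l'$ with $l' \ge 1$ and charges at least $l'$ pendant contractions to \emph{every} vertex of $U^1$ (yielding $(k+1)l' > k$ edges in $F$, a contradiction), whereas you pigeonhole a single untouched vertex $u^{\star} \in U^1 \setminus V(F)$ and read off $t \le l$ directly from the degree of $\psi(u^{\star})$ in $G/F$.
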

\begin{proof}
Recall that for a subset $Z \subseteq V(G)$, we define $\psi(Z) := \{\psi(z)\ |\ z \in X \}$.
To prove the claim, it suffices to prove that 
the size of $\psi(B^1)$ is at most $l$.
Assume there is an integer $l’ \ge 1$ such that $|\psi(B^1)| = l + l’$.
For every vertex $u \in U$ in graph $G$, vertex $\psi(u)$ is adjacent to every vertex in $\psi(B^1)$ in graph $G’$.
The degree of $\psi(u)$ in $G/F$ is at most $d$.
By Claim~\ref{claim:no-BU}, $F$ does not contain an edge in $E(B^1, U^1)$.
Hence, $F$ must contain at least $l’$ many edges incident to $u$ and pendent vertices adjacent to it.
As this is true for every vertex in $U^1$, there are $|U^1| \cdot l’$ many edges in $F$ that are incident to pendant vertices.
As $|U^1| = k + 1$ and $l’ \ge 1$, this contradicts the fact that $|F| \le k$.
Hence, our assumption is wrong and $|\psi(B^1)| \le l$.
This implies the $G/F$-witness structure of $G$ partitions all vertices in $B^1$ into at most $l$ witness sets.
By similar arguments, we can prove that the $G/F$-witness structure of $G$ partitions all vertices in $B^2$ into at most $l$ witness sets.
\end{proof}
\begin{claim} \label{claim:twins-same} For every $b$ in $B$, $\psi(b^1) = \psi(b^2)$.
\end{claim}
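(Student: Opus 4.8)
The plan is to argue by contradiction. I fix a vertex $b$ in $B$ and suppose that $\psi(b^1) \neq \psi(b^2)$, i.e.\ $b^1$ and $b^2$ lie in different witness sets of the $G/F$-witness structure of $G$. To derive a contradiction I exploit the gadget $X_b$: by Lemma~\ref{lemma:RBDS-Red-Prop} every vertex $x$ in $X_b$ has degree at least $d + 1$, so by Observation~\ref{obs:high-deg-vertex} the solution $F$ must merge at least two vertices of $N[x]$ into a common witness set. Here $N[x] = \{x, b^1, b^2\} \cup P(x)$, where $P(x)$ denotes the pendant neighbors of $x$ (which are private to $x$), and two distinct vertices lie in a common witness set precisely when they belong to the same connected component of the subgraph $(V(F), F)$.

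The crux is a structural observation: $x$ can never share a witness set with $b^1$ or with $b^2$. Indeed, by Claim~\ref{claim:no-BX} neither $(x, b^1)$ nor $(x, b^2)$ belongs to $F$, so the only edges of $F$ incident to $x$ are of the form $(x, p)$ with $p \in P(x)$. Since each such pendant $p$ has degree one in $G$, the connected component of $(V(F), F)$ containing $x$ consists solely of $x$ together with some of its pendants, and hence contains neither $b^1$ nor $b^2$. The same reasoning rules out a common witness set for $b^1$ (or $b^2$) and any pendant of $x$. Consequently, the only two ways to satisfy Observation~\ref{obs:high-deg-vertex} at $x$ are: (a) $b^1$ and $b^2$ lie in the same witness set, or (b) at least one pendant edge $(x, p)$ with $p \in P(x)$ lies in $F$, merging $x$ with a pendant.

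Under the contradiction hypothesis, option (a) is unavailable for this particular $b$, so option (b) must hold for every one of the $|X_b| = k + 1$ vertices $x \in X_b$. Because distinct vertices of $X_b$ have disjoint pendant sets, the pendant edges forced into $F$ are pairwise distinct, yielding at least $k + 1$ edges in $F$ and contradicting $|F| \le k$. Hence $\psi(b^1) = \psi(b^2)$, and since $b$ was arbitrary the claim follows.

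I expect the structural observation of the second paragraph --- that $x$ cannot be co-witnessed with either copy of $b$ --- to be the main obstacle, as it is precisely what converts the purely \emph{local} degree requirement at each $x$ into a \emph{global} demand on the budget. It relies essentially on Claim~\ref{claim:no-BX} to keep the edges $(x, b^i)$ out of $F$, and on the pendants being degree-one so that the component of $x$ in $(V(F), F)$ cannot reach $b^1$ or $b^2$; once this is established, the final counting against the budget $k$ is routine.
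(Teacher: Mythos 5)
Your proof is correct and takes essentially the same approach as the paper: apply Observation~\ref{obs:high-deg-vertex} to each of the $k+1$ vertices of $X_b$, use Claim~\ref{claim:no-BX} to rule out merges involving $b^1$ and $b^2$, and count the forced pendant edges against the budget $|F| \le k$. Your explicit structural observation---that the component of $x$ in $(V(F),F)$ is confined to $x$ and its pendants, so $x$ cannot be co-witnessed with $b^1$ or $b^2$ even indirectly---is a step the paper leaves implicit, so your write-up is if anything more complete.
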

\begin{proof}
Assume for the sake of contradiction that there is $b$ in $B$, such that $\psi(b^1) \neq \psi(b^2)$.
In other words, $b^1$, $b^2$ are in two different witness sets in $G/F$-witness structure of $G$.
Recall that every vertex $x$ in $X_b$, the degree of $x$ is $d + 1$.
By Observation~\ref{obs:high-deg-vertex}, there are at least two vertex in $N[x]$ which are in sane witness set. 
By Claim~\ref{claim:no-BX}, no edge in $E(B^1 \cup B^2, X)$ is a part of any solution.
Hence, for every $x$ in $X_b$, there is a solution edge incident to pendent vertex adjacent to $x$.
As $|X_b| = k + 1$, this contradicts the fact that $|F|$ is at most $k$.
Hence our assumption is wrong and for every $b$ in $B$, $\psi(b^1) = \psi(b^2)$.
\end{proof}

We are now able to present a proof of Lemma~\ref{lemma:RBDS-MaxDegCont-backward}.
In the proof, we crucially use the fact that $G[R \cup I \cup B^1 \cup B^2]$ is a union of binary trees rooted at vertices in $B^1 \cup B^2$.
Moreover, any two of these binary trees are edge disjoint.

\begin{proof}(of Lemma~\ref{lemma:RBDS-MaxDegCont-backward})
We prove that $(H^{\star}, R, B^1 \cup B^2, l)$ is a \yes\ instance of \textsc{RBDS}.
By Claim~\ref{claim:B-in-F} and \ref{claim:l-parts}, there is $l’ (\le l)$ witness sets, say $W_1, W_2, \dots, W_{l’}$, in the $G/F$-witness structure of $G$ such that their union contains $B^1 \cup B^2$. 
For $j \in [l’]$, define $B^{\circ}_j = (B^1 \cup B^2) \cap W_{j}$.
We divide proof of the lemma in two parts.
First, we prove that for every $b_{\alpha}$ in $B^{\circ}_j$, there is vertex $r$ in $W_j$ such that $r$ is adjacent with $b_{\alpha}$ in $H^{\star}$.
This implies that in graph $H^{\star}$, set $\bigcup_{j \in [l’]}(R \cap W_j)$ dominates $B^1 \cup B^2$.
In the second part, we prove that there is at most one vertex in $R \cap W_j$.
This proves that the dominating set is of size $l’ \le l$.

Define $\widetilde{E} := E(B^1 \cup B^2, U^1 \cup U^2) \cup E(B^1 \cup B^2, X)$.
By Claim~\ref{claim:no-BU} and \ref{claim:no-BX}, solution $F$ does not contain any edge in $\widetilde{E}$.
For any two vertices $b_{\alpha}, b_{\beta} \in B^1 \cup B^2$, any $b_{\alpha}$ to $b_{\beta}$ path in $G - \widetilde{E}$ contains a vertex in $N_{H^{\star}}(b_{\alpha})$ and a vertex in $N_{H^{\star}}(b_{\alpha})$.
Fix an arbitrary vertex $b^1$ in $B^1$.
By Claim~\ref{claim:twins-same}, if $b^{1}$ is in $W_j$ then $b^{2}$ is also in $W_j$.
Hence, there are at least two vertices in $B^{\circ}_j$ which is a subset of $W_j$. 
As $W_j$ is connected set in $G - \widetilde{E}$, it contains at least one vertex each from $N_{H^{\star}}(b_{\alpha})$ and $N_{H^{\star}}(b_{\beta})$.
Hence, for any $b_{\alpha}$ in $B^1 \cup B^2$, if $b_{\alpha}$ is in $W_j$ then there exists at least one vertex in $W_j$ which is adjacent to $b_{\alpha}$ in $H^{\star}$.
This implies that in graph $H^{\star}$, set $\bigcup_{j \in [l’]}(R \cap W_j)$ dominates $B^1 \cup B^2$.

To prove the second part, we need to argue that we can partition $F$ into $|B^1 \cup B^2|$ parts, each corresponding to a vertex in $B^1 \cup B^2$.
For every vertex $b_{\alpha}$ in $B^1 \cup B^2$, let $\rho(b_{\alpha})$ be the vertex in $R$ such that $(i)$ $b_{\alpha}$ and $\rho(b_{\alpha})$ are in same witness set, and $(ii)$ $\rho(b_{\alpha})$ is the nearest vertex in the witness set that satisfy the first property. 
The arguments in previous the paragraph ensure that such vertex exits. 
Let $P_{\alpha}$ be a path from $b_{\alpha}$ to $\rho(b_{\alpha})$ such that edges in $P_{\alpha}$ are in $F$.
As $b_{\alpha}, \rho(b_{\alpha})$ are in $W_j$, which is a witness set, such a path exists.
Because of the second property, $P_{\alpha}$ is a root-to-leaf path in the binary tree rooted at $b_{\alpha}$.
This implies that the length $P_{\alpha}$ is $\log_2|R|$ and for two different vertices $b_{\alpha}, b_{\beta}$ in $B^1 \cup B^2$, paths $P_{\alpha}$ and $P_{\beta}$ are edge-disjoint.
As $|F| = |B^1 \cup B^2| \cdot \log_2|R|$, we can conclude that $\{E(P_{\alpha})\ |\ b_{\alpha} \in B_1 \cup B_2\}$ is a partition of $F$.

We now prove the second part. 
Assume that there is $W_j$ such that $j$, set $R \cap W_j$ contains two vertices, say $r_1, r_2$.
As $r_1, r_2$ are in same witness sets, there is a $r_1$ to $r_2$ path whose edges are contained in $F$.
By the construction of $H^{\star}$ and the fact that $F \cap \widetilde{E} = \emptyset$, there is vertex $b_{\alpha}$ in $W_j$ such that $r_1, r_2$ are leaves in the binary tree rooted at $b_{\alpha}$.
Without loss of generality, let $r_1 = \rho(b_{\alpha})$.
As $r_2$ is a leaf, there is at least one edge in $r_1$ to $r_2$ path which is not contained in path $P_{\alpha}$. 
This edge is not a part of $P_{\beta}$ for any $b_{\beta} \neq b_{\alpha} \in B^{1} \cup B^2$ as binary trees rooted at vertices in $B^1 \cup B^2$ are edge disjoints.
This implies there is an edge in $F$ which is not in path $P_{\alpha}$ for any $b_{\alpha}$ in $B^1 \cup B^2$.
This contradictions the fact that $\{E(P_{\alpha})\ |\ b_{\alpha} \in B_1 \cup B_2\}$ is a partition of $F$.
Hence our assumption is wrong and for $j \in [l’]$, set $R \cap W_j$ contains at most one vertex. 

This implies that in graph $H^{\star}$, set $\bigcup_{j \in [l’]}(R \cap W_j)$ is of size at most $l$ and dominates $B^1 \cup B^2$.
Hence $(H^{\star}, R, B^1 \cup B^2, l)$ is a \yes\ instance. 
As mentioned before, it is easy to see that $(H, R, B, l)$ is a \yes\ if and only if $(H^{\star}, R, B^1 \cup B^2, l)$ is a \yes\ instance.
This concludes the proof of the lemma. 
\end{proof}

We are now able to present a proof of Theorem~\ref{thm:no-poly-kernel-MDC} using Corollary~\ref{corr:no-poly-kernel-RBDS}, Lemma~\ref{lemma:RBDS-MaxDegCont-forward}, and Lemma~\ref{lemma:RBDS-MaxDegCont-backward}.

\begin{proof}(of Theorem~\ref{thm:no-poly-kernel-MDC})
Assume, for the sake of contradiction, \textsc{MDC}, parameterized by $k + d$ admits a polynomial-sized compression.
This implies there is an algorithm, say $\calA$, that given any instance $(G, k, d)$ of \textsc{MDC} runs in time polynomial time and produces equivalent instance $(I’, k’)$ of parameterized problem $\Pi$ such that 
$(i)$ $(G, k, d)$ is a \yes\ instance of \textsc{MDC} if and only if $(I’, k’)$ is a \yes\ instance of $\Pi$, and 
$(ii)$ $|I’| + k’ \le (k + d)^c$, where $c$ is a fixed constant.
We construct a compression algorithm for \textsc{RBDS} using Algorithm $\calA$ as a subroutine.

Consider Algorithm $\calB$ that given an instance $(H, R, B, l)$ of \textsc{RBDS} constructs an instance $(G, k, d)$ of \textsc{MDC} as described in the reduction.
Then, it calls Algorithm $\calA$, as subroutine, on instance $(G, k, d)$.
Let $(I’, k’)$ be the instance of $\Pi$ returned by Algorithm $\calA$.
The algorithm returns $(I’, k’)$ as a compressed instance.

The correctness of the algorithm $\calA$, Lemma~\ref{lemma:RBDS-MaxDegCont-forward}, and Lemma~\ref{lemma:RBDS-MaxDegCont-forward} implies that $(H, R, B, l)$ is a \yes\ instance of \textsc{RBDS} if and only if $(I’, k’)$ is a \yes\ instance of $\Pi$.
Since $(G, k, d)$ is the instance of \textsc{MDC} constructed by the reduction when input was $(H, R, B, l)$, we have $k = 2|B| \cdot \log_2|R|$ and $d = 2 |B| \cdot (\log_2|R| + k + 2)$.
As, $|I’| + k’ \le (k + d)^c$, we have $|I’| + k’ \le (|B| + \log_2|R|)^{c_0}$, where $c_0$ is a fixed constant.
By the description of the reduction, it is easy to see that given instance $(H, R, B, l)$, Algorithm $\calB$ computes instance $(G, k, d)$ in time polynomial in $|V(H)|$.
Hence, the total running time of the algorithm is polynomial in the size of the input.

This implies \textsc{RBDS}, when parameterized by $|B| + \log_2|R|$, admits a polynomial compression. But this contradicts Corollary~\ref{corr:no-poly-kernel-RBDS}.
Hence, our assumption was wrong, which concludes the proof.
\end{proof}
\section{Conclusion}
\label{sec:conclusion}
In this article, we studied \textsc{Maximum Degree Contraction} problem.
We prove that a simple brute force algorithm for this problem is optimal under \ETH.
This lower bound also implies that the known \FPT\ algorithm with running time $(d + k)^{k} \cdot n^{\calO(1)}$ is also optimal under the same hypothesis.
We compliment this result by presenting another \FPT\ algorithm with running time $2^{\calO(dk)} \cdot n^{\calO(1)}$.
While these two \FPT\ algorithms are incomparable, our algorithm runs faster for smaller values of $d$, for which the problem still remains \NP-\Hard.
We also prove that unless $\NP \subseteq \CONP/poly$, the problem does not admit a polynomial compression when parameterized by $k + d$.

Most of the $\calH$-\textsc{Contraction} problems do not admit a polynomial kernel under the same complexity conjecture.
For some graph classes like trees, cactus, cliques, splits graphs, such negative results have been complimented by establishing a \emph{lossy kernel} of polynomial size for these problems.
There are also examples like \textsc{Chordal Contraction}, \textsc{s-Club Contraction} (for $s \ge 2$) for which we know that lossy kernel of polynomial size do not exist.
We conclude this article with following open question:
Does \textsc{Maximum Degree Contraction} admit a lossy kernel of polynomial size?



\bibliography{references}

\end{document}